\newtheorem{theorem}{Theorem}
\newtheorem{lemma}[theorem]{Lemma}
\newtheorem{corollary}[theorem]{Corollary}
\newtheorem{proposition}[theorem]{Proposition}
\newtheorem{rem}[theorem]{\bf Remark}
\newtheorem*{remnn}{\bf Remark}
\newtheorem*{notnn}{\bf Notation}
\DeclareMathOperator{\aux}{aux}
\DeclareMathOperator{\supp}{supp}
\DeclareMathOperator{\loc}{loc}
\DeclareMathOperator{\dist}{dist}
\DeclareMathOperator{\dom}{dom}
\DeclareMathOperator{\sing}{sing}
\DeclareMathOperator{\link}{Link}
\DeclareMathOperator{\Sf}{sf}
\DeclareMathOperator{\per}{per}
\DeclareMathOperator{\Wr}{Wr}
\DeclareMathOperator{\clos}{clos}
\DeclareMathOperator{\model}{mod}
\DeclareMathOperator{\full}{full}
\DeclareMathOperator{\spann}{span}
\DeclareMathOperator{\ans}{ans}
\renewcommand{\phi}{\varphi}
\newcommand{\eps}{\varepsilon}
\newcommand{\norm}[1]{\lVert #1 \rVert}
\newcommand{\cip}[2]{\langle #1, #2 \rangle}
\newcommand{\wt}[1]{\widetilde{#1}}
\newcommand{\nn}{\nonumber}
\newcommand{\D}{\mathbb{D}}
\newcommand{\R}{\mathbb{R}}
\newcommand{\C}{\mathbb{C}}
\newcommand{\Z}{\mathbb{Z}}
\newcommand{\T}{\mathbb{T}}
\newcommand{\N}{\mathbb{N}}
\renewcommand{\S}{\mathbb{S}}
\newcommand{\cA}{\mathcal{A}}
\newcommand{\cB}{\mathcal{B}}
\newcommand{\cD}{\mathcal{D}}
\newcommand{\cG}{\mathcal{G}}
\newcommand{\cH}{\mathcal{H}}
\newcommand{\cI}{\mathcal{I}}
\newcommand{\cN}{\mathcal{N}}
\newcommand{\cT}{\mathcal{T}}
\newcommand{\rT}{\mathrm{T}}
\renewcommand{\d}{\mathrm{d}}
\newcommand{\sD}{\mathscr{D}}
\newcommand{\sT}{\mathscr{T}}
\newcommand{\uS}{\underline{S}}
\newcommand{\Sd}{\mathcal{S}_{\mathrm{disc}}}
\newcommand{\bA}{\boldsymbol{A}}
\newcommand{\bB}{\boldsymbol{B}}
\newcommand{\bN}{\boldsymbol{N}}
\newcommand{\bS}{\boldsymbol{S}}
\newcommand{\bT}{\boldsymbol{T}}
\newcommand{\bx}{\boldsymbol{x}}
\newcommand{\bv}{\boldsymbol{v}}
\newcommand{\bp}{\boldsymbol{p}}
\newcommand{\bt}{\boldsymbol{t}}
\newcommand{\bg}{\gamma}
\newcommand{\Tf}{[0,1]_{\per}}
\newcommand{\bap}{a_{\parallel}}
\newcommand{\dint}{\ensuremath{\displaystyle\int}}
\newcommand{\diint}{\ensuremath{\displaystyle\iint}}
\begin{document}

\title{Spectral flow of Dirac operators with magnetic cable knot}

\author[J. Sok]{J\'er\'emy Sok}
\address[J. Sok]{University of Basel, Switzerland} 
\email{jeremyvithya.sok@unibas.ch}

\author[J. P. Solovej]{Jan Philip Solovej}
\address[J. P. Solovej]{University of Copenhagen, Denmark} 
\email{solovej@math.ku.dk}



\date{}

\thanks{The authors acknowledge support from the ERC grant Nr.\ 321029 ``The mathematics of the structure of matter" and
VILLUM FONDEN through the QMATH Centre of Excellence grant. nr.\ 10059.
J.S. also acknowledges support from the Swiss National Science Foundation (SNF) through Grant No. 200021-169646. This work was completed at
the Mittag-Leffler Institute during the program \emph{Spectral Methods in Mathematical Physics}, J.S. is grateful to this institute for its kind hospitality.
}

\begin{abstract}
	We study the spectral flow of Dirac operators with magnetic links on $\S^3$. These are generalisations of Aharonov-Bohm solenoids
	where the magnetic fields contain finitely many field lines coinciding with the components of a link, the flux of each exhibiting
	the same $2\pi$-periodicity as A-B solenoids. We study the spectral flow of the loop obtained as tuning the flux from $0$
	to $2\pi$ in the case of only one field line: we relate the spectral flows obtained for one given knot and its cable knots, and obtain
	that torus knots have trivial spectral flow. The operators are studied in their Coulomb gauge in $\R^3$ (seen as a chart of $\S^3$
	through the stereographic projection), which is simply given by the Biot and Savart formula.
\end{abstract}

\maketitle

\tableofcontents

\section{Introduction and main results}

In a series of three papers \cite{dirac_s3_paper1,dirac_s3_paper2,dirac_s3_paper3}, we introduced Dirac operators
with singular magnetic fields supported on links (both in $\R^3$ and $\S^3$) and studied their spectral properties, in particular their
kernels. These magnetic fields only have finitely many field lines which form a link, that is a one-dimensional manifold 
diffeomorphic to finitely many copies of $\S^1$, and as such are called ``magnetic links". They are the generalizations of the celebrated Aharonov-Bohm (A-B)
solenoids which are magnetic fields supported on straight lines in $\R^3$ (note that we can conformally map straight lines to circles). 
One could discuss what the appropriate definition of singular Dirac operators is for these singular magnetic fields. Our definition is natural 
as the singular operators defined in \cite{dirac_s3_paper1} are limits (in the norm-resolvent sense) of Dirac operators with smooth magnetic fields, see \cite{dirac_s3_paper3}.

As for A-B solenoids, the flux carried by each field line of a magnetic link exhibits a $2\pi$-periodicity: if the flux is a multiple of $2\pi$,
the corresponding Dirac operator is -- depending on the chosen gauge -- equal or unitarily equivalent to the one with this field line removed. 
So by tuning a flux from $0$ to $2\pi$ while keeping fixed the geometry of the link, we obtain non-trivial loops of Dirac operators. 
The paper \cite{dirac_s3_paper2} was devoted to the study of the \emph{spectral flow} of such loops, that is the number of eigenvalues crossing $0$ along the loop, counted algebraically. 
On $\S^3$, the Dirac operators have been shown to be self-adjoint and to have discrete spectrum \cite{dirac_s3_paper1}. 
For a link, the spectral flow is generically non-zero, and its value varies with the linking numbers of its connected components and on their so-called writhe, see \eqref{eq:def_writhe}-\eqref{eq:link}. 
Also there exists choice of fluxes, for which the kernel of the Dirac operator is non-zero.

In this paper, we give a partial answer to a question left open in \cite{dirac_s3_paper2}. 
Consider a smooth closed curve, say $\gamma\subset \R^3$, and the corresponding loop of magnetic knots $(2\pi\alpha [\gamma])_{\alpha\in \Tf}$, 
where $\Tf$ denotes the segment $[0,1]$ with $0$ and $1$ identified. We know \cite{dirac_s3_paper2}  that provided the writhe $\Wr(\gamma)$ 
satisfies $\tfrac{1}{2}\Wr(\gamma)\notin\tfrac{1}{2}+\Z$, then the spectral flow  of the corresponding loop of Dirac operators is well-defined. 
It will be written $\Sf(\gamma)$ throughout this paper. If $\gamma$ is isotopic to a circle,
the spectral flow is equal to $\lfloor\tfrac{1}{2}(1-\Wr(\gamma)) \rfloor$, else its value is unknown. 
We only know how the spectral flow changes under the deformation of the curve: it jumps by $\pm1$ across the singular set $\tfrac{1}{2}(1-\Wr(\gamma))\in\Z$. 
By defining the spectral flow of an isotopy class of closed curves in $\R^3$ to be the one of its elements with null writhe, we trivially obtain a knot invariant. 
In this paper we investigate the spectral flow for other knots, in particular for torus knots. More precisely, we study the relation between the spectral flow of a given knot and those of its cable knots
(see Section~\ref{sec:cable_knot}). We obtain a simple formula in Theorem~\ref{thm:main}: the spectral flow of a cable knot of $\gamma$ is a multiple of
that of $\gamma$. This gives $0$ for all torus knots. We conjecture the spectral flow to be the trivial invariant.

\medskip
\paragraph{\textbf{Overview}}

The spectral flow was first introduced by Atiyah, Patodi and Singer in
\cites{ASP1,ASP3} to obtain an index Theorem for elliptic operators on
vector bundles over compact manifolds with boundary. 
The general formula for the spectral flow of smooth open paths $(D_t)_{t\in
  [0,1]}$ of self-adjoint elliptic operators involves the eta
invariant of the endpoints $\eta_{D_0}(0)$ and $\eta_{D_1}(0)$, and
the dimension of their respective kernels.  We refrain from giving an
overview of the (vast) literature; for more details on the index
theorems and the eta invariant we refer the reader to e.g.
\cites{AS68,ASP1,ASP2,ASP3,getzler,Melrose,Grubb05}.  We would like to
emphasize that the computation of the eta invariant is a
difficult problem: we refer the reader to the survey \cite{Goette12}.

Similarly, it is impossible to give a complete overview on the works
devoted to the spectral flow, and we only mention
results which are relevant for the present discussion.  In
\cite{Philips_spectral_flow}, the author gives an equivalent
definition of the spectral flow using the functional calculus. It is
shown that the spectral flow of a continuous path $(D_t)_{t\in [0,1]}$
of bounded, self-adjoint, Fredholm operators on a separable Hilbert
space corresponds (at least in a small interval $[t_1,t_2]$) to the
difference
$$
	\Sf\left[(D_t)_{t\in [t_1,t_2]}\right] = 
        \dim \mathds{1}_{[0,a]}\big(D_{t_2}\big)-\dim\mathds{1}_{[0,a]}\big(D_{t_1}\big),
$$ 
where $a>0$ is a given spectral level for which
$(\mathds{1}_{[-a,a]}(D_t))_{t\in [t_1,t_2]}$ is continuous. The
spectral flow of the whole path is obtained by subdividing $[0,1]$
into a finite family of intervals $[t_{i},t_{i+1}]$ for which we can
apply the above formula and adding all the contributions.  The concept
of spectral flow was then extended to unbounded, self-adjoint,
Fredholm operators in e.g. \cite{BBLP05}.  The restriction of the
spectral flow to loops constitutes a homotopy invariant in 
the gap topology, the topology of the norm-resolvent convergence.
As done in e.g. \cite{Wahl08}, the spectral flow 
can be extended to weaker topologies. 
Other topologies are considered in the litterature like the Riesz topology,
but these are stronger than the gap topology, we refer the reader to
\cites{BBLP05, Nicolaescu07} and references therein for more details.

The paths of Dirac operators with magnetic links that we studied in \cite{dirac_s3_paper2} are not 
continuous in the gap topology, due to the possible collapse of eigenfunctions
(by concentration of their mass on a set of Lebesgue measure $0$).  
As long as this only happens away from the spectral level zero it does
not affect Philipps' definition of the spectral flow \cite{Philips_spectral_flow}:
it is enough to control the eigenvalues in a spectral window around zero. This remark is
the starting point of \cite{Wahl08} to extend the spectral flow.
Here, the condition $\tfrac{1}{2}\Wr(\gamma)\notin \tfrac{1}{2}+\Z$ is precisely related to the collapse of eigenfunctions at the spectral level $0$.

In some cases, the spectral flow admits integral representations, see \emph{e.g.} \cite{GetzlerOdd,Sf_integral,GorLes}, but their evaluations 
can be difficult. The computation of the spectral flow for (two-dimensional) Dirac operators with local elliptic boundary condition is the purpose of \cite{Prokh}, and 
the case of several A-B solenoids is studied in more details in \cite{KatNaz}. One A-B solenoid with flux $\Phi$, say at $0\in\C$,
is modelled by a magnetic potential $\tfrac{\Phi}{2\pi}\nabla\theta$, which comes together with a forbidden region $R_0$ around $0$, inside which the spinors
are not defined and on whose boundary is prescribed a Berry-Mondragon-type boundary conditions \cite{BerryMondragon}*{Eq.~(33)}.

In this situation, the spectral flow has a topological description and its study enters the framework of the Atiyah-Singer index theorem for Dirac operators on compact manifolds with boundaries.
But it is different to the situation at hand: as explained above, the spectral flow is not purely topological (there is a dependence on the \emph{writhes} of the field lines).

\begin{remnn}

\noindent  -- Here, to simplify calculations, we will mainly work in $\R^3$ seen as a chart for $\S^3$ minus a pole through the stereographic projection.
Let us make a choice and consider the stereographic projection with inverse:
\begin{equation}\label{eq:conform}
\bx\in\R^3\mapsto \Big(\frac{2(x_1+ix_2)}{|\bx|^2+1},\frac{2x_3+i(|\bx|^2-1)}{|\bx|^2+1}\Big)\in\S^3\subset\C^2.
\end{equation}
The conformal factor which switches from the flat metric to the pullback of that of $\S^3$ is $\Omega(\bx):=\tfrac{2}{1+|\bx|^2}$.

\noindent  -- A smooth curve $\gamma$ in $\R^3$ is identified, up to fixing a basepoint $\bp_0\in\bg$, with its
arclength parametrisation $\gamma:(\R/(\ell\Z))\to \R^3$ in the flat metric of $\R^3$. The set $\R/(\ell\Z)$ is often denoted by $\T_\ell$.

\noindent  
-- We emphasize that it is possible to directly work in $\S^3$ 
(as it has been done in \cite{dirac_s3_paper1,dirac_s3_paper2}).
For instance, we will use the gauge given by the Biot and Savart law in $\R^3$ for magnetic knots, and
a Biot and Savart law has been established in $\S^3$ (see the very nice formulas in \cite{DeTGlu}).
\end{remnn}

\begin{notnn}
-- 
We will use the musical isomorphism $\flat$ (with inverse $\sharp$), which transforms a vector into a one-form
through the \emph{flat} metric of $\R^3$, mainly to keep track of the geometry. Moreover, for a vector field $\mathbf{V}$ on $\R^3$, we will
write $\sigma\cdot \mathbf{V}$ for $\sum_{k=1}^3V_k\sigma_k$ where the $\sigma_k$'s denote the Pauli matrices. 

-- We may use the words knot and link for specific realizations and not for the isotopy classes. To distinguish them, we overline the letter
 to designate the latter: $\gamma$ is a (smooth) curve while $\overline{\gamma}$ is its isotopy class in $\S^3$.
\end{notnn}

\subsection{Definition of the Dirac operators}\label{sec:def_dirac_op_intro}
Let us quickly define the singular Dirac operators in the Biot and Savart gauge\footnote{The gauge transformation switching to the singular gauge of \cite{dirac_s3_paper1}
is given in Section \ref{sec:gauge_choice}.}. The reader can find a motivation for the definition in \cite{dirac_s3_paper1,dirac_s3_paper3}.
We recall that for a curve $\bg$, the corresponding Biot and Savart law (for unit current or flux) is:
\begin{equation}\label{eq:BS}
	\bA_{\bg}(\bx):=\dfrac{1}{4\pi}\dint_{\gamma}\d\mathbf{r}\times \frac{\bx-\mathbf{r}}{|\bx-\mathbf{r}|^3}.
\end{equation}
 The one-form $\bA_{\bg}^\flat$ can be extended to the pole by $0$ in the $\S^3$-metric
 since we have $\lim_{\bx\to+\infty}|\bA_{\bg}(\bx)|(1+|\bx|^2)=0$. 
 We recall the asymptotic expansion:
 \begin{equation*}
  \bA_{\bg}(\bx)=\frac{1}{4\pi|\bx|^3}\int_{\gamma}\d \mathbf{r}\times \Big(\frac{3\cip{\bx}{\mathbf{r}}}{|\bx|}\frac{\bx}{|\bx|}-\mathbf{r}\Big)+\underset{\bx\to\infty}{\mathcal{O}}(|\bx|^{-4}).
 \end{equation*}

Consider a link $\gamma:=\dot{\cup}_{1\le k\le K}\gamma_k$, and a collection of fluxes
$0<2\pi\alpha_k<2\pi$. The B.S. magnetic potential is $\bA_{B.S.}:=\sum_{k=1}^K 2\pi\alpha_k \bA_{\gamma_k}$. 
Let $\bt:B_\eps[\gamma]\to \C^2$ be a smooth extension of the unit tangent vectors $\dot{\gamma}_k$ on a small tubular neighborhood of the link,
say by setting $\bt(\bx):=\bt(\gamma(s))\in\R^3$, where $\gamma(s)$ is the projection of $\bx$ onto $\gamma$.
Let $P_+:B_\eps[\gamma]\to \mathrm{End}(\C^2)$ be the pointwise projection onto the eigenspace $\ker(\sigma\cdot\bt-1)$, and $\chi:\R^3\to [0,1]$
be a smooth localization which equals $1$ on $B_{\eps/2}[\gamma]$, and $0$ outside $B_\eps[\gamma]$.

The Dirac operator in the flat metric is formally given by $\sigma\cdot(-i\nabla^{\R^3}+\bA_{B.S.})$ acting on $L^2(\R^3,\C)^2$. 
But we deal with operators on $\S^3$ through \eqref{eq:conform}, the formal Dirac operator is then $\cD_{\text{form}}:=\Omega^{-2}\sigma\cdot(-i\nabla^{\R^3}+\bA_{B.S.})\Omega$, 
acting on $L_{\Omega}^2:=L^2(\R^3,\Omega^3\d x)^2$ \cite{ErdSol01}*{Section~4}.
This Hilbert space is isometric to $L^2(\S^3)^2$: we will define the Dirac operator on this space. 

The minimal operator $\cD_{\bA_{B.S.}}^{(\min)}$ is defined as the graph norm closure in $L_{\Omega}^2$ of 
$C_0^\infty(\R^3\setminus \gamma)^2$ with respect to $\cD_{\text{form}}$. It is a symmetric operator.
The Dirac operator $\cD_{\bA_{B.S.}}$ is the self-adjoint extension of $\cD_{\bA_{B.S.}}^{(\min)}$ with domain:
\[
	\dom(\cD_{\bA_{B.S.}})=\big\{\psi\in\dom\,(\cD_{\bA_{B.S.}}^{(\min)})^*,\ P_+\chi \psi\in \dom(\cD_{\bA_{B.S.}}^{(\min)}) \big\}.
\]
A more precise description of its domain is given in \cite{dirac_s3_paper1} (see also Section~\ref{sec:desc_dom}).
The definition does not depend on the choice of $\chi$. There is no problem at infinity since in dimension $3$ the space $H^1$
of a ball and $H^1$ of this ball minus a point are the same.

\subsection{Cable knots}\label{sec:cable_knot}

\subsubsection{Definition}
Let $\gamma_0:\R/(\ell\Z)\to \R^3$ be a knot, identified with its arclength parametrisation. 
A cable knot $\gamma$ associated with $\gamma_0$ is an isotopy class of \emph{simple} closed curves which admits a 
realization $\widetilde{\gamma}$ of the form:
\begin{equation}\label{eq:def_gamma_eta}
	\widetilde{\gamma}(s)=\gamma_0([s])+\eta U(s),\ 0\le s\le N\ell,
\end{equation}
where
\begin{itemize}
	\item $N\ge 2$ is an integer and $\eta>0$ is sufficiently small so that the tubular neighborhood $\overline{B}_\eta[\gamma_0]$ is diffeomorphic to a solid torus,
	\item $[s]=s+\Z\ell\in \T_{\ell}:=\R/(\ell\Z)$ denotes the modulo class of $s$,
	\item for each $s$, $U(s)$ is a \emph{unit} vector orthogonal to $\dot{\gamma}_0([s])$.
\end{itemize}
The constraint on $\widetilde{\gamma}$ of being simple is given by the condition $U(s_1)\neq U(s_2)$ for each $s_1\neq s_2$ with $[s_1]=[s_2]$.

The cable knots of the unknot are called torus knots as they can be embedded in a torus (by definition).
For instance, the right-handed trefoil is the torus knot $(2,3)$, and corresponds to a periodic motion on the torus 
in which, while running along the unknot $N=2$ times, the vector $U$ makes $M=3$ turns around
the tangent vector $\dot{\gamma}_0$ in the positive direction (that is such that the linking number \eqref{eq:link} equals
 $\link(\gamma_0,\widetilde{\gamma})=3$), see Figure~\ref{fig:trefoil}.

In the literature, one also finds the following equivalent definition: given $\gamma_0$, a cable knot of $\gamma_0$ is the image of
a torus knot through a diffeomorphism mapping a solid torus $\overline{B}_\eta[\mathrm{circle}]$ onto a neighborhood of $\gamma_0$
which maps the circle onto $\gamma_0$.

As an isotopy class, a cable knot is characterized by its base $\gamma_0$ and two integers: the number of turns $N\ge 2$ run by $\widetilde{\gamma}$
along $\gamma_0$ and the linking number $\link(\gamma_0,\widetilde{\gamma})$. Since $\wt{\gamma}$ is a single curve, $N$ and $M$ are coprime.

\smallskip

The main idea of the paper is to consider the loop of Dirac operators corresponding to 
$\alpha\in [0,1]\mapsto 2\pi \alpha [\gamma_{\eta}]$ and to take the limit $\eta\to 0$.

\smallskip

For analytical purposes, very special realizations of cable knots will be considered (we will pick special curves $\gamma_0$ and vector maps $U:\T_{N\ell}\to \R^3$ in \eqref{eq:def_gamma_eta}).
To explain them, we need to introduce the notion of writhe explained in the next part. 

\begin{rem}\label{rem:on_torus_knots}
	We emphasize that the $(N,M)$-torus knot and the $(M,N)$-torus knot define the same isotopy class on $\S^3$. 
	To see it, it suffices to make a rotation in $\S^3$ in such a way that in $\R^3$ (through \eqref{eq:conform}),
	the vertical line passing through the origin is mapped onto the horizontal unit disk around the origin and vice-versa.
	Let $\mathbf{e}_2=(0,1,0)\in\R^3$, we can consider for instance the following conformal map:
	\[
		\mathbf{x}\mapsto \Big(\frac{2x_3}{|\mathbf{e}_2-\mathbf{x}|^2},\frac{|\mathbf{x}|^2-1}{|\mathbf{e}_2-\mathbf{x}|^2},\frac{2x_1}{|\mathbf{e}_2-\mathbf{x}|^2}\Big).
	\]
	It corresponds to the switch operator $(z_0,z_1)\mapsto (z_1,z_0)$ on $\S^3$ and the rotation of angle $\pi$ in the plane spanned by $(1+i,-1-i)$ and $(1-i,-1+i)$.
	Note that this only applies to torus knots \emph{a priori} (when the base $\gamma_0$ is the circle).
\end{rem}

\subparagraph{\emph{Terminology}}
Consider $(\widetilde{\gamma},\gamma_0)$ a realization of a cable knot of $\gamma_0$ as described in \eqref{eq:def_gamma_eta}, and let $M=\link(\widetilde{\gamma},\gamma_0)$, we will call the $(N,M)$-torus knot 
the pattern of $\widetilde{\gamma}$. 
We will also say that $\widetilde{\gamma}$ is the $(N,M)$-cable knot upon $\gamma$, and that $(\widetilde{\gamma},\gamma_0)$ is a cable representation of $\widetilde{\gamma}$.

\begin{rem}
	Throughout the paper we will often shorten $[s]$ to $s$: the class modulo $\ell$ will often be implicitly taken.
\end{rem}

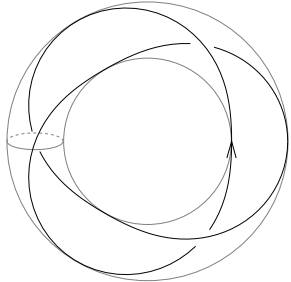
\begin{figure}[!ht]
	\resizebox{0.3\textwidth}{!}{
	\begin{tikzpicture}
		\draw[color=gray!100] (0,0) circle [radius=2.99];
		\draw[color=gray!100]  (0,0) circle [radius=5.01];
		\draw[color=gray!100,dashed] plot [domain=0:3.14] ({-4.01+cos(\x r)},{0.3*sin(\x r)});
		\draw[color=gray!100] plot [domain=3.14:6.28] ({-4.01+cos(\x r)},{0.3*sin(\x r)});
	\draw[thick] plot[samples=20,domain=0:0.48]  ({(4+cos(3*\x r))*cos(2*\x r)},{(4+cos(3*\x r))*sin(2*\x r)});
	\draw[thick] plot[samples=50,domain=0.58:2.57]  ({(4+cos(3*\x r))*cos(2*\x r)},{(4+cos(3*\x r))*sin(2*\x r)});
	\draw[thick] plot[samples=50,domain=2.66:4.67]  ({(4+cos(3*\x r))*cos(2*\x r)},{(4+cos(3*\x r))*sin(2*\x r)});
	\draw[thick] plot[samples=50,domain=4.76:2*pi]  ({(4+cos(3*\x r))*cos(2*\x r)},{(4+cos(3*\x r))*sin(2*\x r)});
	\draw[thick] (2.83,-0.6) -- (3,0) -- (3.16,-0.6);
	\end{tikzpicture}
	}
	\label{fig:trefoil} 
	\caption{The right-handed trefoil as the $(2,3)$-torus knot.}
\end{figure}

\subsubsection{Writhe, linking number and Seifert surface}

For two non-intersecting oriented knots $\gamma_1$ and $\gamma_2$,
their \underline{\emph{linking number}} $\link(\gamma_1,\gamma_2)\in\Z$ is:
\begin{equation}\label{eq:link}
	\link(\gamma_1,\gamma_2)=\frac{1}{4\pi}\int_{\gamma_1}\int_{\gamma_2}
		\left\langle\d\mathbf{r}_1\times \d \mathbf{r}_2,
			\frac{\mathbf{r}_1-\mathbf{r}_2}{|\mathbf{r}_1-\mathbf{r}_2|^3}\right\rangle.
\end{equation}
This knot invariant is the circulation along one knot of the Biot and Savart vector field with unit flux
relative to the other (seen as a magnetic potential). Eq.~\eqref{eq:link} is the Gauss linking number formula.
The writhe of $\gamma$ is given by taking $\gamma_1=\gamma_2=\gamma$ in the formula (there is in fact no singularity in the integrand of \eqref{eq:def_writhe}):
\begin{equation}\label{eq:def_writhe}
	\mathrm{Wr}(\gamma):=\frac{1}{4\pi}\diint_{\gamma\times\gamma}
		\left\langle\d \mathbf{r}_1\times\d \mathbf{r}_2,
			\frac{\mathbf{r}_1-\mathbf{r}_2}{|\mathbf{r}_1-\mathbf{r}_2|^3}\right\rangle.
\end{equation}
Alternatively the linking number \cite{Rolfsen}*{Part~D~Chapter~5} is the number of crossing of one knot through a \underline{\emph{Seifert surface}}
for the other, that is through a compact and oriented surface with the given knot as its boundary. Such a surface always exists \cite{FP30,Seifert35}. 
The two definitions coincide since the circulation is gauge invariant.
We refer the interested reader to \cite{On_Gauss_formula} and the references therein.

\subsection{Main theorem}

For a knot $\gamma$, we denote by $\Sf(\overline{\gamma})$ the spectral flow associated to a curve $\gamma'$ which is 
in the same isotopy class as $\gamma$, but with trivial writhe. Such a representative always exists \cite{dirac_s3_paper2}*{Appendix~A}. 
By \cite{dirac_s3_paper2}*{Theorem~21}, $\Sf(\overline{\gamma})$ is a well-defined number.
Let us now state the main theorem of the paper.

\begin{theorem}\label{thm:main}
 Let $\gamma_0,\gamma_1$ be knots in $\S^3$ such that $\gamma_1$ is a $(N,M)$-cable knot upon $\gamma_0$,
 where $(N,M)\in\mathbb{N}^2$ is a couple of coprime integers $M,N\ge 2$. Then we have:
 \[
  \Sf(\overline{\gamma_1})=N\Sf(\overline{\gamma_0}).
 \]
\end{theorem}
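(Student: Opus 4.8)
The plan is to fix a cable representation $\widetilde{\gamma}_\eta(s)=\gamma_0(s)+\eta U(s)$, $s\in\R/(N\ell\Z)$, of a curve isotopic to $\gamma_1$ and to analyse the loop of Dirac operators $\alpha\in\Tf\mapsto\cD_{2\pi\alpha[\widetilde{\gamma}_\eta]}$ as $\eta\to 0$, following the strategy announced in Section~\ref{sec:cable_knot}. For $\eta$ small the solid torus $\overline B_\eta[\gamma_0]$ is tubular, and on $\R^3\setminus\gamma_0$ the Biot and Savart potential converges, locally uniformly together with all derivatives, to an $N$-fold flux on $\gamma_0$:
\[
 2\pi\alpha\,\bA_{\widetilde{\gamma}_\eta}(\bx)\ \xrightarrow[\,\eta\to 0\,]{}\ 2\pi(N\alpha)\,\bA_{\gamma_0}(\bx)\qquad(\bx\notin\gamma_0),
\]
because $\d\mathbf r\to\dot\gamma_0\,\d s$ along $\widetilde{\gamma}_\eta$ while $\gamma_0$ is traversed $N$ times. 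Hence, away from $\gamma_0$ and for $N\alpha\notin\Z$, the operators $\cD_{2\pi\alpha[\widetilde{\gamma}_\eta]}$ converge in the norm-resolvent sense (by \cite{dirac_s3_paper1,dirac_s3_paper3}) to $\cD_{2\pi N\alpha[\gamma_0]}$; and by the $2\pi$-periodicity of the flux on $\gamma_0$, the path $\alpha\mapsto\cD_{2\pi N\alpha[\gamma_0]}$ is, up to the ($\alpha$-independent) gauge transformations implementing that periodicity, the $\gamma_0$-loop concatenated $N$ times, of spectral flow $N\,\Sf(\gamma_0\text{-loop})$.

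I would make this precise by subdividing $\Tf$ at the resonant values $\alpha=j/N$, $j=0,\dots,N-1$, writing $\alpha=(j+\beta)/N$ on each window so that the effective flux on $\gamma_0$ runs over $2\pi\beta\ (\mathrm{mod}\ 2\pi)$ and the window carries a full $\gamma_0$-loop. For $\beta$ in the interior the convergence above shows that, for $\eta$ small, the crossings of the spectral level $0$ are exactly those of the $\gamma_0$-loop, contributing altogether $N\,\Sf(\gamma_0\text{-loop})$. The delicate points are the endpoints $\beta\in\{0,1\}$, i.e.\ the resonances $\alpha=j/N$: there the winding flux on $\gamma_0$ tends to $2\pi\Z$, the limiting operator is the invertible field-free Dirac operator on $\S^3$, but eigenfunctions of $\cD_{2\pi\alpha[\widetilde{\gamma}_\eta]}$ may collapse onto $\gamma_0$ (these operators are not gap-continuous, cf.\ the discussion after \cite{dirac_s3_paper2}). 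To handle them I would pass to Fermi coordinates about $\gamma_0$ and split $\cD_{2\pi\alpha[\widetilde{\gamma}_\eta]}$ into a transverse two-dimensional magnetic Dirac operator on $\R^2$ punctured at the $N$ points $\eta\{U(k\ell+u)\}_{k=0}^{N-1}$, each of flux $2\pi\alpha$, plus a longitudinal part along $\gamma_0$; after rescaling the transverse variable by $1/\eta$ the punctures form a regular $N$-gon rotating rigidly by $2\pi M/N$ per circuit of $\gamma_0$, and one counts its Aharonov--Casher threshold modes as $\alpha$ crosses $j/N$. The main obstacle is precisely this step: controlling, uniformly in $\eta$, whether the collapsing modes cross the level $0$ near the resonances --- a question governed by $\Wr(\widetilde{\gamma}_\eta)$, the hypothesis $\tfrac12\Wr(\widetilde{\gamma}_\eta)\notin\tfrac12+\Z$ excluding a crossing sitting exactly at a resonance --- and identifying the resulting net count $\delta_\eta$ with $\lfloor\tfrac12(1-\Wr(\widetilde{\gamma}_\eta))\rfloor-N\lfloor\tfrac12(1-\Wr(\gamma_0))\rfloor$.

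It then remains to do the writhe bookkeeping. By \cite{dirac_s3_paper2} (recalled above), for any realization $\gamma$ with $\tfrac12\Wr(\gamma)\notin\tfrac12+\Z$ the loop $\alpha\mapsto\cD_{2\pi\alpha[\gamma]}$ has spectral flow $\Sf(\overline\gamma)+\lfloor\tfrac12(1-\Wr(\gamma))\rfloor$, and a Calugareanu--White--Fuller identity gives $\Wr(\widetilde{\gamma}_\eta)\to N^2\Wr(\gamma_0)+c(N,M)$ as $\eta\to0$, where $c(N,M)$ is the writhe of the $(N,M)$-torus-knot pattern in the standard solid torus. Feeding these into the relation $\Sf(\widetilde{\gamma}_\eta)=N\,\Sf(\gamma_0\text{-loop})+\delta_\eta$ of the previous step, all the floor terms cancel once $\delta_\eta=\lfloor\tfrac12(1-\Wr(\widetilde{\gamma}_\eta))\rfloor-N\lfloor\tfrac12(1-\Wr(\gamma_0))\rfloor$ --- which is exactly what the resonance analysis must establish --- leaving precisely $\Sf(\overline{\gamma_1})=N\,\Sf(\overline{\gamma_0})$; when $c(N,M)$ is small (for torus knots, after using Remark~\ref{rem:on_torus_knots} to exchange $N$ and $M$) one can moreover pick the ``very special'' realization so that both $\Wr(\gamma_0)$ and $\lim_{\eta\to0}\Wr(\widetilde{\gamma}_\eta)$ lie in $(-1,1]$, in which case $\delta_\eta$ and both floors vanish identically for $\eta$ small and the identity is immediate. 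The genuinely hard part is the transverse/longitudinal reduction near the resonances that produces $\delta_\eta$ and its matching with the writhe defect; everything else builds on \cite{dirac_s3_paper1,dirac_s3_paper2,dirac_s3_paper3}.
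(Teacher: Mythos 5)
Your overall skeleton (collapse the cable onto its base, identify the limit as the $N$-fold flux on $\gamma_0$, split into transverse and longitudinal parts near $\gamma_0$, and do writhe bookkeeping via C\u{a}lug\u{a}reanu--White--Fuller) matches the paper's strategy, and the convergence $2\pi\alpha\bA_{\gamma_\eta}\to 2\pi N\alpha\bA_{\gamma_0}$ away from $\gamma_0$ is indeed Remark~\ref{rem:useful_and_trivial}. But the step you yourself flag as ``the genuinely hard part'' is not merely hard --- as you have set it up it cannot be closed, and you end up assuming the answer. First, the convergence for $N\alpha\notin\Z$ is only strong-resolvent, never norm-resolvent: for every $t$ with $Nt>1$ there are collapsing branches (indexed by $k<\lfloor Nt\rfloor$ in Lemma~\ref{lem:continuity_along_collapse}) producing a whole discrete set \eqref{eq:crit_eig} of critical spectral levels, and for $\alpha_a=0$ this set contains $\lambda=0$ at finitely many \emph{interior} values of $t$, not only at the resonances $\alpha=j/N$ (e.g.\ for $(N,M)=(3,2)$ one finds $t=5/12$). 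At such points the spectral flow genuinely jumps, so ``the crossings of level $0$ are exactly those of the $\gamma_0$-loop'' is false for the bare loop $\alpha\mapsto\cD_{2\pi\alpha[\gamma_\eta]}$. The paper's way out is to add an auxiliary magnetic potential $\bA_{\aux}$ with a $t$-dependent flux $\alpha_a(t)=Mt$ spread over $M+1$ small loops (plus a fixed half-flux when $MN$ is odd), chosen precisely so that $0$ never lies in the critical set \eqref{eq:crit_eig} along the homotopy; the price is that one must then count how many critical points are crossed when deforming the modified loop back to the original one, which produces the explicit corrections $MN$ and $D(N,M)$ in \eqref{eq:sf_eta}--\eqref{eq:sf_zero}.

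Second, your $\delta_\eta$ is exactly this correction, and you postulate $\delta_\eta=\lfloor\tfrac12(1-\Wr(\gamma_\eta))\rfloor-N\lfloor\tfrac12(1-\Wr(\gamma_0))\rfloor$ because that is what makes the theorem come out --- that is circular. In the paper the correction is $N\lfloor\tfrac12(\eps-\tfrac MN)\rfloor+D(N,M)-\tfrac{(M-\overline{\eps})N}{2}$, and proving that it vanishes (Lemma~\ref{lem:comb}) is not a formal cancellation: for $N>M$ it follows from the combinatorial evaluation $D(N,M)=\tfrac{MN}{2}$ or $\tfrac{(M+1)N}{2}$ (Lemma~\ref{lem:compute_D_N_M}, a lattice-point symmetry count), while for $M>N$ it requires the topological input that the $(N,M)$- and $(M,N)$-torus knots are isotopic together with $\Sf(\mathrm{unknot})=0$. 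A smaller point: for the adapted representation (with $\Wr(\gamma_0)=M/N$ and $U$ parallel-transported) one has $\Wr(\gamma_\eta)=N^2\Wr(\gamma_0)+\mathcal{O}(\eta)$ with no extra pattern term $c(N,M)$, and you cannot in addition normalize both writhes into $(-1,1]$, since the construction pins $\Wr(\gamma_0)=M/N$ and hence $\Wr(\gamma_\eta)\to NM$. So the missing content is (i) a device making the homotopy bump continuous at spectral level $0$ --- the auxiliary flux --- and (ii) the actual evaluation of the resulting critical-point count; neither is supplied by your outline.
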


Theorem~\ref{thm:main} is proven in Section~\ref{sec:proof_main_thm}. Since $\Sf(\mathrm{unknot})=0$ \cite{dirac_s3_paper1}*{Theorem~29}, we have the following.

\begin{theorem}\label{cor:trefoil}
	The spectral flow of a torus knot is $0$. 
\end{theorem}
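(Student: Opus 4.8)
The plan is to obtain Theorem~\ref{cor:trefoil} as an immediate consequence of Theorem~\ref{thm:main} together with the known vanishing $\Sf(\mathrm{unknot})=0$ from \cite{dirac_s3_paper1}*{Theorem~29}. First I would recall that, by the definition in Section~\ref{sec:cable_knot}, a torus knot is precisely a cable knot of the unknot: any torus knot is the $(N,M)$-torus knot for some coprime pair $N,M\ge 1$, realized as in \eqref{eq:def_gamma_eta} with $\gamma_0$ a geometric circle. Recall also that $\Sf(\overline{\gamma})$ depends only on the isotopy class $\overline{\gamma}$ (it is defined via a null-writhe representative), so it is legitimate to speak of ``the spectral flow of a torus knot''.

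The argument then splits into two cases according to $\min(N,M)$. If $\min(N,M)=1$, the $(N,M)$-torus knot is isotopic to the unknot — a $(1,M)$- or $(N,1)$-curve on the torus bounds an embedded disk — so $\Sf(\overline{\gamma})=\Sf(\mathrm{unknot})=0$ directly. If instead $N,M\ge 2$, then (using Remark~\ref{rem:on_torus_knots} to interchange the roles of the two integers if needed) Theorem~\ref{thm:main} applies with $\gamma_0$ the unknot and $\gamma_1$ the given torus knot, giving
\[
  \Sf(\overline{\gamma_1}) = N\,\Sf(\overline{\mathrm{unknot}}) = N\cdot 0 = 0 .
\]

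The only subtlety is pure bookkeeping: Theorem~\ref{thm:main} is stated under the standing hypothesis $M,N\ge 2$, so the genuinely trivial cases $N=1$ or $M=1$ must be dispatched separately — but there the knot is literally the unknot and nothing is to be proved. There is no analytic obstacle at this stage; all of the difficulty has been front-loaded into Theorem~\ref{thm:main}, whose proof (Section~\ref{sec:proof_main_thm}) rests on analysing the $\eta\to 0$ limit of the loop of Dirac operators $\alpha\mapsto 2\pi\alpha[\gamma_\eta]$ attached to a cable representation $(\widetilde{\gamma},\gamma_0)$.
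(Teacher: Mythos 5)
Your proof is correct and is essentially the paper's own argument: the authors deduce Theorem~\ref{cor:trefoil} immediately from Theorem~\ref{thm:main} applied with $\gamma_0$ the unknot, together with $\Sf(\mathrm{unknot})=0$ from \cite{dirac_s3_paper1}*{Theorem~29}. Your extra remark dispatching the degenerate cases $\min(N,M)=1$ (where the torus knot is already the unknot) is a harmless piece of bookkeeping the paper leaves implicit.
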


By simple transitivity we obtain the obvious corollary.
\begin{corollary}\label{cor:trans}
	Let $\overline{\gamma}$ be a knot (as an isotopy class). Assume that there exists a finite family of knots
	$(\overline{\gamma}_{n})_{0\le n\le N}$, $N\in\mathbb{N}$, such that $\overline{\gamma}_0$ is the unknot (isotopy class of the circle),
	$\overline{\gamma}_N=\overline{\gamma}$, and for all $1\le n\le N$, $\overline{\gamma}_n$ is a cable knot of $\overline{\gamma}_{n-1}$.
	Then the spectral flow of $\overline{\gamma}$ is $0$.
\end{corollary}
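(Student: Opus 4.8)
\emph{Proof proposal for Corollary~\textup{\ref{cor:trans}}.}

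The plan is a one-line induction on $n$ built on Theorem~\ref{thm:main}; the corollary is pure bookkeeping. Before starting I would record that $\Sf(\overline{\gamma})$ is genuinely an invariant of the isotopy class: every class admits a representative with null writhe by \cite{dirac_s3_paper2}*{Appendix~A}, and by \cite{dirac_s3_paper2}*{Theorem~21} the spectral flow of the resulting loop of Dirac operators is well-defined and independent of which null-writhe representative is chosen, so all the equalities below between spectral flows of isotopy classes make sense.

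The base case is $n=0$: since $\overline{\gamma}_0$ is the unknot, $\Sf(\overline{\gamma}_0)=0$ by \cite{dirac_s3_paper1}*{Theorem~29} (the same input already used in Theorem~\ref{cor:trefoil}).

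For the inductive step I would assume $\Sf(\overline{\gamma}_{n-1})=0$ for some $1\le n\le N$. By hypothesis $\overline{\gamma}_n$ is a cable knot of $\overline{\gamma}_{n-1}$, hence — by the definition recalled in Section~\ref{sec:cable_knot} — the $(N_n,M_n)$-cable knot upon $\overline{\gamma}_{n-1}$ for some coprime integers $M_n,N_n\ge 2$. Since Theorem~\ref{thm:main} is stated directly for isotopy classes, it applies verbatim and gives
\[
  \Sf(\overline{\gamma}_n)=N_n\,\Sf(\overline{\gamma}_{n-1})=0.
\]
Iterating up to $n=N$ yields $\Sf(\overline{\gamma})=\Sf(\overline{\gamma}_N)=0$, which is the claim.

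There is no real obstacle beyond Theorem~\ref{thm:main} itself. The only points deserving a word are the well-definedness of $\Sf$ on isotopy classes (handled above) and the verification that each elementary cabling in the chain meets the hypotheses $M_n,N_n\ge 2$ of Theorem~\ref{thm:main}; should one admit in the definition degenerate patterns with $|M_n|\le 1$, these produce cables isotopic to their base and hence carry the same spectral flow, so the conclusion is unaffected.
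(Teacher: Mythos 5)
Your proof is correct and is exactly the argument the paper intends: the paper dispatches this corollary "by simple transitivity" from Theorem~\ref{thm:main} together with $\Sf(\mathrm{unknot})=0$, which is precisely your induction. The extra remarks on well-definedness and degenerate patterns are fine but not needed.
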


	\begin{rem}
	This result does not cover all knots, as we can see with the Alexander polynomial \cite{Lickorish}*{Thm~6.15}.
	Nevertheless, we conjecture the spectral flow to be the trivial invariant. 
	
	For instance the $4_1$ knot is not covered. It is not a torus knot, and its Alexander polynomial is
	$3-(t+t^{-1})$ (see \cite{Rolfsen}*{Appendix~C}). Then \cite{Lickorish}*{Thm~6.15} states that the Alexander polynomial $\Delta_{\gamma}(t)$ of a cable knot $\gamma$ of $\gamma_0$
	with pattern the $(N,M)$-torus knot $g$ is:
	\[
		\Delta_{\gamma}(t)=\Delta_{g}(t)\Delta_{\gamma_0}(t^N).
	\]
	\end{rem}

\subsection{Organization of the paper}\label{sec:org}

Section~\ref{sec:proof_main_thm} is devoted to the proof of Theorem~\ref{thm:main}. It is based on a \emph{homotopy} argument.
The homotopy in the space of singular Dirac operators is described by Eqs. \eqref{eq:intro_homotopy}--\eqref{eq:loop_odd}.
In Section~\ref{sec:topology}, we introduce appropriate topologies and state the continuity properties of this homotopy in two auxiliary lemmas, Lemmas
\ref{lem:compac} and \ref{lem:continuity_along_collapse}, which are proven later. The proof is given in Section~\ref{sec:displayed_proof}.
These results are the equivalent of \cite{dirac_s3_paper1}*{Theorems~22~\&~23} and \cite{dirac_s3_paper2}*{Theorems~14~\&~15}
for the homotopy at hand. 

\smallskip
The Biot and Savart (B.S.) formula for the cable knot at hand, seen as a Coulomb gauge,
is the gauge considered in the proof: Section~\ref{sec:BS_BS} is devoted to its study.
We explain the gauge transformation relating the B.S. gauge with the singular gauges considered in \cite{dirac_s3_paper1,dirac_s3_paper2,dirac_s3_paper3}.
In Theorem~\ref{thm:behavior_BS}, we express the behavior of the B.S. gauge in the vicinity of the curve.

This gauge is easier to handle than the singular one when studying the collapse of a cable knot onto its base.
This situation is studied in Section~\ref{BS_cable} for adapted realizations introduced in Section~\ref{sec:intro_adapted_cable_rep}.
Their nice geometrical properties are described in Section~\ref{sec:prop_adap_seif_fib}. Then we give in Lemma~\ref{lem:refinement_BS_behav}
 the behavior of the B.S. gauge for these adapted realizations, and its limit as the cable knot collapses onto its base.

\smallskip

Section~\ref{sec:descr_dirac_op} provides the analytical tools for the proof of the auxiliary lemmas \ref{lem:compac} and \ref{lem:continuity_along_collapse}. 
These tools are the equivalent of \cite{dirac_s3_paper1}*{Theorems~22~\&~23} and \cite{dirac_s3_paper2}*{Theorems~14~\&~15}. In Section~\ref{sub:2dim_op}, we study the $2$D Dirac operators with
$N$ A.-B. solenoids around the origin. In particular, we study the limit as the solenoids all converge to the origin, and provide in \eqref{eq:def_f_sing_2D} a useful
family for the singular part of the domain.

In Section~\ref{sec:desc_dom}, we study in details the domain for a $3$D Dirac operator of the homotopy. We also recall two tools:
an adapted partition of unity (Sec.~\ref{sec:loc}) and phase jump functions (Sec.~\ref{sec:remov_phase}). 
In Sec.~\ref{sec:ess_dom} we give in \eqref{eq:descr_dom_D_full} an adapted essential domain for the operator.

At last we relate in Section~\ref{sec:model_op} the $3$D operator with a model operator splitting the geometry into a transversal part and a longitudinal part.
The decomposition for the model operator is given in \eqref{eq:model_case_energy_equalities} and Lemma~\ref{lem:decomp_sing_subspace}.
The relation between the model operator and the operator on $\S^3$ is given in Lemma~\ref{lem:graph_norm_estimate}.

\smallskip

In Section~\ref{sec:proof_aux_lem} we give the proof of the two auxiliary lemmas \ref{lem:compac}~$\&$~\ref{lem:continuity_along_collapse}
using the same methods as the one used for proving \cite{dirac_s3_paper1}*{Theorems~22~\&~23} and \cite{dirac_s3_paper2}*{Theorems~14~\&~15}.

\section{Proof of the main theorem}\label{sec:proof_main_thm}

\subsection{Strategy}\label{sec:strategy}
We prove Theorem~\ref{thm:main} with a homotopy argument. We consider an adapted cable $(\gamma_{\eta},\gamma_0)$ representation 
for the $(M,N)$-cable knot of $\gamma_0$, see Section~\ref{sec:intro_adapted_cable_rep}: 
we consider $\gamma_\eta$ \eqref{eq:def_gamma_eta} with a peculiar base $\gamma_0$
and a peculiar vector map $U$ (their main feature is that for all $s\in \T_{N\ell}$, the tangent vector $\dot{\gamma}_\eta(s)$ is co-linear to $\dot{\gamma}_0([s])$).
 
We take the limit $\eta\to 0^+$ (to fix ideas we consider $0\le \eta\le \eta_0\ll 1$).
We consider a homotopy in the space $\Sd$ of self-adjoint operators on $L^2(\S^3)^2$ with discrete spectrum given by:
\begin{equation}\label{eq:intro_homotopy}
	\begin{array}{rcl}
		[0,1]\times [0,\eta_0)&\longrightarrow& \Sd,\\
		(t,\eta) &\mapsto& \cD_{\bA_{\mathrm{full}}(t,\eta)},
	\end{array}
\end{equation}
where $\bA_{\mathrm{full}}(t,\eta)$ is the full magnetic potential given by the sum of the Biot and Savart gauge 
$2\pi t\bA_{\gamma_{\eta}}$ for $2\pi t [\gamma_{\eta}]$ and an auxiliary potential $\bA_{\mathrm{aux}}$ 
with circulation $\phi_a(t)$ along $\gamma_0$.
Indeed (and unfortunately), an auxiliary gauge is needed to ensure the continuity of the homotopy in such a way
that the spectral flow 
	\begin{equation*}
			\Sf\big( (\cD_{\bA_{\mathrm{full}}(t,\eta)})_{0\le t\le t_1}\big)
	\end{equation*}
is well-defined and continuous, hence constant, as $\eta\to 0$. Then it suffices
to relate this number to $\Sf(\gamma_{\eta})$ and $\Sf(\gamma_0)$ to end the proof.

For simplicity, we will take $\bA_{\mathrm{aux}}$ of the following form. Consider a finite family of disks $(D_{\pm}(s_k))_{1\le k\le K}$ from the one-parameter family
\[
 D_{\pm}(s):=\{ \bx\in B_{\eps}[\gamma_0],\ s_{\gamma_0}(\bx)=s\quad \&\quad \rho_{\gamma_0}(\bx)\le \frac{\eps}{2}\},\ s\in\T_{\ell}.
\]
They are oriented along, respectively against $\gamma_0$:
$
\link(\gamma_0,\partial D_{\pm}(s))=\pm 1.
$

We consider the magnetic knots obtained from their boundaries carrying a flux $\frac{2\pi}{K}\alpha_a$, and define 
the gauge 
\begin{equation}\label{eq:aux_gauge}
\bA_{\mathrm{aux}}(\alpha_a)=\sum_{k=1}^K2\pi\frac{\alpha_a}{K}[D_-(s_k)]. 
\end{equation}
The number $K$ of disks will be chosen so that throughout the considered paths 
the renormalized flux $\tfrac{\alpha_a}{K}$ carried by each knot does not exceed $1$.

Let us anticipate: for\footnote{ If $M<0$, then we have to reverse all of them.} $M>0$ and $MN$ even,
we take $K=M+1$, $\phi_a(t)=-2\pi M t$, and define the path
\begin{equation}\label{eq:loop_even}
	t\in \big[0,1+\tfrac{1}{M}\big]\mapsto \cD_{\bA_{\full}(t,\eta)},\ \bA_{\full}(t):=2\pi \min(t,1)\bA_{\gamma_{\eta}}
		+\sum_{k=1}^{M+1}2\pi \tfrac{Mt }{M+1}[D_-(s_{k})].
\end{equation}
This gives rise to a homotopy $(\cD_{\bA_{\full}(t,\eta)})_{t\in \big[0,\frac{M+1}{M}\big],\  0\le \eta\le \eta_0}$.

If $MN$ is odd, we consider an additional fixed potential $\pi [D_+(s_{K+1})]$,  away from the other $K$ disks, and 
\begin{multline}\label{eq:loop_odd}
	t\in \big[0,\frac{M+1}{M}\big]\mapsto \cD_{\bA_{\full}(t,\eta)},\\
	\bA_{\full}(t):=2\pi \min(t,1)\bA_{\gamma_{\eta}}+\pi [D_+(s_{M+2})]+\sum_{k=1}^{M+1}2\pi\tfrac{Mt}{M+1}[D_-(s_{k})].
\end{multline}

\begin{rem}
		The choice of the gauge should not matter as the spectrum is gauge invariant. 
		By picking a singular gauge for $2\pi \min(t,1)[\gamma_{\eta}]$, one realizes that the
		considered paths \eqref{eq:loop_even}-\eqref{eq:loop_odd} 
		are unitary equivalent to loops of (singular) Dirac operators.
\end{rem}

To study the spectral flow, we consider topologies described in \cite{Wahl08}.
They are intermediate between the topology of the strong-resolvent convergence,
which is too weak to define the spectral flow 
and that of the norm-resolvent sense (also called the gap topology),
which is too strong for the paths we consider. 
In the next Section, we recall all these topologies and then state the auxiliary lemmas \ref{lem:compac} and \ref{lem:continuity_along_collapse}
which address the continuity of the map $\cD_{\bA_{\full}}$ with respect to these topologies. The former is the equivalent to  \cite{dirac_s3_paper1}*{Theorem~22} and the latter
that of \cite{dirac_s3_paper1}*{Theorem~23} and \cite{dirac_s3_paper2}*{Theorems~14~\&~15}.
The continuity of the homotopies follows easily Corollary~\ref{cor:homot}.

\subsection{Topologies and auxiliary lemmas}\label{sec:topology}

\subsubsection{The topology of the strong-resolvent continuity}
	This topology is the weakest topology (on the space $\mathcal{S}_{\cH}$ of self-adjoint operators on a Hilbert space $\cH$) so that 
	for any $\psi\in \cH$ the following maps are continuous
	\[
		D\in \mathcal{S}_{\cH}\mapsto (D\pm i)^{-1}\psi\in \cH.
	\]
	We have the following Lemma \cite{dirac_s3_paper1}.
	\begin{lemma}\label{lem:char_sres_conv}
		Let $(\cD_n)$ be a sequence of (unbounded) self-adjoint operators on a separable 
		Hilbert space $\cH$. 
		Then the following statements are equivalent.
		\begin{enumerate}
    			\item $\cD_n$ converges to $\cD$ in the strong resolvent sense.
    	
			\item For any $(f,\cD f)\in \cG_{\cD}$,
  			there exists a sequence $(f_n,\cD_n f_n)\in \cG_{\cD_n}$ converging to 
			$(f,\cD f)$ in $\cH \times \cH$.
    	
			\item The orthogonal projection $P_n$ onto $\cG_{\cD_n}$ 
			converges in the strong operator topology to $P$, the orthogonal projector onto $\cG_{\cD}$.
		\end{enumerate}
	\end{lemma}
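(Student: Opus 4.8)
The plan is to establish the cyclic chain $(1)\Rightarrow(2)\Rightarrow(3)\Rightarrow(1)$. Throughout I would work with the unitary ``flip'' $V\colon\cH\times\cH\to\cH\times\cH$, $V(a,b)=(-b,a)$, and the standard fact that for a self-adjoint operator $\cD$ one has the orthogonal splitting $\cH\times\cH=\cG_{\cD}\oplus\cG_{\cD}^{\perp}$ with $\cG_{\cD}^{\perp}=V\cG_{\cD}=\{(-\cD h,h):h\in\dom\cD\}$; this follows from the description of the graph of the adjoint together with $\cD^{*}=\cD$. For $(1)\Rightarrow(2)$: given $(f,\cD f)\in\cG_{\cD}$, put $g:=(\cD+i)f$ and $f_{n}:=(\cD_{n}+i)^{-1}g$. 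Strong resolvent convergence gives $f_{n}\to f$, and then $\cD_{n}f_{n}=g-if_{n}\to g-if=\cD f$, so $(f_{n},\cD_{n}f_{n})\in\cG_{\cD_{n}}$ converges to $(f,\cD f)$.

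For $(2)\Rightarrow(3)$: fix an arbitrary $(u,v)\in\cH\times\cH$ and decompose it along the splitting above for $\cD$, writing $(u,v)=(f,\cD f)+(-\cD h,h)$ with $f,h\in\dom\cD$, so that $P(u,v)=(f,\cD f)$. Applying $(2)$ to $(f,\cD f)\in\cG_{\cD}$ yields $(f_{n},\cD_{n}f_{n})\to(f,\cD f)$ with $(f_{n},\cD_{n}f_{n})\in\cG_{\cD_{n}}$; applying $(2)$ to $(h,\cD h)\in\cG_{\cD}$ and then $V$ yields $(-\cD_{n}h_{n},h_{n})\to(-\cD h,h)$ with $(-\cD_{n}h_{n},h_{n})\in\cG_{\cD_{n}}^{\perp}$. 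Set $w_{n}:=(f_{n},\cD_{n}f_{n})+(-\cD_{n}h_{n},h_{n})$, so $w_{n}\to(u,v)$ and, by the splitting for $\cD_{n}$, $P_{n}w_{n}=(f_{n},\cD_{n}f_{n})$. Hence $P_{n}(u,v)=(f_{n},\cD_{n}f_{n})+P_{n}\big((u,v)-w_{n}\big)$, and since $\|P_{n}\|\le 1$ the last term tends to $0$; therefore $P_{n}(u,v)\to(f,\cD f)=P(u,v)$.

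For $(3)\Rightarrow(1)$: the key observation is that for $g\in\cH$ the vector $\zeta:=(-ig,g)$ is \emph{independent of the operator} and satisfies $P_{n}\zeta=\big((\cD_{n}+i)^{-1}g,\ \cD_{n}(\cD_{n}+i)^{-1}g\big)$. Indeed, with $f:=(\cD_{n}+i)^{-1}g$ one checks directly that $\zeta=(f,\cD_{n}f)+\big(-\cD_{n}(if),\,if\big)$, the first summand lying in $\cG_{\cD_{n}}$ and the second in $\cG_{\cD_{n}}^{\perp}$. Thus $(3)$ forces the first components to converge, i.e.\ $(\cD_{n}+i)^{-1}g\to(\cD+i)^{-1}g$; repeating the computation with $\zeta':=(ig,g)$ gives $(\cD_{n}-i)^{-1}g\to(\cD-i)^{-1}g$, which is $(1)$.

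The only point requiring a little care is the bookkeeping with unbounded operators in $(2)\Rightarrow(3)$: one must check that the orthogonal decomposition of an \emph{arbitrary} pair $(u,v)$ really lands in $\cG_{\cD}\oplus V\cG_{\cD}$ with both pieces of the form $(\,\cdot\,,\cD\,\cdot\,)$ (up to the flip), so that the approximation hypothesis $(2)$ is applicable to each piece separately. This is exactly where self-adjointness of $\cD$ (and of each $\cD_{n}$) enters, through $\cG_{\cD}^{\perp}=V\cG_{\cD}$; once this is granted, every remaining step is a one-line identity. I do not expect any genuine obstacle here — the lemma is a structural reformulation of strong resolvent convergence in graph language.
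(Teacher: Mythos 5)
Your proof is correct: the orthogonal splitting $\cH\times\cH=\cG_{\cD}\oplus V\cG_{\cD}$ for self-adjoint $\cD$ is used properly, and each implication in the cycle $(1)\Rightarrow(2)\Rightarrow(3)\Rightarrow(1)$ checks out, including the identity $P_n(-ig,g)=\big((\cD_n+i)^{-1}g,\,\cD_n(\cD_n+i)^{-1}g\big)$. Note that the present paper states this lemma without proof (it is quoted from the first paper of the series), so there is no in-text argument to compare against; your argument is the standard graph-decomposition characterization of strong resolvent convergence and needs no modification.
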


\subsubsection{The topology of the norm-resolvent continuity}
	This is the topology given by the distance $\dist(D_1,D_0):=\norm{(D_1+i)^{-1}-(D_0+i)^{-1}}_{\cB(\cH)}$.

\subsubsection{Intermediate topologies: bump toplogies}
	We now introduce a scale of topologies calibrated to define the spectral flow and 
	refer the reader to \cite{Wahl08} for further details.
	
	Let $\phi\in\sD(\R;\R_+)$ be a bump function
	centered at $0$, in the sense that $\phi$ is even and on its support $[-a,a]\subset \R$, 
	$\phi'(x)>0$ for $x\in(-a,0)$ and $\phi'(x)<0$ for $x\in(0,a)$. We define $\phi_n(x):=\phi(nx)$. 

	The topology $\sT_{\phi}\subset 2^{\Sd}$ is defined 
	as the smallest topology such that for any $\psi\in \cH$, the following maps are continuous:
	$$
	\left\{
	\begin{array}{rcl}
		D\in \Sd&\mapsto &(D\pm i)^{-1}\psi\in \cH,\\
		D\in \Sd&\mapsto &\phi(D)\in\cB\big(\cH\big).
	\end{array}
	\right.
	$$
	Observe that by functional calculus we have the inclusion $\cT_{\phi'}\subset \cT_{\phi}$
	whenever the support of the bump function $\phi'$ is in the interior of that of $\phi$.

	For our purpose, we can only deal with the family $(\sT_{\phi_n})_{n\in\mathbb{N}}$. A map $c:U\to\Sd$
	from a topological set $U$ to $\Sd$ is said to be bump continuous if it is $\sT_{\phi_n}$-continuous
	for some $n$. In particular when $U$ is compact, local bump continuity\footnote{that is the existence, for each open set $O\subset U$, of a bump function $\phi_O$
	so that the restriction $c\restriction O$ is $\cT_{\phi_O}$-continuous.} coincides with bump continuity 
	thanks to the Heine-Borel property.
	
	\begin{rem}\nonumber
		More generally, we define the bump continuity at a spectral level $\lambda\in\R$ the same way,
		except that the bump functions $\Lambda$ are requested to be centered around $\lambda$ instead.
		If we do not precise the spectral level, then it is assumed to be $0$.
	\end{rem}

\subsubsection{Auxiliary lemmas for Theorem~\ref{thm:main}}

We now turn to auxiliary lemmas which address the continuity of $\cD_{\bA_{\full}}$.
The map refers to \eqref{eq:intro_homotopy} with the auxiliary flux \eqref{eq:aux_gauge}, and is
	seen as a function of 
	\[
	\big(t,\eta,\alpha_a\big)\in[0,1]_{\per}\times [0,\eta_0]\times [0,K).
	\]

\begin{lemma}\label{lem:compac}
	Let $(t_n,\eta_n,\alpha_a^{(n)})$ be a sequence with $t_n\to t\in[0^+,1^-]$, $\eta_n$ decreasing to $0$
	and $\alpha_a^{(n)}\to \alpha_a\in [0,K)$.
	Let $(\cD_n)$ be the corresponding sequence of operators $\cD_{\bA_{\full}}$, and $\psi_n\in\dom(\cD_n)$
	a sequence satisfying $\limsup_{n\to+\infty}\norm{\psi_n}_{\cD_n}<+\infty$. 
	
	Then, up to extraction, $(\psi_n,\cD_n \psi_n)$ converges to
	some $(\psi,\cD_{t,0}\psi)$, weakly in $L^2(\S^3)^2$. 
	Furthermore, $\psi_n$ converges in $L^2_{\loc}(\S^3\setminus \gamma_0)^2$.
\end{lemma}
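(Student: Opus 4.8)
The plan is to combine soft weak-compactness in $L^2(\S^3)^2$ with interior elliptic regularity on $\S^3\setminus\gamma_0$ and with the fine description of $\dom(\cD_n)$ near the collapsing cable supplied by Sections~\ref{sec:desc_dom}--\ref{sec:model_op}. Since $\limsup_n\norm{\psi_n}_{\cD_n}<+\infty$, both $(\psi_n)$ and $(\cD_n\psi_n)$ are bounded in $L^2(\S^3)^2$; after extraction we get $\psi_n\rightharpoonup\psi$ and $\cD_n\psi_n\rightharpoonup\Phi$ for some $\psi,\Phi\in L^2(\S^3)^2$. The real content is then that $\psi\in\dom(\cD_{t,0})$ with $\Phi=\cD_{t,0}\psi$, and that the weak $L^2$-convergence of $\psi_n$ improves to strong $L^2_{\loc}(\S^3\setminus\gamma_0)$-convergence. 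Observe that the hypothesis $t\in[0^+,1^-]$ is used to keep the flux $2\pi t_n$ carried by $\gamma_{\eta_n}$ bounded away from $0$ and $2\pi$, so that the singular structure of $\dom(\cD_n)$ near the curve does not degenerate and is uniform in $n$.

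\emph{Away from the base curve.} On any compact $K\subset\S^3\setminus\bigl(\gamma_0\cup\bigcup_k\partial D_-(s_k)\bigr)$ (minus also $\partial D_+(s_{M+2})$ in the odd case \eqref{eq:loop_odd}), the cable $\gamma_{\eta_n}$ is, for $n$ large, disjoint from a neighborhood of $K$, and there the potentials $\bA_{\full}(t_n,\eta_n,\alpha_a^{(n)})$ are smooth and converge in $C^\infty$ to $\bA_{\full}(t,0,\alpha_a)$ — using $\bA_{\gamma_{\eta_n}}\to N\bA_{\gamma_0}$ locally uniformly on $\S^3\setminus\gamma_0$ (Lemma~\ref{lem:refinement_BS_behav}; the cable collapses onto $\gamma_0$ run $N$ times) together with the trivial limits of the disk potentials. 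Hence near $K$ the $\cD_n$ are uniformly elliptic first-order operators with $C^\infty$-convergent coefficients; the interior estimate together with the graph-norm bound gives a uniform $H^1$ bound on $\psi_n$, and Rellich yields, after a further extraction, $\psi_n\to\psi$ strongly in $L^2_{\loc}(K)^2$. Near the fixed circles $\partial D_-(s_k)$ one is in the situation of a single Aharonov--Bohm solenoid of $n$-independent geometry and converging flux, handled exactly as in \cite{dirac_s3_paper1}*{Theorem~22}; altogether $\psi_n\to\psi$ in $L^2_{\loc}(\S^3\setminus\gamma_0)^2$. Passing to the limit in $\cip{\psi_n}{\cD_n\varphi}=\cip{\cD_n\psi_n}{\varphi}$ for $\varphi\in C_0^\infty(\S^3\setminus\gamma_0)^2$ (with $\cD_n\varphi\to\cD_{t,0}\varphi$ in $L^2$) shows that $\psi\in H^1_{\loc}(\S^3\setminus\gamma_0)^2$ solves $\cD_{t,0}\psi=\Phi$ in the sense of distributions on $\S^3\setminus\gamma_0$.

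\emph{Near the base curve.} This is the crux, and the analogue for the present homotopy of \cite{dirac_s3_paper1}*{Theorem~22}. Using the adapted partition of unity of Section~\ref{sec:loc} and the phase-jump functions of Section~\ref{sec:remov_phase}, I would localize $\chi_j\psi_n$ in the tubular neighborhood of $\gamma_0$ and, after the gauge transformation of Section~\ref{sec:BS_BS}, compare $\cD_n$ with the model operator of Section~\ref{sec:model_op}: transversally one sees $N$ Aharonov--Bohm solenoids at the points $\eta_n U(s)$ of the slice, coalescing at the centre as $n\to\infty$, and longitudinally a smooth operator along $\gamma_0$. By Lemma~\ref{lem:graph_norm_estimate} the uniform graph-norm bound transports into a uniform bound on the model energy of these localized pieces, so that the essential-domain description \eqref{eq:descr_dom_D_full} splits each $\chi_j\psi_n$ into an $H^1$-part, bounded uniformly, plus a singular part lying in the finite-dimensional family \eqref{eq:def_f_sing_2D}. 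Weak $H^1$-compactness of the regular part, and convergence of the finitely many coefficients of the singular part — for which one invokes the continuity, as the $N$ solenoids merge into a single one carrying the total flux, of the $2$D operators of Section~\ref{sub:2dim_op} (cf.\ \eqref{eq:model_case_energy_equalities}, Lemma~\ref{lem:decomp_sing_subspace}) — identify the local limit and show that it carries exactly the singular structure that \eqref{eq:descr_dom_D_full} requires for $\cD_{t,0}$. Gluing with the previous step gives $\psi\in\dom(\cD_{t,0})$.

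\emph{Conclusion and main obstacle.} Since $\psi\in\dom(\cD_{t,0})$ and $\cD_{t,0}\psi$ coincides on $\S^3\setminus\gamma_0$ with the $L^2$-function $\Phi$, while $\gamma_0$ is Lebesgue-null, we get $\cD_{t,0}\psi=\Phi$, hence $(\psi_n,\cD_n\psi_n)\rightharpoonup(\psi,\cD_{t,0}\psi)$ weakly in $L^2(\S^3)^2$; together with the $L^2_{\loc}$-convergence this is the assertion. The only genuine difficulty is the third step: one must exclude that, as $\eta_n\to0$, the singular parts of the $\psi_n$ sitting near the $N$ strands of $\gamma_{\eta_n}$ degenerate or leave the admissible singular subspace attached to the single curve $\gamma_0$ — in other words, that the collapse is ``continuous'' at the level of operator domains — and it is exactly this for which Sections~\ref{BS_cable} and \ref{sub:2dim_op} (and the adapted realizations of Section~\ref{sec:intro_adapted_cable_rep}) are designed.
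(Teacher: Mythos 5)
Your outline is sound, but it takes a substantially harder road than the paper for the one step you yourself flag as the crux. The paper first establishes strong resolvent convergence $\cD_n\to\cD_{t,0}$ (by exhibiting, for each element of the graph of the limit, an approximating sequence in the graphs of the $\cD_n$, built from the essential domain \eqref{eq:descr_dom_D_full} and Lemma~\ref{lem:refinement_BS_behav}), and then the identification of the weak limit is soft: for any $(\phi,\cD_{t,0}\phi)$ in the limit graph one picks $(\phi_n,\cD_n\phi_n)\to(\phi,\cD_{t,0}\phi)$ and passes to the limit in $\cip{\cD_n\psi_n}{\phi_n}=\cip{\psi_n}{\cD_n\phi_n}$, which gives $\cip{\Phi}{\phi}=\cip{\psi}{\cD_{t,0}\phi}$ for all $\phi$, hence $\psi\in\dom(\cD_{t,0}^*)=\dom(\cD_{t,0})$ and $\Phi=\cD_{t,0}\psi$ by self-adjointness. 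This completely bypasses any direct control of the singular parts of $\psi_n$ near the collapsing strands. Your third step instead attacks that boundary behaviour head-on via Lemma~\ref{lem:graph_norm_estimate}, the splitting into regular plus singular parts and Lemma~\ref{lem:dichotomy_sing_basis}; this can be made to work (it is essentially the machinery the paper reserves for the bump-continuity Lemma~\ref{lem:continuity_along_collapse}, where weak-$L^2$ information is genuinely insufficient), but as written it is only a program: you assert rather than prove that the singular coefficients converge and that the limiting singular structure is the one prescribed by $\dom(\cD_{t,0})$. The trade-off is clear: the paper's route is short but presupposes the strong resolvent continuity; yours is self-contained at the level of domains but duplicates the hardest analysis of the paper.

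Two smaller points. First, on a compact set meeting the interior of a disk $D_-(s_k)$ the potential $\bA_{\full}$ is \emph{not} smooth in the paper's convention: the auxiliary part is the singular gauge $2\pi\tfrac{\alpha_a}{K}[D_-(s_k)]$, so $\psi_n$ carries a phase jump across the whole disk, not only near $\partial D_-(s_k)$; your uniform $H^1$ bound there requires first multiplying by the phase-jump functions $E_{\delta,\underline{a}}$ of Section~\ref{sec:remov_phase} (as the paper does) or switching to a smooth gauge for the auxiliary links. Second, for the full $L^2_{\loc}(\S^3\setminus\gamma_0)$ statement you should also cover compacta meeting the auxiliary circles themselves; deferring to \cite{dirac_s3_paper1}*{Theorem~22} there, as you do, is legitimate.
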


\begin{lemma}\label{lem:continuity_along_collapse}
	The map $\cD_{\bA_{\full}}$ is continuous in the strong-resolvent sense
	and, given $\lambda\in\R$, satisfies the following.
	\begin{enumerate}
		\item It is $\lambda$-bump continuous
		in the range $[0,1)\times (0,\eta_0]\times [0,K)$.
	
		\item It is $\lambda$-bump continuous at the point $(t,0,\alpha_a)$ 
				if and only if $\lambda$ does not belong to
		\begin{equation}\label{eq:crit_eig}
	 \lambda\in\bigg\{\frac{1}{\wt{\ell}}\Big( -2\pi\frac{M}{N}(Nt-k-\tfrac{1}{2})+2\pi \alpha_a+\pi+2m\pi\bigg),
	 	\ m\in\Z,\ 0\le k\le  \lfloor Nt\rfloor -1\Big\},
	 \end{equation}
	 where $\wt{\ell}:=\int_0^\ell \Omega(\gamma_0(s))\d s$ is the $\S^3$-length of $\gamma_0$.
	\end{enumerate}
\end{lemma}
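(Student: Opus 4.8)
\emph{Overall strategy.} The plan is to prove the two parts separately, both resting on the characterisation of strong-resolvent convergence in Lemma~\ref{lem:char_sres_conv}, on the compactness statement Lemma~\ref{lem:compac}, and on the bump-topology formalism of \cite{Wahl08}, together with the analytic machinery --- the transversal $2$D operators of Section~\ref{sub:2dim_op}, the model operator of Section~\ref{sec:model_op} and the graph-norm comparison Lemma~\ref{lem:graph_norm_estimate} --- which plays here the role that \cite{dirac_s3_paper1}*{Theorems~22~\&~23} and \cite{dirac_s3_paper2}*{Theorems~14~\&~15} played in the companion papers. Thus the proof is not self-contained: it assembles these pieces, and the bulk of the new labour is deferred to Section~\ref{sec:proof_aux_lem}.

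\emph{Strong-resolvent continuity.} By Lemma~\ref{lem:char_sres_conv}(2) it suffices, given a convergent sequence $(t_n,\eta_n,\alpha_a^{(n)})\to(t,\eta,\alpha_a)$ (with $\eta_n\downarrow 0$ allowed) and an element $f$ of the limiting domain, to produce $f_n\in\dom(\cD_{\bA_{\full}(t_n,\eta_n,\alpha_a^{(n)})})$ with $(f_n,\cD_n f_n)\to(f,\cD f)$ in $L^2(\S^3)^2\times L^2(\S^3)^2$; by density it is enough to treat $f$ ranging over the essential domain \eqref{eq:descr_dom_D_full}, i.e.\ $f=f_{\reg}+f_{\sing}$ with $f_{\reg}\in C_0^\infty(\S^3\setminus\gamma_0)^2$ and $f_{\sing}$ a localised singular model function built from \eqref{eq:def_f_sing_2D}. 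For the regular part one keeps $f_n=f_{\reg}$: since $\bA_{\full}(t_n,\eta_n,\alpha_a^{(n)})\to\bA_{\full}(t,\eta,\alpha_a)$ uniformly on the compact set $\supp f_{\reg}$ --- this is where Theorem~\ref{thm:behavior_BS} and Lemma~\ref{lem:refinement_BS_behav} enter, controlling the Biot and Savart gauge of $\gamma_{\eta_n}$ and its limit as $\eta_n\to0$ --- one gets $\cD_n f_{\reg}\to\cD f_{\reg}$ in $L^2$. For the singular profile one transports $f_{\sing}$ from a tubular neighbourhood of $\gamma_0$ onto one of $\gamma_{\eta_n}$ via the adapted cable representation, and checks through Lemma~\ref{lem:graph_norm_estimate} and the $\eta$-convergence of the transversal $2$D operators of Section~\ref{sub:2dim_op} that the graph-norm error tends to $0$. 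Extending the resulting approximants by density gives the strong-resolvent continuity everywhere.

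\emph{Bump continuity away from the collapse.} On $[0,1)\times(0,\eta_0]\times[0,K)$ the curve $\gamma_{\eta}$ and the disks $D_{\pm}(s_k)$ of \eqref{eq:aux_gauge}--\eqref{eq:loop_odd} stay in a compact family of knots of uniformly controlled geometry, and for $t\in[0,1)$ every flux in $\bA_{\full}$ --- $2\pi t$ along $\gamma_{\eta}$, the disk fluxes $2\pi\alpha_a/K<2\pi$, and the fixed $\pi$ on $D_+(s_{M+2})$ in the odd case --- lies in $[0,2\pi)$ and never equals a nonzero multiple of $2\pi$. Hence no eigenfunction collapses at a finite spectral level: the one-sided collapse that could occur at $t=0$ sends the escaping eigenvalue to $\pm\infty$, so it obstructs $\lambda$-bump continuity at no finite $\lambda$. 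Applying the tools of Section~\ref{sec:descr_dirac_op} with $\eta$ bounded away from $0$ --- the exact analogues of \cite{dirac_s3_paper1}*{Theorems~22~\&~23} and \cite{dirac_s3_paper2}*{Theorems~14~\&~15} --- then yields $\lambda$-bump continuity for every $\lambda$ throughout this range; the only genuinely new point is the joint continuity in the three parameters and the bookkeeping of the auxiliary disks, which is routine.

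\emph{Bump continuity at $(t,0,\alpha_a)$, and the main obstacle.} Fix $\lambda$. By the criterion of \cite{Wahl08}, given the strong-resolvent continuity, $\lambda$-bump continuity at $(t,0,\alpha_a)$ holds iff no sequence of eigenfunctions $\cD_n\psi_n=\mu_n\psi_n$, $\norm{\psi_n}=1$, with $\mu_n\to\lambda$ and $\eta_n\downarrow0$, can collapse, i.e.\ converge weakly to $0$. Since $\norm{\psi_n}_{\cD_n}$ is bounded, Lemma~\ref{lem:compac} gives, up to extraction, $\psi_n\rightharpoonup\psi$ with $\cD_{t,0}\psi=\lambda\psi$ and $\psi_n\to\psi$ in $L^2_{\loc}(\S^3\setminus\gamma_0)^2$; if $\psi\neq0$ this is an honest limiting eigenvalue and there is no collapse. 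If $\psi=0$, the mass of $\psi_n$ concentrates on $\gamma_0$, and the model-operator decomposition of Section~\ref{sec:model_op} shows that near $\gamma_0$ the function $\psi_n$ is asymptotically a product of a collapsing transversal profile \eqref{eq:def_f_sing_2D} --- one copy for each of the $\lfloor Nt\rfloor$ channels open once the flux seen transversally, namely $2\pi Nt$, has passed $2\pi k/N$, $0\le k\le\lfloor Nt\rfloor-1$ --- and a longitudinal function on $\gamma_0$, the latter forced in the limit to satisfy $\tfrac1{\wt\ell}(-i\partial_s+\theta_k)g=\lambda g$ on the circle of $\S^3$-length $\wt\ell$, where the holonomy $\theta_k$ collects the term $-2\pi\tfrac MN(Nt-k-\tfrac12)$ from the Biot and Savart gauge of $\gamma_{\eta_n}$ (Lemma~\ref{lem:refinement_BS_behav}), the circulation $2\pi\alpha_a$ along $\gamma_0$ of the auxiliary disks $D_-(s_k)$, and the spinorial $\pi$ of the critical A--B channel (with $MN$-parity corrected by the fixed disk $\pi[D_+(s_{M+2})]$ in the odd case). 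The spectrum of that operator is precisely the set \eqref{eq:crit_eig}. Hence if $\lambda$ is not in \eqref{eq:crit_eig}, the case $\psi=0$ is impossible, no collapse occurs, and bump continuity follows; conversely, if $\lambda$ lies in \eqref{eq:crit_eig} one builds a collapsing sequence by hand --- the corresponding $1$D eigenfunction tensored with the transversal profile \eqref{eq:def_f_sing_2D} rescaled to $\gamma_{\eta_n}$, corrected via Lemma~\ref{lem:graph_norm_estimate} into an approximate and then genuine eigenfunction of $\cD_n$ with eigenvalue tending to $\lambda$ --- which concentrates on $\gamma_0$ and breaks $\lambda$-bump continuity. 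The hard part is this last step: pinning down $\theta_k$ \emph{exactly}, with the right factor $M/N$, the half-integer shifts $k+\tfrac12$ and range of $k$ dictated by the adapted cable representation and the placement of the disks $D_-(s_k)$, and then turning the heuristic product structure into the two rigorous directions of the ``if and only if'' by means of the quantitative $2$D estimates of Section~\ref{sub:2dim_op} and the model-operator comparison of Lemma~\ref{lem:graph_norm_estimate}.
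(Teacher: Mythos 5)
Your proposal follows essentially the same route as the paper's proof: strong-resolvent continuity via the graph characterisation of Lemma~\ref{lem:char_sres_conv} and the essential domain, part (1) by invoking the earlier papers' bump-continuity results since no collapse occurs for $t<1$ and $\eta>0$, and part (2) by reducing to a vanishing quasimode, applying Lemma~\ref{lem:compac}, decomposing near $\gamma_0$ through the model operator into transversal singular channels tensored with longitudinal functions obeying an effective $1$D operator whose spectrum is \eqref{eq:crit_eig}, and exhibiting an explicit collapsing ansatz for the converse. The steps you defer (pinning down the holonomy exactly, and showing the singular spin-down terms cannot be compensated by the regular part) are precisely the content of the paper's effective equation \eqref{eq:eigeq_spin_down} and its ``no possibility of compensation'' argument, so your outline is faithful to the actual proof.
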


To prove them we proceed as in \cite{dirac_s3_paper2,dirac_s3_paper3} and study the convergences from the simplest layer to
the more difficult. Section~\ref{sec:descr_dirac_op} is devoted to a detailed description of the singular Dirac operators and their domains.
In Section~\ref{sec:proof_aux_lem} we study  the continuity in $L^2(\S^3)^2$
as in \cite{dirac_s3_paper2}.

\begin{corollary}\label{cor:homot}
	For $\eta_0>0$ sufficiently small, the homotopies \eqref{eq:loop_even}-\eqref{eq:loop_odd} are bump continuous.
\end{corollary}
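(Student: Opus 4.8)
The plan is to deduce the bump continuity of the two homotopies \eqref{eq:loop_even}--\eqref{eq:loop_odd} directly from the two auxiliary lemmas, Lemma~\ref{lem:compac} and Lemma~\ref{lem:continuity_along_collapse}, by checking that each of these paths, reparametrised suitably, is a restriction of the three-parameter map $\cD_{\bA_{\full}}$ to a compact set on which Lemma~\ref{lem:continuity_along_collapse} gives local bump continuity at the spectral level $0$. First I would record the reparametrisation: for $t\in[0,1]$ the ``knot flux'' is $2\pi t$ and the auxiliary flux is $\phi_a(t)=-2\pi Mt$, i.e.\ $\alpha_a(t)=Mt$ (resp.\ $\alpha_a=Mt$ together with the extra fixed half-flux $\pi[D_+(s_{K+1})]$ when $MN$ is odd), while for $t\in[1,\tfrac{M+1}{M}]$ the knot flux stays at $2\pi$ and only $\alpha_a$ keeps increasing from $M$ to $M+1$. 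Since $K=M+1$ was chosen precisely so that $\tfrac{\alpha_a}{K}\le 1$ along the whole path, the parameter triple $(t,\eta,\alpha_a)$ stays in the admissible box $[0,1]_{\per}\times[0,\eta_0]\times[0,K)$ throughout, so Lemma~\ref{lem:continuity_along_collapse} applies at every point of the path.

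Next I would split the path at $t=1$ and handle the two pieces. Away from $\eta=0$, part~(1) of Lemma~\ref{lem:continuity_along_collapse} already gives bump continuity at $0$, so the only points needing attention are those with $\eta=0$; by part~(2), bump continuity at $0$ fails at $(t,0,\alpha_a)$ exactly when $0$ lies in the exceptional set \eqref{eq:crit_eig}. The key arithmetic point is to verify that with the prescribed choice $\alpha_a=Mt$ (and the extra $\pi$ in the odd case) the value $\lambda=0$ is \emph{never} in \eqref{eq:crit_eig}: substituting $\alpha_a=Mt$ into \eqref{eq:crit_eig} the terms proportional to $t$ cancel, leaving the condition $0\in\tfrac{1}{\wt\ell}\{-2\pi\tfrac{M}{N}(-k-\tfrac12)+2\pi M t?\,\}$... more precisely one checks that for $MN$ even the residual quantity is an odd multiple of $\pi$ (hence nonzero), while in the odd case the additional $\pi$ shifts it to an even multiple of $\pi$ which, because $\gcd(N,M)=1$ and $0\le k\le \lfloor Nt\rfloor-1< N$, still cannot vanish. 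This parity bookkeeping is exactly why the construction distinguishes the cases $MN$ even and $MN$ odd, and is the main thing to get right; it is a short but delicate computation with \eqref{eq:crit_eig}, and I expect it to be the principal obstacle.

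Granting that, each of the two path-pieces is, after reparametrisation, a continuous map from a compact interval into $\Sd$ which is locally $\cT_{\Lambda_O}$-continuous at spectral level $0$ for a suitable bump function $\Lambda_O$ near each point $O$ (using Lemma~\ref{lem:continuity_along_collapse}(1) at interior points with $\eta>0$ and the verified non-exceptionality at the $\eta=0$ points). By the Heine--Borel argument already quoted in the text — on a compact parameter set local bump continuity upgrades to bump continuity for a single, small enough bump function — each piece is bump continuous; concatenating at $t=1$ (where, by Lemma~\ref{lem:continuity_along_collapse}, strong-resolvent continuity holds and $0$ is again non-exceptional since $\alpha_a=M$ is an integer and the knot flux $2\pi$ is a full period) gives bump continuity of the whole homotopy \eqref{eq:loop_even}, resp.\ \eqref{eq:loop_odd}. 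Finally I would note that the $\eta$-direction is handled uniformly: the same bump function works for all $\eta\in[0,\eta_0]$ once $\eta_0$ is taken small enough that the tubular neighbourhoods in Section~\ref{sec:intro_adapted_cable_rep} are embedded, which is the meaning of the qualifier ``for $\eta_0>0$ sufficiently small'' in the statement. This completes the argument.
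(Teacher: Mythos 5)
Your overall strategy coincides with the paper's: reduce to Lemma~\ref{lem:continuity_along_collapse}, observe that only the points with $\eta=0$ (and the endpoint $t=1$) need attention, and verify arithmetically that with $\alpha_a(t)=Mt$ the level $\lambda=0$ never lies in the critical set \eqref{eq:crit_eig}, the extra $\pi[D_+(s_{M+2})]$ being exactly what rescues the case $MN$ odd. Setting $\lambda=0$ and $\alpha_a=Mt$ in \eqref{eq:crit_eig}, the $t$-dependence cancels and one is left with $\tfrac{\pi}{N}\bigl(M(2k+1)+N(2m+1)\bigr)=0$; when $MN$ is even the integer $M(2k+1)+N(2m+1)$ is odd, hence nonzero, and when $MN$ is odd the shifted set gives $M(2k+1)+2mN$, which is again odd. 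So the computation you flag as the principal obstacle does go through, though your description of it is imprecise: the residual quantity is $\pi/N$ times an odd integer, not ``an odd (resp.\ even) multiple of $\pi$'', and in the odd case the correct reason it cannot vanish is the parity (or, equivalently, $N\mid(2k+1)$ forces $2k+1=N$ and then $M=-2m$ even, contradicting $MN$ odd).

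The one genuine gap is your treatment of the point $t=1$ at \emph{fixed} $\eta>0$. Part~(1) of Lemma~\ref{lem:continuity_along_collapse} is stated only on $[0,1)\times(0,\eta_0]\times[0,K)$, so it does not cover $(1,\eta,\alpha_a)$, and your justification there (``$\alpha_a=M$ is an integer and the knot flux $2\pi$ is a full period'') is not the relevant mechanism: as $t\to 1^-$ with $\eta>0$ fixed, bump continuity is governed by the critical condition $\pi(1-\Wr(\gamma_\eta))+2\pi N\alpha_a\equiv 0\bmod 2\pi$ from \cite{dirac_s3_paper2}*{Theorem~15}, which depends on the writhe of the realization $\gamma_\eta$. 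This is where ``$\eta_0$ sufficiently small'' actually enters the paper's proof: by Lemma~\ref{lem:writh_cabled_knot}, $\Wr(\gamma_\eta)=NM+\mathcal{O}(\eta)$, so for small $\eta_0$ the condition is violated by (approximately) $\pi(1+NM)$, which is an odd multiple of $\pi$ when $MN$ is even (and is handled by the extra half-flux when $MN$ is odd). Your attribution of the smallness of $\eta_0$ solely to the embeddedness of the tubular neighbourhood misses this point; with that correction the argument closes.
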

\begin{proof}[Proof of Corollary~\ref{cor:homot}]

	We focus on the first part ($t\in[0,1]$), the second part is known to be bump continuons \cite{dirac_s3_paper2}*{Theorem~14-15}.
	We choose $\eta_0$ small enough such that $\Wr(\gamma_{\eta})$ is almost $N\Wr(\gamma_{0})$, and the homotopies are continuous
	in the limit $(t,\eta)\to (1^-,\eta)$ for all $0<\eta\le \eta_0$ \cite{dirac_s3_paper2}*{Theorem~15}.
	 Factorizing by $N^{-1}$ in Formula~\ref{eq:crit_eig} leads us to choose $\alpha_a(t)$ satisfying
	 \[
	 \begin{array}{rcl}
	  \pi(N+M)-2\pi MN t+2\pi N\alpha_a(t)&\not\equiv &0\mod (2\pi).
	 \end{array}
	  \]
	  As $M+N=(1-MN)\mod (2)$, we substitute in the formula above and 
	  divide everything by $2\pi N$ to obtain
	   \[
	 \begin{array}{rcl}
	  \alpha_a(t)&\not\equiv &M t+\tfrac{1}{2}\big(M-\tfrac{1}{N}\big)\mod \big(\tfrac{1}{N}\big).
	 \end{array}
	  \]
	  
	  If $MN$ is even, the choice $\alpha_a(t)=Mt$ ensures continuity.
	  If $MN$ is odd, the additional $\pi [D_+(s_{J+1})]$ to $\bA_{\aux}$ shifts the set \eqref{eq:crit_eig} of critical eigenvalues as $\eta\to 0^+$ 
	  which becomes (see the proof of Lemma~\ref{lem:continuity_along_collapse})
	  \begin{equation*}
	 \bigg\{\frac{1}{\wt{\ell}}\Big( -2\pi\frac{M}{N}(N t-k-\tfrac{1}{2})+2\pi \alpha_a+2m\pi\bigg),\ m\in\Z,\ 0\le k\le E(N t)-1\Big\}.
	  \end{equation*}
	  Then the same choice $\alpha_a(t)=Mt$ ensures continuity of the second homotopy.
	  Note that as $t\to 1^-$, the critical set converges to
	   \[
	 	\Big\{\frac{1}{N\wt{\ell}}\Big( M\pi +2\pi N\alpha_a+2\pi m\pi\Big),\ m\in\Z\Big\}.
	 \]
	 
	 Note that if $M$ is negative, it suffices to reverse the orientation of all the magnetic knots to keep the same circulation along the cable knot, hence the continuity
	 properties.
\end{proof}

\subsection{Proof of Theorem~\ref{thm:main}}\label{sec:displayed_proof}
	We now proceed to the proof as explained in Section~\ref{sec:strategy}. We recall that $M\wedge N=1$.

	As in \cite{dirac_s3_paper2}, we call critical points the couple of fluxes for which $\cD_{\bA_{\full}}$
	is not bump continuous. We refer the reader to \cite{dirac_s3_paper2}*{Theorems~14-15}.
	
	We have seen the bump continuity and strong resolvent continuity of the homotopies \eqref{eq:loop_even}-\eqref{eq:loop_odd} 
	in the previous section (Lemma~\ref{lem:continuity_along_collapse} and Corollary~\ref{cor:homot}).
	At $\eta\in [0,\eta_0]$ fixed, we write $L_\eta$ the corresponding path of Dirac operators 
	$(\cD_{\bA_{\full}(t,\eta)})_{0\le t\le \tfrac{M}{M+1}}$.
	The continuity gives $\Sf(L_{\eta})=\Sf(L_0)$ (we emphasize that the endpoints of each path 
	are unitary equivalent to the free Dirac operator). There remains to write $\Sf(L_{\eta})$ resp. $\Sf(L_0)$ 
	in terms of the spectral flow of $\gamma_\eta$ resp. $\gamma_0$.

	Along the proof, we will encounter two numbers: the number $\delta(N,M)\in (0,1)$ equal to $\tfrac{1}{2}(\tfrac{M}{N}-\mathds{1}_{(MN\in 2\N)})$ modulo $1$, 
	and $D(N,M)\in\mathbb{N}$ defined as follows:
	\begin{equation}\label{eq:def_D_M_N}
		D(N,M):=\sum_{k=1}^N\Big|\Big\{ \delta(N,M)+j,\ 0\le j< M\Big\}\cap \Big(0, \frac{Mk}{N}\Big)\Big|.
	\end{equation}
	Equivalently, we have $D(N,M)=\sum_{k=1}^N\sum_{j=0}^{M-1}\mathds{1}(j+\delta(N,M)<\tfrac{Mk}{N})$. 
	This combinatorial term can be easily computed in the case $N>M$.
	\begin{lemma}\label{lem:compute_D_N_M}
	Let $N>M>0$ be two coprime integers. 
	If $NM$ is even resp. odd, then $D(N,M)=\tfrac{MN}{2}$ resp. $D(N,M)=\tfrac{(M+1)N}{2}$.
	\end{lemma}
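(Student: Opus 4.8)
The plan is to evaluate the double sum $D(N,M)=\sum_{k=1}^N\sum_{j=0}^{M-1}\mathds{1}\!\big(j+\delta<\tfrac{Mk}{N}\big)$, where $\delta=\delta(N,M)\in(0,1)$, by switching the order of summation and using the coprimality $M\wedge N=1$. For a fixed $j\in\{0,\dots,M-1\}$, the inner condition $j+\delta<\tfrac{Mk}{N}$ is equivalent to $k>\tfrac{N(j+\delta)}{M}$, so the number of valid $k\in\{1,\dots,N\}$ is $N-\big\lfloor \tfrac{N(j+\delta)}{M}\big\rfloor$ (noting $\tfrac{N(j+\delta)}{M}$ is never an integer because $0<\delta<1$ forces the numerator $N(j+\delta)$ to be non-integral when... — more carefully, I will check that $\tfrac{Mk}{N}=j+\delta$ has no solution, which holds since $\delta\notin\tfrac1N\mathbb Z$ by the definition of $\delta(N,M)$ as $\tfrac12(\tfrac MN-\mathds{1}_{MN\in 2\mathbb N})$ reduced mod $1$; one verifies this is not in $\tfrac1N\mathbb Z$ using $M\wedge N=1$ and a parity check). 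Hence
\[
 D(N,M)=\sum_{j=0}^{M-1}\Big(N-\Big\lfloor \tfrac{N(j+\delta)}{M}\Big\rfloor\Big)=MN-\sum_{j=0}^{M-1}\Big\lfloor \tfrac{N(j+\delta)}{M}\Big\rfloor.
\]

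The core of the argument is then to compute $S:=\sum_{j=0}^{M-1}\big\lfloor \tfrac{N(j+\delta)}{M}\big\rfloor$. I would split $\tfrac{N(j+\delta)}{M}=\tfrac{Nj}{M}+\tfrac{N\delta}{M}$ and use the classical fact that as $j$ runs over $\{0,\dots,M-1\}$, the fractional parts $\{\tfrac{Nj}{M}\}$ run exactly over $\{0,\tfrac1M,\dots,\tfrac{M-1}{M}\}$ (a permutation, by $M\wedge N=1$). Writing $\lfloor x+y\rfloor=\lfloor x\rfloor+\lfloor\{x\}+y\rfloor$ and summing, the integer parts $\sum_j\lfloor Nj/M\rfloor$ give $\sum_j\tfrac{Nj}{M}-\sum_j\{Nj/M\}=\tfrac{N(M-1)}{2}-\tfrac{M-1}{2}=\tfrac{(M-1)(N-1)}{2}$, while the correction terms $\sum_{i=0}^{M-1}\lfloor \tfrac iM+\tfrac{N\delta}{M}\rfloor$ count how many $i\in\{0,\dots,M-1\}$ satisfy $i\ge M-\{N\delta\}$ (times the integer part $\lfloor N\delta/M\rfloor$, which I will argue is $0$ since $0<\delta<1$ gives $0<N\delta/M<N/M$, needing a short case check when $N>M$), i.e. it equals $\lceil \{N\delta\}\rceil$ rounded appropriately. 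This reduces everything to pinning down $N\delta\bmod 1$.

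The final step is the explicit evaluation of $N\delta$. Since $\delta\equiv \tfrac12\big(\tfrac MN-\mathds{1}_{MN\text{ even}}\big)\pmod 1$, we get $N\delta\equiv \tfrac M2-\tfrac N2\mathds{1}_{MN\text{ even}}\pmod N$, hence $\{N\delta\}$ depends only on the parities of $M$ and $N$: when $MN$ is even exactly one of $M,N$ is even, and a direct parity computation shows $\{N\delta\}=\tfrac12$ in the even case giving one extra correction term, while when $MN$ is odd $\{N\delta\}$ works out so that... — I will tabulate the (at most) three parity cases. Plugging back, $D(N,M)=MN-\tfrac{(M-1)(N-1)}{2}-(\text{correction})$; the arithmetic then collapses to $\tfrac{MN}{2}$ when $MN$ is even and $\tfrac{(M+1)N}{2}$ when $MN$ is odd, matching the claim. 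The main obstacle I anticipate is not any single estimate but the bookkeeping of the $\lfloor\cdot\rfloor$ identities together with the parity cases for $\delta$ — in particular making sure the reduction $\delta\bmod 1$ versus $N\delta\bmod 1$ is handled consistently and that no boundary term $j+\delta=\tfrac{Mk}{N}$ is ever attained; the hypothesis $N>M$ is used precisely to guarantee $\lfloor N\delta/M\rfloor=0$ and to keep the correction term in $\{0,1\}$.
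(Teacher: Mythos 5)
Your route is genuinely different from the paper's: the paper never manipulates floor sums, but instead reads $D(N,M)$ off a picture on the effective torus of fluxes, locating the relevant lattice points in a rectangle that carries a central symmetry about its center (plus, in the odd case, $N$ extra points on one vertical line), so that ``above the oblique line'' accounts for exactly half of them. Your swap of the order of summation, the identity $D(N,M)=MN-\sum_{j=0}^{M-1}\lfloor N(j+\delta)/M\rfloor$, and the use of the permutation $j\mapsto Nj\bmod M$ (valid since $M\wedge N=1$) to get $\sum_j\lfloor Nj/M\rfloor=\tfrac{(M-1)(N-1)}{2}$ are all correct, and this purely arithmetic route does reach the stated values --- arguably more transparently than the paper's figure-based argument.

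However, there is one concrete false step: you claim $\lfloor N\delta/M\rfloor=0$ because $N>M$, whereas the hypothesis $N>M$ forces the \emph{opposite} in the even case. Indeed for $MN$ even one has $\delta=\tfrac12(1+\tfrac MN)$, so $N\delta/M=\tfrac{N+M}{2M}>1$ exactly when $N>M$; e.g.\ for $(N,M)=(3,2)$, $N\delta/M=\tfrac54$ and the correction sum is $\lfloor\tfrac54\rfloor+\lfloor\tfrac12+\tfrac54\rfloor=2$, not the $0$ your recipe would give (which would yield $D=5$ instead of the correct $D=3$). Your description of the correction as counting $i\ge M-\{N\delta\}$ also conflates $\{N\delta/M\}$ with $\{N\delta\}/M$. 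Both issues disappear if you replace this bookkeeping by Hermite's identity $\sum_{i=0}^{M-1}\lfloor x+\tfrac iM\rfloor=\lfloor Mx\rfloor$ with $x=N\delta/M$: the correction term is simply $\lfloor N\delta\rfloor$, and since $N\delta=\tfrac{N+M}{2}$ (a half-integer, as $N+M$ is odd) in the even case and $N\delta=\tfrac M2$ in the odd case, one gets $D(N,M)=MN-\tfrac{(M-1)(N-1)}{2}-\lfloor N\delta\rfloor$, which evaluates to $\tfrac{MN}{2}$ and $\tfrac{(M+1)N}{2}$ respectively. With that repair the proof is complete and correct.
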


	Writing $\eps=1-\overline{\eps}=\mathds{1}(MN\in 2\mathbb{N})$, we first show the intermediate formula:
	\begin{equation}\label{eq:intermediate}
	\Sf(\overline{\gamma_1})= N\Sf(\overline{\gamma_0})+N\lfloor \tfrac{1}{2}(\eps-\tfrac{M}{N}) \rfloor+D(N,M)
  		-\frac{(M-\overline{\eps})N}{2}.
	\end{equation}
	We will end the proof with:
	\begin{lemma}\label{lem:comb}
	 Let $N,M>0$ be two coprime numbers and $\eps=1-\overline{\eps}$ be the number $\mathds{1}(MN\in 2\mathbb{N})$. 
	 Then we have
	 \[
	  N\lfloor \tfrac{1}{2}(\eps-\tfrac{M}{N}) \rfloor+D(N,M)
  		-\frac{(M-\overline{\eps})N}{2}=0.
	 \]
	\end{lemma}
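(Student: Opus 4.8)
The plan is to establish a closed formula for $D(N,M)$ valid for \emph{all} coprime pairs $N,M>0$ — not only for $N>M$ as in Lemma~\ref{lem:compute_D_N_M} — and then substitute it into the asserted identity. Throughout I write $\delta:=\delta(N,M)\in(0,1)$, I let $x\bmod 1:=x-\lfloor x\rfloor\in[0,1)$, and I recall $\eps=\mathds{1}(MN\in 2\N)$ and $\overline{\eps}=1-\eps$.

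First I would rewrite $D$ as a sum of ceilings. From the second expression for $D(N,M)$ following \eqref{eq:def_D_M_N}, the inner sum over $j$ counts the integers $j$ with $0\le j\le M-1$ and $j<\tfrac{Mk}{N}-\delta$; since $-1<\tfrac{Mk}{N}-\delta<M$ for every $1\le k\le N$, this count equals $\lceil \tfrac{Mk}{N}-\delta\rceil$ (the elementary identity $\#\{j\in\Z_{\ge 0}:j<x\}=\lceil x\rceil$ holding whenever $x>-1$), so $D(N,M)=\sum_{k=1}^N\lceil \tfrac{Mk}{N}-\delta\rceil$. Next I would evaluate this sum using $\lceil y\rceil=y+((-y)\bmod 1)$, which separates
\[
 D(N,M)=\frac{M(N+1)}{2}-N\delta+\sum_{k=1}^N\Big(\big(\delta-\tfrac{Mk}{N}\big)\bmod 1\Big).
\]
Coprimality of $M$ and $N$ makes $k\mapsto \tfrac{Mk}{N}\bmod 1$ a bijection from $\{1,\dots,N\}$ onto $\{0,\tfrac1N,\dots,\tfrac{N-1}{N}\}$, so the last sum equals $\sum_{i=0}^{N-1}\big((\delta-\tfrac iN)\bmod 1\big)$. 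The crucial observation is that $2N\delta\equiv M-\eps N\pmod{2N}$ while $M-\eps N$ is \emph{odd} for every coprime pair — one checks this separately when $MN$ is odd and when $MN$ is even, using in the latter case that exactly one of $M,N$ is even — so that $N\delta=a+\tfrac12$ for a unique integer $a\in\{0,\dots,N-1\}$. Then $(\delta-\tfrac iN)\bmod 1=\tfrac{r+1/2}{N}$ with $r:=(a-i)\bmod N$, and since $r$ ranges over $\{0,\dots,N-1\}$ as $i$ does, the sum collapses to $\tfrac N2$, giving
\[
 D(N,M)=\frac{M(N+1)}{2}-a+\frac N2-\frac12=\frac{MN+M+N-1}{2}-a.
\]

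Finally I would identify $a$ with the floor occurring in the statement. As $M-\eps N$ is odd, neither $\tfrac{M-\eps N}{2N}$ nor $\tfrac12(\eps-\tfrac MN)$ is an integer, so from $\delta=\tfrac{M-\eps N}{2N}-\lfloor\tfrac{M-\eps N}{2N}\rfloor$ together with $\lfloor -x\rfloor=-\lceil x\rceil=-\lfloor x\rfloor-1$ one obtains
\[
 a=N\delta-\tfrac12=\frac{M-\eps N-1}{2}+N\Big\lfloor\tfrac12\big(\eps-\tfrac MN\big)\Big\rfloor+N .
\]
Substituting this back and simplifying (using $M-\overline{\eps}=M-1+\eps$) yields $D(N,M)=\tfrac{(M-\overline{\eps})N}{2}-N\lfloor\tfrac12(\eps-\tfrac MN)\rfloor$, which rearranges to the claim. (Specialising to $N>M$ recovers Lemma~\ref{lem:compute_D_N_M}.)

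The argument is wholly elementary, so the only real obstacle is bookkeeping: one must keep consistent throughout the conventions for $\delta$ as a class modulo $1$, for $x\bmod 1$, and for floors and ceilings of negative non-integers, together with the single structural input that $M-\eps N$ is always odd — it is this parity that fixes the fractional part of $N\delta$ at $\tfrac12$ and makes the middle sum collapse to $\tfrac N2$.
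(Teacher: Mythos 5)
Your proof is correct. I checked each step: the reduction $D(N,M)=\sum_{k=1}^N\lceil \tfrac{Mk}{N}-\delta\rceil$ (valid since $-1<\tfrac{Mk}{N}-\delta<M$), the bijection $k\mapsto \tfrac{Mk}{N}\bmod 1$ coming from coprimality, the parity claim that $M-\eps N$ is always odd (hence $N\delta\in\tfrac12+\Z$ and the fractional-part sum collapses to $\tfrac N2$), and the final identification of $a=N\delta-\tfrac12$ with the floor term; the algebra at the end closes, and spot checks at $(N,M)=(3,2),(2,3),(5,3),(3,5)$ confirm it. This is, however, a genuinely different route from the paper's. The paper only argues combinatorially when $N>M$, via Lemma~\ref{lem:compute_D_N_M} (a central-symmetry count of lattice points in a rectangle on the torus of fluxes), where the floor term is simply $0$ or $-1$; for $M>N$ it does \emph{not} compute $D(N,M)$ at all, but instead invokes the topological fact that the $(N,M)$- and $(M,N)$-torus knots are isotopic together with the intermediate formula \eqref{eq:intermediate} applied to the unknot. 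The authors explicitly remark that ``it would be interesting to have a purely combinatorial proof'' --- which is exactly what you supply. Your argument is uniform in $N$ and $M$, yields the closed formula $D(N,M)=\tfrac{(M-\overline{\eps})N}{2}-N\lfloor\tfrac12(\eps-\tfrac MN)\rfloor$ for every coprime pair (recovering Lemma~\ref{lem:compute_D_N_M} when $N>M$), and disentangles this arithmetic identity from the spectral-flow machinery, whereas the paper's route is shorter given that machinery but leaves the lemma logically intertwined with the proof of the main theorem.
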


    \begin{rem}
       We establish Lemma~\ref{lem:compute_D_N_M} with a combinatorial proof.
       Lemma~\ref{lem:comb} is proven through Lemma~\ref{eq:intermediate} and the fact that the $(N,M)$-torus knot and the $(M,N)$-torus knot define
       the same isotopy class. It would be interesting to have a purely combinatorial proof.
    \end{rem}

	\subsubsection{Proof of \eqref{eq:intermediate}}

	 \paragraph{\emph{Case $MN$ even}}
	 
	 As $t\to 1^-$ at fixed $\eta$, the critical values $\alpha_{c,+}^{(\eta)}>0$ for $\alpha_a$ are given by 
	 the equation
	 \[
	 	\pi(1-\Wr(\gamma_\eta))+2\pi N\alpha_{c,+}^{(\eta)}\equiv 0\!\!\!\mod (2\pi).
	 \]
	 As  $\eta\to 0^+$, we obtain:
	 \begin{equation}\label{eq:crit_eta}
	 	t=1,\ \alpha_{c,+}^{(\eta)}\equiv \frac{1}{2}\Big(M-\frac{1}{N}\Big)\!\!\!\mod \Big(\frac{1}{N}\Big).
	 \end{equation}
	 
	 At $\eta=0$ (for $\gamma_0$), the critical values $\alpha_{c,+}^{(0)}>0$ for $\alpha_a$ 
	 are given by
	 the equation $\pi(1-\Wr(\gamma_0))+2\pi \alpha_{c,+}^{(0)}\equiv 0\mod 2\pi$, whenever $Nt\to k^-$, $k\in\mathbb{N}$, that is:
	  \begin{equation}\label{eq:crit_zero}
	 	 t\in \tfrac{1}{N}\mathbb{N},\ \alpha_{c,+}^{(0)}\equiv \frac{1}{2}\Big(\frac{M}{N}-1\Big)\!\!\!\mod (1).
	  \end{equation}
	We draw the $L_\eta$'s on different copies of the torus of fluxes $\Tf^{M+2}$ corresponding to the different links. 
	Having coupled the fluxes of the auxiliary knots, the effective torus of fluxes 
	for $L_\eta$ is $\rT_{\mathrm{eff}}:=\Tf\times [0,\tfrac{M}{M+1}]_{\per}$
	with running point $(t,\alpha_a)$. It corresponds to the following subset in $\Tf^{M+2}$:
	\[
		\{(t,\tfrac{\alpha_a}{M+1},\cdots,\tfrac{\alpha_a}{M+1})\}\subset \Tf^{M+2}.
	\]
	We change $L_\eta$ into the loop $\Lambda_{\eta}$ defined by:
	$
	u\in [0,2+\tfrac{1}{M}]\mapsto 2\pi(\min(u,1)[\gamma_{\eta}] +\max(u-1,0)[\gamma_{\aux}]).
	$
	We use \cite{dirac_s3_paper2}*{Theorems~19~$\&$~21} (and their proof) to determine the change in the spectral flow
	at the crossing of a critical point $(t,\alpha_a)=(1,\alpha_c^{(\eta)})$ \eqref{eq:crit_eta}. 
	Indeed the spectral flow corresponding to the following small circle is $1$:
	\[
	 \Big(
	 2\pi\Big[(1+\eps \cos(\theta))[\gamma_{\eta}]+(\alpha_c^{(\eta)}+\eps \sin(\theta))[\gamma_{\aux}] 
	 \Big]\Big)_{\theta\in\R/(2\pi\Z)},\ 0<\eps\ll 1.
	\]
	There are $MN$ points in $\{1\}\times[0,M]$ satisfying \eqref{eq:crit_eta}. For $\eta\ll 1$, this gives
	\begin{equation}\label{eq:sf_eta}
	\begin{array}{rcl}
	\Sf(L_\eta)=\Sf(\gamma_{\eta})+MN&=&\Sf(\overline{\gamma_1})+\lfloor \tfrac{1}{2}(1-MN)\rfloor+MN,\\
					  &=&\Sf(\overline{\gamma_1})-\frac{MN}{2}.
	\end{array}
	\end{equation}
	We also draw $L_0$ on $\rT_{\mathrm{eff}}$: now the point $(t,\alpha_a)$ stands for the Dirac operator with magnetic link 
	$2\pi(Nt[\gamma_0]+\tfrac{\alpha_a}{M+1}\gamma_{\aux})$.
	Similarly we change $L_0$ into $\Lambda_0$ defined by 
	$u\in [0,2+\tfrac{1}{M}]\mapsto 2\pi(N\min(u,1)[\gamma_{0}] +\max(u-1,0)[\gamma_{\aux}]).$
	
	Recall: $\delta(N,M)\in(0,1)$ is the number equal to $\tfrac{1}{2}(\tfrac{M}{N}-1)$ modulo $1$.
	Now the number of critical points \eqref{eq:crit_zero} encountered is  $D(N,M)$ (defined in \eqref{eq:def_D_M_N}).
	See Figure~\ref{fig:homotopy}: keep in mind for Section~\ref{sec:proof_lemma_comput} that it corresponds to the number of critical points for
	the $\gamma_0$-loop (gray dots in the figure) \emph{above} the oblique line defined by the first part of $L_0$.
	We obtain:
	\begin{equation}\label{eq:sf_zero}
	\Sf(L_0)=N\Sf(\gamma_0)+D(N,M)=N(\Sf(\overline{\gamma_0})+\lfloor\tfrac{1}{2}(1-\tfrac{M}{N})\rfloor)+D(N,M).
	\end{equation}
	We emphasize that the paths $\Lambda_{\eta}$ and $\Lambda_0$ are made of two parts: $0\le u\le 1$, 
	which defines a first loop with spectral flow $\Sf(\gamma_{\eta})$ resp. $N\Sf(\gamma_0)$, and then $u\ge 1$,
	defining a second loop with trivial spectral flow \cite{dirac_s3_paper2}*{Theorems~14$\&$15} and \cite{dirac_s3_paper1}*{Theorem~29}.

	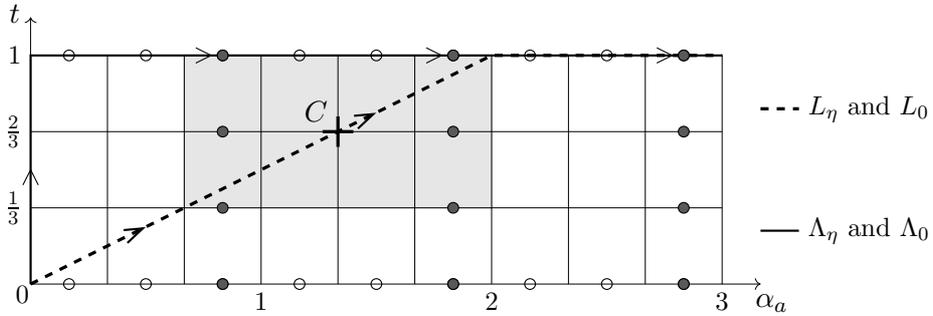
\begin{figure}[!!ht]
\resizebox{1\textwidth}{!}{
		\begin{tikzpicture}
			
			\draw[thick,fill,color=gray!20] (2,1)--(2,3)--(6,3)--(6,1)--cycle;	
			\draw [very thick] (3.8,2) -- (4.2,2);
			\draw [very thick] (4,1.8) -- (4,2.2);
			\node [above left] at (4,2) {$C$};
			
			\draw[very thin] (0,0) -- (9,0) -- (9,3) -- (0,3) -- cycle;
			\draw[very thin] (0,1) -- (9,1);
			\draw[very thin] (0,2) -- (9,2);
			\foreach \x in {1,2,3,4,5,6,7,8}
			{
				\draw[very thin] (\x,0) -- (\x,3);
			}
			\draw[very thick,dashed] (0,0) -- (6,3) -- (9,3);
			\draw [line width= 0.8pt] (0,0) -- (0,3) -- (9,3);

			\draw [very thin,->] (0,3) -- (0,3.5);
			\draw [very thin,->] (9,0) -- (9.5,0);
			\node[above left] at (0,3.3) {$t$};
			\node[below right] at (9.3,0) {$\alpha_a$};
			
			\node[left,scale=0.9] at (0,3) {$1$};
			\node[left,scale=0.9] at (0,2) {$\tfrac{2}{3}$};
			\node[left,scale=0.9] at (0,1) {$\tfrac{1}{3}$};
			
			\node[below,scale=0.9] at (3,0) {$1$};
			\node[below,scale=0.9] at (6,0) {$2$};
			\node[below,scale=0.9] at (9,0) {$3$};
			
			\node[below left,scale=0.9] at (0.1,0.1) {$0$};

			\foreach \x in {0,1,2,3,4,5,6,7,8}
			{
			\draw ({1/2+\x},3) circle [radius = 0.07];
			\draw ({1/2+\x},0) circle [radius = 0.07];
			}
			\foreach \x in {0,3,6}
			{
				\foreach \y in {0,1,2,3}
				{
				\draw [fill=gray!130] ({\x+5/2},{\y}) circle  [radius = 0.07];
				}
			}
			\draw [very thick,dashed] (9.5,2.3)--(10,2.3); 
			\draw [line width= 0.8pt] (9.5,0.7)--(10,0.7); 
			\node [right,scale=0.9] at (10,2.3) {$L_{\eta}$ and $L_0$};
			\node [right,scale=0.9] at (10,0.7) {$\Lambda_{\eta}$ and $\Lambda_0$};
			\draw (-0.1,1.3) -- (0,1.5) -- (0.1,1.3);
			\draw (2.15,3.1) -- (2.35,3) -- (2.15,2.9); 
			\draw (5.15,3.1) -- (5.35,3) -- (5.15,2.9); 
			\draw (8.15,3.1) -- (8.35,3) -- (8.15,2.9); 
			\draw[thick] (1.21,0.708) -- (1.46,0.73) -- (1.26,0.54);
			\draw[thick] (4.21,2.208) -- (4.46,2.23) -- (4.26,2.04);
		\end{tikzpicture}
		}
		\caption{The loops on the effective torus for $(N,M)=(3,2)$.}
		\label{fig:homotopy}
	\end{figure}
		In Figure~\ref{fig:homotopy}, we circled the critical points for the $\gamma_{\eta}$- and $\gamma_0$-loops 
		(and not all the critical points for the full homotopy).
		On $\alpha=0\sim 1$, there are $6$ critical points for the $(3,2)$-cable knot (displayed here for $\eta=0^+$), and 
		the critical points for the $\gamma_0$-loop are given by the grayed circles. To deform $L_\eta$ into $\Lambda_\eta$,
		or $L_0$ into $\Lambda_0$, we need to pass over $3\times2=6$ respectively $D(3,2)=3$ critical points.
		The gray area denotes the rectangular subset in which there is a central symmetry of the $\gamma_0$-critical points
		with respect to its center $C$.

	\medskip
	
	 \paragraph{\emph{Case $MN$ odd}}
	Keeping $\pi [\partial D_+(s_{M+2})]$ fixed, we do a similar study. 
	First, observe that for each $t$, the circulation of $\bA_{\aux}$ (with the additional term) along $\gamma_{\eta}$ is
	$N(\pi-2\pi\alpha_a)$, where $N$ is odd. We obtain the following critical values:
	\[
	\left\{
	\begin{array}{cl}
	\mathrm{for}\ \gamma_{\eta} & t=1,\ \alpha_{c,-}^{(\eta)}\equiv \frac{1}{2N}\Wr(\gamma_{\eta})\!\!\!\mod (\tfrac{1}{N}),\\
	\mathrm{for}\ \gamma_0 & t\in\tfrac{1}{N}\N,\ \alpha_{c,-}^{(0)}\equiv \frac{1}{2}\Wr(\gamma_{0})\!\!\!\mod (1).
	\end{array}
	\right.
	\]
	Let $\wt{L}_\eta$ and $\wt{L}_0$ denote the same loops but with the additional fixed term $\pi [\partial D_+(s_{M+2})]$.
	We do the same deformation as before. We obtain similar formulas to \eqref{eq:sf_eta}, \eqref{eq:sf_zero}
	except that $\Sf(\gamma_{\eta})$ and $N\Sf(\gamma_0)$ are replaced by the spectral flows corresponding to the following loops:
	\[
	\left\{
		\begin{array}{l}
			t\in[0,1]\mapsto 2\pi t[\gamma_{\eta}]+\pi [\partial D_+(s_{M+2})],\\
			t\in[0,1]\mapsto 2N\pi t[\gamma_{0}]+\pi [\partial D_+(s_{M+2})].
		\end{array}
	\right.
	\]
	Using \cite{dirac_s3_paper2}*{Theorem~21}, we get:
	\[
	\begin{array}{rcl}
		\Sf(\wt{L}_\eta)&=&\Sf(\overline{\gamma_{\eta}})+\lfloor \tfrac{1}{2}(1-N-MN)\rfloor+MN,\\
				&=&\Sf(\overline{\gamma_{\eta}})-\frac{(M-1)N}{2},\\
		\Sf(\wt{L}_0)&=&N(\Sf(\overline{\gamma}_0)+\lfloor -\tfrac{M}{2N}\rfloor)+D(N,M).
	\end{array}
	\]

	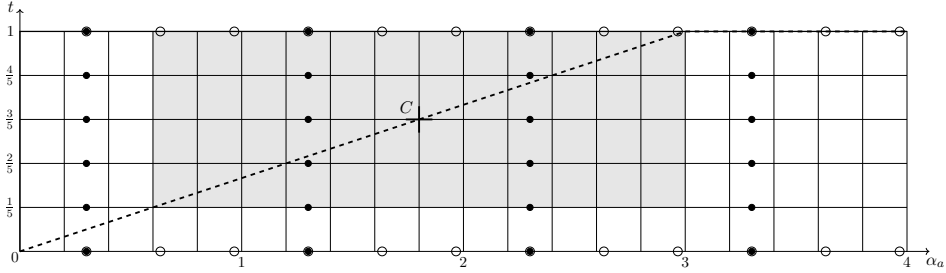
\begin{figure}[!!ht]
\resizebox{1\textwidth}{!}{
		\begin{tikzpicture}
			
			\draw[thick,fill,color=gray!20] (3,1)--(3,5)--(15,5)--(15,1)--cycle;		
			
			\draw[very thin] (0,0) -- (20,0) -- (20,5) -- (0,5) -- cycle;
			\foreach \x in {1,2,3,4}
			{
				\draw[very thin] (0,\x) -- (20,\x);
			}
			\foreach \x in {1,2,3,4,5,6,7,8,9,10,11,12,13,14,15,16,17,18,19}
			{
				\draw[very thin] (\x,0) -- (\x,5);
			}
			\draw[very thick,dashed] (0,0) -- (15,5) -- (20,5);
			\draw [line width= 0.8pt] (0,0) -- (0,5) -- (20,5);

			\draw [very thin,->] (0,5) -- (0,5.5);
			\draw [very thin,->] (20,0) -- (20.5,0);
			\node[above left] at (0,5.3) {$t$};
			\node[below right] at (20.3,0) {$\alpha_a$};
			
			\node[left,scale=0.9] at (0,5) {$1$};
			\node[left,scale=0.9] at (0,4) {$\tfrac{4}{5}$};
			\node[left,scale=0.9] at (0,3) {$\tfrac{3}{5}$};
			\node[left,scale=0.9] at (0,2) {$\tfrac{2}{5}$};
			\node[left,scale=0.9] at (0,1) {$\tfrac{1}{5}$};
			
			\node[below,scale=0.9] at (5,0) {$1$};
			\node[below,scale=0.9] at (10,0) {$2$};
			\node[below,scale=0.9] at (15,0) {$3$};
			\node[below,scale=0.9] at (20,0) {$4$};
			
			\node[below left,scale=0.9] at (0.1,0.1) {$0$};

			\foreach \x in {0,1,2,3,4,5,6,7,8,9,10,11}
			{
			\draw ({3/2+5/3*\x},5) circle [radius = 0.1];
			\draw ({3/2+5/3*\x},0) circle [radius = 0.1];
			}
			\foreach \x in {0,5,10,15}
			{
				\foreach \y in {0,1,2,3,4,5}
				{
				\draw [fill] ({\x+3/2},{\y}) circle  [radius = 0.07];
				}
			}
			
			\node[above left,scale=1] at (9,3) {$C$};
			\draw[very thick] (8.7,3)--(9.3,3);
			\draw[very thick] (9,2.7)--(9,3.3);
			\end{tikzpicture}
		}
		\caption{The loops on the effective torus for $(N,M)=(5,3)$.}
		\label{fig:homotopy_2}
	\end{figure}

\subsubsection{Proof of Lemma~\ref{lem:compute_D_N_M}}\label{sec:proof_lemma_comput}
	In the proof of the main theorem, we have seen that $D(N,M)$ was the number of critical points for
	the $\gamma_0$-loop (in gray in Figure~\ref{fig:homotopy}) \emph{above} the oblique line defined by the first part of $L_0$.
	Assume $N>M$.
	\medskip
	\paragraph{\textit{Case $MN$ even}}
	We have $\delta(N,M)=\tfrac{1}{2}(1+\tfrac{M}{N})$, and
	these critical points are the points with coordinates $(\alpha_a,\alpha)=(j+\tfrac{1}{2}(1+\tfrac{M}{N}),\tfrac{k}{N})$, $j,k\ge 0$
	in the effective torus. The key observations are that 
	\begin{itemize}
		\item all the critical points above the oblique line lie in
		the rectangular subset $[\tfrac{M}{N},M]\times[\tfrac{1}{N},1]$,
		\item in this rectangular subset of $(\alpha_a,t)$'s, there are in total $MN$ critical points 
		above and below the oblique line,
		\item there is a central symmetry of these critical points with respect to the center of the rectangle 
		$C=(\tfrac{M}{2}(1+\tfrac{1}{N}),\tfrac{1}{2}(1+\tfrac{1}{N}))$.
	\end{itemize}
	This gives immediately $D(N,M)=\tfrac{MN}{2}$.
	The last observation can be easily checked: shifting everything by $(\tfrac{M}{N},\tfrac{1}{N})$
	so that the rectangle becomes $[0,\tfrac{M-1}{N}]\times[0,\tfrac{1}{N}]$, the central symmetry with respect to the (shifted)
	center maps $(j+\tfrac{1}{2}(1-\tfrac{M}{N}),\tfrac{i}{N})$, $i=k-1$ into:
	\[
	\big(M(1-\tfrac{1}{N}),1-\tfrac{1}{N}\big)-\big(j+\tfrac{1}{2}(1-\tfrac{M}{N}),\tfrac{i}{N}\big)
		=\big((M-1)-j +\tfrac{1}{2}(1-\tfrac{M}{N}),1+\tfrac{1-i}{N} \big).
	\]
	\paragraph{\textit{Case $MN$ odd}}
	There holds $\delta(N,M)=\tfrac{M}{2N}$. We refer the reader to Figure~\ref{fig:homotopy_2}. 
	We can split the critical points for the $\gamma_0$-loop above the oblique line into two sets: first the $N$ points
	corresponding to $(\alpha_a,t)=(\tfrac{M}{2N},\tfrac{k}{N}),\ 1\le k\le N$, and the remainder. As for the case $MN$ even,
	the remainder set lies in the rectangular subset $[\tfrac{M}{N},M]\times[\tfrac{1}{N},1]$ (in gray in Figure~\ref{fig:homotopy_2}).
	In this rectangular subset there are in total $(M-1)N$ critical points with the same central symmetry with respect to the center 
	$(\alpha_a,t)=(\tfrac{M}{2}(1+\tfrac{1}{N}),\tfrac{1}{2}(1+\tfrac{1}{N}))$. We obtain:
	\[
	D(N,M)=N+\frac{(M-1)N}{2}=\frac{(M+1)N}{2}.
	\]
	
	\subsubsection{Proof of Lemma~\ref{lem:comb}}
	Using Lemma~\ref{lem:compute_D_N_M}, the result is obvious in the case $N>M$. Indeed in that case
	the following holds for $MN$ even and $MN$ odd respectively:
	\[
	\lfloor \tfrac{1}{2}(\eps(N,M)-\tfrac{M}{N}) \rfloor=0\quad\&\quad \lfloor \tfrac{1}{2}(\eps(N,M)-\tfrac{M}{N}) \rfloor=-1.
	\]
	
	Let us now assume $M>N$.
	Recall Remark~\ref{rem:on_torus_knots}: the $(M,N)$-torus knot and $(N,M)$-torus knot define the same isotopy class.
	Recall also $\Sf(\mathrm{unknot})=0$ \cite{dirac_s3_paper1}*{Theorem~29}. 
	Using \eqref{eq:intermediate} for $\gamma_0$ the circle and $\gamma_1$ the $(N,M)$-torus knot, which is the $(M,N)$-torus knot,
	we obtain:
	\[
	   \Sf(\overline{\gamma}_1)=0=N\lfloor \tfrac{1}{2}(\eps-\tfrac{M}{N}) \rfloor+D(N,M)
  		-\frac{(M-\overline{\eps})N}{2}.
	\]

\section{The Biot and Savart gauge}\label{sec:BS_BS}

Let $\bg:\T_\ell\to \R^3$ be a smooth simple curve in $\R^3$. 
We study the behavior of the vector fields -- or the one-form by duality -- given by the Biot and Savart law \eqref{eq:BS} (for unit current or flux).
For short, we write $\bA$ instead of $\bA_{\bg}$. First, we study this gauge choice and relate it to the singular gauges
considered in \cite{dirac_s3_paper1,dirac_s3_paper2,dirac_s3_paper3}. Then we give in Theorem~\ref{thm:behavior_BS}, the behavior of $\bA$ in the vicinity of $\bg$.

\subsection{Gauge choice}\label{sec:gauge_choice}
In \cite{dirac_s3_paper1,dirac_s3_paper2,dirac_s3_paper3}, we have studied the Dirac operator 
with a magnetic knot in a singular gauge, defined by a Seifert surface
of the associated knot and the flux $0\le 2\pi\alpha\le 2\pi$. This singular gauge imposes the phase jump $e^{-2i\pi\alpha}$ 
across $S$, and the Dirac operator acts like the free Dirac operator off the surface. 
In particular when the flux $2\pi\alpha\equiv 0\mod 2\pi$, then the Dirac operator coincides with the free Dirac operator.

Let us now consider the smooth gauge $\bA$ of \eqref{eq:BS}, or more precisely $2\pi\bA$.
The corresponding operator is unitarily equivalent to the free Dirac operator,
hence we necessarily have:
$
2\pi\bA=\nabla^{\R^3}\phi,
$
where $\phi$ is a phase function defined on $\R^3\setminus \bg$. Let us determine $\phi$.

With Gauss' formula for the linking number \eqref{eq:link} in mind, it is easy to figure out the class of such functions. 
For $\bx_0,\bx_1\in \R^3\setminus \bg$, let $c:[0,1]\to \R^3\setminus \bg$ be a differentiable path connecting $\bx_0$ and
$\bx_1$ (with $c(0)=\bx_0$ and $c(1)=\bx_1$). We define the phase $\phi(\bx_1,\bx_0)\in \R/(2\pi\Z)$ by the formula:
\begin{equation*}
	\phi(\bx_1,\bx_0):=2\pi\dint_{c}\cip{\bA}{\d \mathbf{s}}\!\!\mod 2\pi=2\pi(c;\bA^\flat)\!\!\mod 2\pi, 
\end{equation*}
where $(c;\bA^\flat)$ denotes the integration of $\bA^\flat\in\Omega^1(\R^3\setminus\bg)$ along $c$.
As the notation suggests $\phi(\bx_1,\bx_0)$ does not depend on the choice of $c$ connecting the two endpoints, 
and the following lemma holds.
\begin{lemma}
	Let $\bx_0,\bx_1,\bx_2\in \R^3\setminus \bg$. Then $\mathrm{exp}(i\phi(\bx_1,\bx_0))$ does not depend on $c$ 
	that is it is a homotopy invariant (with fixed endpoints). Furthermore, we have:
	\begin{equation}\label{eq:linear}
		\mathrm{exp}\big(i\phi(\bx_2,\bx_0)\big)=\mathrm{exp}\big(i\phi(\bx_2,\bx_1)+i\phi(\bx_1,\bx_0)\big).
	\end{equation}
\end{lemma}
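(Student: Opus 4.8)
The plan is to verify two things about the phase $\phi(\bx_1,\bx_0)$: first, that $\exp(i\phi(\bx_1,\bx_0))$ is independent of the connecting path $c$ (equivalently, invariant under homotopy with fixed endpoints); and second, the cocycle identity \eqref{eq:linear}. The key point driving everything is that $2\pi\bA_{\bg}$ is a closed one-form on $\R^3\setminus\bg$ whose periods over closed loops are integer multiples of $2\pi$: indeed, for any closed loop $c$ in $\R^3\setminus\bg$, Gauss' linking number formula \eqref{eq:link} (with one curve taken to be $c$ and the other $\bg$, the Biot and Savart field of $\bg$ being a magnetic potential whose circulation computes the linking number) gives $\oint_c \langle \bA_{\bg},\d\bs\rangle = \link(c,\bg)\in\Z$. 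So $\oint_c 2\pi\bA_{\bg}^\flat \in 2\pi\Z$.

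For the path-independence, I would argue as follows. Let $c_0,c_1:[0,1]\to\R^3\setminus\bg$ be two differentiable paths from $\bx_0$ to $\bx_1$. Their concatenation $c_0 * \overline{c_1}$ (run $c_0$ forward, then $c_1$ backward) is a closed loop based at $\bx_0$. By the period computation above,
\[
	(c_0;2\pi\bA^\flat) - (c_1;2\pi\bA^\flat) = (c_0 * \overline{c_1};2\pi\bA^\flat) = 2\pi\link(c_0*\overline{c_1},\bg) \in 2\pi\Z,
\]
hence $\phi(\bx_1,\bx_0)$ computed with $c_0$ and with $c_1$ differ by an element of $2\pi\Z$, so $\exp(i\phi(\bx_1,\bx_0))$ is well-defined and, since homotopic loops have the same (integer) linking number with $\bg$, it is a homotopy invariant with fixed endpoints. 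One should note here that $\R^3\setminus\bg$ is path-connected (removing a one-dimensional submanifold from $\R^3$ does not disconnect it), so some connecting path always exists, and that differentiable paths suffice to realize any homotopy class since $\R^3\setminus\bg$ is an open subset of $\R^3$.

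For the cocycle identity, pick differentiable paths $c_{10}$ from $\bx_0$ to $\bx_1$ and $c_{21}$ from $\bx_1$ to $\bx_2$; their concatenation $c_{21}*c_{10}$ is a differentiable (after a smoothing at the junction, or simply working with piecewise-differentiable paths, which the path-integral definition tolerates) path from $\bx_0$ to $\bx_2$. Additivity of the line integral over concatenation gives $(c_{21}*c_{10};2\pi\bA^\flat) = (c_{21};2\pi\bA^\flat) + (c_{10};2\pi\bA^\flat)$, and reducing modulo $2\pi$ together with the already-established path-independence yields
\[
	\exp\big(i\phi(\bx_2,\bx_0)\big) = \exp\big(i\phi(\bx_2,\bx_1) + i\phi(\bx_1,\bx_0)\big).
\]
The only mild technical obstacle is the smoothness at concatenation points, which is a non-issue: the line integral $\int_c \langle\bA,\d\bs\rangle$ makes sense for piecewise-$C^1$ paths and is additive under concatenation, and any piecewise-$C^1$ path is homotopic (rel endpoints, within $\R^3\setminus\bg$) to a $C^1$ one, so nothing is lost. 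The substantive input is entirely the integrality of the periods of $2\pi\bA_{\bg}^\flat$, which is exactly the content of the Gauss linking number formula quoted in \eqref{eq:link}.
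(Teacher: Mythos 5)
Your proof is correct and follows essentially the same route as the paper: both arguments reduce path-independence to the integrality of the periods of $2\pi\bA_{\bg}^\flat$ over closed loops via the Gauss linking number formula \eqref{eq:link}, and obtain \eqref{eq:linear} by concatenation. The only cosmetic difference is at the junction points: the paper keeps everything differentiable by inserting small auxiliary loops in contractible neighborhoods of the endpoints (which contribute nothing since the closed form is locally exact there), whereas you simply allow piecewise-$C^1$ paths; both devices are standard and equivalent.
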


\begin{proof}
	Let $c_1,c_2:[0,1]\to \R^3\setminus\bg$ be two differentiable paths connecting $\bx_0$ and $\bx_1$.
	Let $c_2^{-1}$ be the inverse path of $c_2$:
	\[
		c_2^{-1}:t\in[0,1]\mapsto c_2(1-t).
	\]
	The concatenation $c_1$ followed by $c_2^{-1}$ is a loop of $\R^3\setminus\bg$.  
	In general, we have to insert between $c_1$ and $c_2^{-1}$ a loop $c_3$ with basepoint $\bx_1$ and values 
	in a neighborhood $B_\eps[\bx_1]$ in such a way that $\dot{c}_1(1)\parallel \dot{c}_3(0)$ and $\dot{c}_2^{-1}(0)\parallel\dot{c}_3(1)$. 
	Similarly we add at the beginning a loop $c_4:[0,1]\mapsto B_\eps(\bx_0)$ with basepoint $\bx_0$
	such that $\dot{c}_4(0)\parallel \dot{c}_2^{-1}(1)$ and $\dot{c}_4(1)\parallel \dot{c}_1(0)$.
	The obtained concatenation $c:=[c_2^{-1}][c_3][c_1][c_4]$ is a differentiable loop
	(up to reparametrisation). As $\nabla\cdot 2\pi\bA=0$ on the contractile open sets $B_\eps[\bx_1]$
	and $B_\eps[\bx_0]$, we have:
	\[
	\big(c_3;2\pi\bA^\flat\big)=\big(c_4;2\pi\bA^\flat\big)=0,
	\]
	(as a closed form is locally exact). By Gauss' formula \eqref{eq:link}, we have:
	\[
	\big(c;2\pi\bA^\flat\big)=2\pi\link(c, \bg)\equiv 0\mod\,2\pi.
	\]
	We also have:
	\[
	\big(c;2\pi\bA^\flat\big)=\big(c_1;2\pi\bA^\flat\big)-\big(c_2;2\pi\bA^\flat\big).
	\]
	The formula~\eqref{eq:linear} is obvious, and follows by concatenation of paths.
\end{proof}

Note that the function $\phi(\bx_1,\bx_0)$ is differentiable in $\bx_1$ and $\bx_0$ on $\R^3\setminus\bg$, and that we have:
\[
2\pi\bA=\nabla^{\R^3}\phi(\cdot,\bx_0).
\]
The difference of $\phi(\cdot,\bx_0)$ and $\phi(\cdot,\bx_0')$ is then the constant $\phi(\bx_0,\bx_0')$. 

We obtain a partition of $\R^3\setminus \bg$ defined by the equivalence relation:
\[
\bx_1\sim\bx_0\,\iff\, \phi(\bx_1,\bx_0)\equiv 0\mod\,2\pi. 
\]
We claim that almost all equivalence classes are surfaces
$\phi(\cdot,\bx_0)=\mathrm{Cst}$, and define a foliation of $\R^3\setminus\bg$.
By Gauss' formula \eqref{eq:link}, the function $\phi(\cdot,\bx_0)$ takes value in the whole torus $\T$. Its formula and the identity \eqref{eq:linear},
imply that $\phi$ is a smooth function outside $\gamma$ with gradient $2\pi\bA$. 
By Sard's lemma it follows that almost every $\theta\in\T$ is a regular value
of $\phi(\cdot,\bx_0):\R^3\setminus\gamma\to \T$, and the corresponding level set defines a surface on $\R^3\setminus\gamma$. 
It is easy to see that the boundary of this surface is $\gamma$ and that the gradient $\nabla\phi(\cdot,\bx_0)$ defines an orientation. 
In other words almost every level set is an oriented surface with boundary $\gamma$. 
There remains to show that they are bounded to conclude that they are Seifert surfaces.
The element $\phi(\infty,\bx_0)$ is well-defined since $\R^3\setminus B(0,R)$ is simply connected and $\bA(\bx)$ decays like $|\bx|^{-3}$ 
at infinity. 
So there is only one level set connected to infinity.

If we pick another Seifert surface $S$ for $\bg$ and a flux $0<\alpha<1$, the gauge transformation connecting $2\pi\alpha\bA$ and $2\pi\alpha[S]$ is obtained
by going from $2\pi\alpha\bA$ to the singular gauge  $2\pi\alpha [\{\phi(\cdot,\bx_0)=\theta_0\}]$ with $\mathrm{exp}(2i\pi\alpha\phi(\cdot,\bx_0))$ and then from
$2\pi \alpha [\{\phi(\cdot,\bx_0)=0\}]$ to $2\pi\alpha[S]$ with the known gauge transformation \cite{dirac_s3_paper1} (using \cite{MR916076} for instance).

\subsection{Behavior of the Biot and Savart gauge for a general knot}
  Let $\gamma:\T_\ell\to\R^3$ be an oriented knot with Seifert surface $S$. 
  Let $(\bT,\bS,\bN_S)$ be the Seifert frame associated with $\gamma\subset S$. 
  We recall \cite{dirac_s3_paper1} that this frame
  defines local coordinates in a tubular neighborhood $B_\eps[\bg]\subset\R^3$ of $\gamma$ by:
\begin{equation}\label{eq:tub_coord}
  F:
  \begin{array}{ccc}
    \T_{\ell}\times [0,\eps)\times \R/(2\pi\Z)&\longrightarrow& B_\eps[\bg]\subset\R^3,\\
    (s,\rho,\theta)&\mapsto& \bg(s)+\rho\big[\cos(\theta)\bS(s)+\sin(\theta)\bN_S(s)\big],
  \end{array}
\end{equation}
   where $\rho$ is the distance to the curve and $\gamma(s)$ is the projection onto the curve.
   We emphasize that these tubular coordinates are defined w.r.t. the flat metric of $\R^3$ and not the metric of $\S^3$
   as it was done in \cite{dirac_s3_paper1}. These coordinates are studied in more details in Section~\ref{sec:estimates}.

  For $\bx\in\R^3$ in the vicinity of $\gamma$, 
  we denote $s(\bx)$ by $s_0$, and according to \eqref{eq:tub_coord} we write:
    $\bx=\gamma(s_0)+\rho V$, 
    where $V=V(\bx)\in (\mathrm{T}_{\gamma(s_0)}\gamma)^{\perp}$ is the vector of unit size
    corresponding to $\bx$. The vector $\bT(s_0)\times V(\bx)$ is denoted by $G=G(\bx)$.
  We also introduce
\begin{equation}\label{eq:def_v_w_2_w_3}
	\begin{array}{rclcrcl}
	v_2(\bx)&:=&\cip{\ddot{\bg}(s_0)}{V},&& w_2(\bx)&:=&\cip{\ddot{\bg}(s_0)}{G},\\
	v_3(\bx)&:=&\cip{\bg^{(3)}(s_0)}{V} ,&& w_3(\bx)&:=&\cip{\bg^{(3)}(s_0)}{G}.
	\end{array}
\end{equation}
In Section~\ref{sec:coord}, we show that these four functions do not depend on $\rho$ in the tubular coordinates. 
We define the angle $\Phi$ as the function 
\begin{equation}\label{eq:def_Phi}
 \Phi(\bx):=\frac{w_2}{2}(\rho\log(\rho)-\rho)+\frac{3}{8}v_2w_2\rho^2\log(\rho).
\end{equation}

Writing $\mathrm{Pf}$ for the finite part, $\wt{\bA}(s_0)$ ($s_0\in\T_{\ell}$) denotes the vector
\begin{equation}\label{eq:def_tilde_A}
	 4\pi\wt{\bA}(s_0):=(\log(2)-\tfrac{1}{2})\dot{\gamma}\times\ddot{\gamma}(s_0)+\mathrm{Pf}\,\int_{\T_{\ell}}\dot{\gamma}(s)\times\frac{\gamma(s_0)-\gamma(s)}{|\gamma(s_0)-\gamma(s)|^3}\d s.
\end{equation}

\begin{theorem}\label{thm:behavior_BS}
	Let $\bx=\bg(s_0)+\rho V\in \R^3$ be a point in a tubular neighborhood $B_\eps[\bg]$ of $\bg$.
	As $\rho\to 0$, we have:
	\begin{multline}\label{eq:formule_behavior}
		4\pi (\bA(\bx)-\wt{\bA}(s_0))=2(\tfrac{1}{\rho}G+\nabla\Phi)+\mathcal{O}(\rho)
		=\big(\tfrac{2}{\rho}+v_2\big)G-\log(\rho)\dot{\gamma}\times\ddot{\gamma}(s_0)\\
		-\rho\log(\rho)\Big[\gamma^{(3)}(s_0)\times V+\tfrac{|\ddot{\gamma}(s_0)|^2}{4}G
		+\tfrac{3}{2}v_2\dot{\gamma}\times\ddot{\gamma}(s_0)\Big]
		+\mathcal{O}(\rho).
	\end{multline}
	Furthermore for $\eps$ small enough, the restriction to $B_\eps[\bg]$ of the one-form $2\pi\bA^\flat-\d(\theta+\Phi)$
	can be continuously extended to $\gamma$ and the extension is exact. 
	That is, there exists a function $g:B_\eps[\bg]\to \R$ which satisfies
	\begin{equation}\label{eq:exact}
	 2\pi\bA^\flat-\d (\theta+\Phi)=\d g.
	\end{equation}
\end{theorem}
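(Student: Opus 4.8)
The plan is to establish the asymptotic expansion \eqref{eq:formule_behavior} first, and then to read off from it, together with the structure of the tubular coordinates \eqref{eq:tub_coord}, both the continuous extendability of $2\pi\bA^\flat-\d(\theta+\Phi)$ across $\bg$ and its exactness.

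\emph{Step 1: the expansion.} Fix $\bx = \bg(s_0)+\rho V\in B_\eps[\bg]$ with $\rho$ small, and split the Biot--Savart integral \eqref{eq:BS} as $\int_{\bg} = \int_{|s-s_0|\le\delta} + \int_{|s-s_0|>\delta}$ for a fixed small $\delta$. On the far piece $\bx\mapsto|\bx-\bg(s)|$ is bounded below, so a Taylor expansion in $\rho$ around $\bg(s_0)$ produces, after adding back the divergence subtracted in the local piece, the vector $\wt{\bA}(s_0)$ of \eqref{eq:def_tilde_A} (the finite part being precisely this subtraction, and the constant $\log2-\tfrac12$ coming from the matching at $|s-s_0|=\delta$) plus $\mathcal{O}(\rho)$. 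On the near piece I would use $\bg(s) = \bg(s_0)+u\bT+\tfrac{u^2}{2}\ddot\bg+\tfrac{u^3}{6}\bg^{(3)}+\mathcal{O}(u^4)$ with $u=s-s_0$, together with $|\bx-\bg(s)|^2 = \rho^2+u^2+\mathcal{O}(u^3\rho,u^4)$ (valid since $V\perp\bT$, $|\bT|=1$, $\bT\perp\ddot\bg$), expand the cross product and the denominator, and integrate term by term. The elementary integrals $\int_{-\delta}^{\delta}u^k(\rho^2+u^2)^{-j/2}\,\d u$ then yield, in order, the pole $\tfrac1\rho G$, the logarithmic term $-\log\rho\,\dot\bg\times\ddot\bg$, the bracketed $\rho\log\rho$ term, and the gradient $\nabla\Phi$ matching \eqref{eq:def_Phi} (recalling that $v_2,w_2$ are $\rho$-independent in these coordinates), with all $\delta$-dependence cancelling against the far piece and an honest $\mathcal{O}(\rho)$ remainder. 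This bookkeeping is where the precise coefficients of \eqref{eq:def_Phi}--\eqref{eq:formule_behavior} get pinned down, and I expect it to be the main obstacle.

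\emph{Step 2: continuous extension.} By \eqref{eq:formule_behavior}, $2\pi\bA = 2\pi\wt{\bA}(s_0)+\tfrac1\rho G+\nabla\Phi+\mathcal{O}(\rho)$ on $B_\eps[\bg]\setminus\bg$, so
\[
\big(2\pi\bA^\flat-\d(\theta+\Phi)\big)^\sharp = 2\pi\wt{\bA}(s_0)-\big(\nabla\theta-\tfrac1\rho G\big)+\mathcal{O}(\rho).
\]
I would then compute $\nabla\theta$ in the coordinates \eqref{eq:tub_coord}: using the structure equations $\dot{\bS}=-a\bT+c\bN_S$, $\dot{\bN}_S=-b\bT-c\bS$ of the Seifert frame (cf.\ \cite{dirac_s3_paper1}), the Jacobian of $F$ in the moving frame $(\bT,V,G)$ is lower triangular, and inverting it gives $\nabla\theta = \tfrac1\rho G-\tfrac{c(s_0)}{1-\rho(a\cos\theta+b\sin\theta)}\bT$. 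Hence $\nabla\theta-\tfrac1\rho G$ is bounded near $\bg$ with the \emph{$\theta$-independent} limit $-c(s_0)\bT(s_0)$, so $2\pi\bA^\flat-\d(\theta+\Phi)$ extends to a continuous one-form $\bar\omega$ on $B_\eps[\bg]$, with $\bar\omega^\sharp$ equal to $2\pi\wt{\bA}(s_0)+c(s_0)\bT(s_0)$ along $\bg$. (The $\theta$-independence of this limit is what makes the extension a genuine one-form on $B_\eps[\bg]$ rather than merely on the blow-up.)

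\emph{Step 3: exactness.} For the last assertion I would argue that on $B_\eps[\bg]\setminus\bg$ the form $\omega:=2\pi\bA^\flat-\d(\theta+\Phi)$ is closed ($\nabla\times\bA=0$ off $\bg$; $\d\theta$ and $\d\Phi$ are closed), and that $H^1(B_\eps[\bg]\setminus\bg)$ is generated by a meridian and a longitude. The meridian period of $\omega$ is $2\pi\link(\text{meridian},\bg)-2\pi-0 = 0$; and the longitude $s\mapsto F(s,\rho_0,0)=\bg(s)+\rho_0\bS(s)$ is the push-off of $\bg$ along its Seifert surface, of linking number $0$ with $\bg$, so the corresponding period is $2\pi\cdot 0-0-0 = 0$. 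Therefore $\omega=\d g_0$ for some $g_0\in C^\infty(B_\eps[\bg]\setminus\bg)$; since $\nabla g_0=\bar\omega^\sharp$ is bounded and $\bg$ has codimension $2$, $g_0$ is (globally) Lipschitz on $B_\eps[\bg]\setminus\bg$ and extends across $\bg$. Equivalently, $g(\bx):=\int_{[\bx_0,\bx]}\bar\omega$ is well defined --- path-independence follows from the closedness and the vanishing of all periods of $\omega$, approximating a segment that meets $\bg$ by nearby ones avoiding it --- is $C^1$ on $B_\eps[\bg]$ and satisfies $\d g=\bar\omega$, which is exactly \eqref{eq:exact}.
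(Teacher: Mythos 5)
Your Steps 1 and 2 follow the paper's own route: the near/far splitting of the Biot--Savart integral with Taylor expansion of numerator and denominator is exactly the computation carried out in Appendix~\ref{sec:first_part_proof_thm_behavior}, and your formula $\nabla\theta=\tfrac1\rho G-\tfrac{\tau}{h}\bT$ is the identity the paper uses to extend $2\pi\bA^\flat-\d(\theta+\Phi)$ continuously to $\gamma$ with limit $2\pi\wt{\bA}^\flat+\tau\bT^\flat$. Be aware that your Step~1 is only a plan; the paper's effort is concentrated precisely there, and the identification of the explicit $\log\rho$ and $\rho\log\rho$ terms with $2\nabla\Phi$ further requires the coordinate identities $v_3=\partial_s v_2-\tau\partial_\theta v_2$, $\gamma^{(3)}\times V=-w_3\bT-\kappa_\gamma^2G$, etc.\ of Section~\ref{sec:geom_local_coord}, which you have not written out.

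Your Step 3 is correct but genuinely different in packaging from the paper's. The paper stays on the solid torus $B_\eps[\gamma]$ itself, where $H^1_{dR}\cong\R$ is detected by integration over the core $\gamma$, and kills that single period by combining $\int_\gamma 2\pi\wt{\bA}^\flat=2\pi\Wr(\gamma)$ with the C\u{a}lug\u{a}reanu--White--Fuller identity $\int_0^\ell\tau=-2\pi\Wr(\gamma)$ of \eqref{eq:total_rel_torsion}. You instead work on $B_\eps[\gamma]\setminus\gamma$, compute the meridian period ($2\pi-2\pi-0=0$) and the period over the Seifert-framed longitude ($2\pi\cdot 0$ since the Seifert framing is the zero framing, $\theta\equiv0$ along the push-off, $\Phi$ single-valued), and then push $g_0$ across the codimension-two core using boundedness of the extended form. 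The two arguments are equivalent --- CWF applied to the Seifert framing \emph{is} the statement that your longitude period and the paper's core period vanish simultaneously --- but yours has the advantage of not invoking CWF explicitly, trading it for the standard fact $\link(\gamma+\rho_0\bS,\gamma)=0$, while the paper's is shorter once \eqref{eq:total_rel_torsion} is already on the table (as it is, since the same identity is needed elsewhere). No gap; both routes close.
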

The proof of this theorem is split into two. The first part which consists in proving that 
$4\pi(\bA(\bx)-\wt{\bA}(s_0))$ is equal to the last part of \eqref{eq:formule_behavior}
is postponed to Appendix~\ref{sec:first_part_proof_thm_behavior} (this is essentially a calculation). 
We show in Section~\ref{sec:end_proof_behavior} that it also coincides with $2(\tfrac{1}{\rho}G+\nabla\Phi)$
up to an error $\mathcal{O}_{\rho\to 0}(\rho)$ and prove that $\omega:=2\pi\bA^\flat-\d (\theta+\Phi)$ is exact. 

We can define $g$ the following way:
given $\bx,\bx_0\in B_{\eps}[\gamma]$, and any differentiable path $c:[0,1]\to B_\eps[\bg]$ connecting $\bx_0$ to $\bx$, $g(\bx)=g(\bx,\bx_0)$
is given by:
\[
 g(\bx)=\int_c \omega.
\]

In Lemma~\ref{lem:refinement_BS_behav}, we extend this theorem to the smooth gauge given by cable knots surrounding $\gamma$.

\subsection{Proof of the second part of Theorem~\ref{thm:behavior_BS}}\label{sec:end_proof_behavior}
The main goal of this section is to show that the last part of \eqref{eq:formule_behavior} is equal to
$2(\tfrac{1}{\rho}G+\nabla\Phi)+\mathcal{O}_{\rho\to 0}(\rho)$ and to establish \eqref{eq:exact}.
Before the proof we introduce adapted coordinates (Section~\ref{sec:estimates}), then we write the
geometrical objects in these coordinates (section~\ref{sec:geom_local_coord}),
providing us with formulas used for the proof (Section~\ref{sec:proof_end_thm_behav})

\subsubsection{Coordinates}\label{sec:coord}\label{sec:estimates}
Consider the curve $\gamma:\T_\ell\to\R^3$ with Seifert surface $S$,
and Seifert frame $(\bT,\bS,\bN_S)$ (that is the Darboux frame associated to $\bg$ and $S$).
We write $(\bT,\bN_F,\bB_F)$ its Frenet frame, when it is well-defined and write $\kappa_{\bg}:=|\ddot{\bg}|$ its curvature and $\tau_{\bg}$ its torsion
(when it is defined). Its geodesic and normal curvatures and its relative torsion relative to the Seifert surface will be respectively written $\kappa_g,\kappa_n$ and $\tau$,
see \cite[Chapter~7]{Spivakvol4}.

We consider in $B_\eps[\bg]\subset\R^3$ the chart:
\[
F:(s,\rho,\theta)\mapsto \bg(s)+\rho\big[\cos(\theta)\bS(s)+\sin(\theta)\bN_S(s)\big].
\]
As in \cite{dirac_s3_paper1}*{Sec.~3.2.1}, writing:
\[
	h(s,\rho,\theta):=1-\rho\big[\kappa_g(s)\cos(\theta)+\kappa_n(s)\sin(\theta)\big],
\]
we have:
\begin{equation*}
F_*(\partial_s)(\bx)=h\bT+\tau G(\bx),\quad F_*(\partial_{\rho})(\bx)=V(\bx)\quad\&\quad F_*(\partial_{\theta})(\bx)=\rho G(\bx).
\end{equation*}
We will see in the next section that $h=1-\rho v_2$. This implies:
\begin{equation*}
	\d s=\frac{1}{h}\bT^{\flat},\quad\d\rho=V^{\flat}\quad\&
			\quad \d\theta=-\frac{\tau}{\rho h}\bT^{\flat}+\frac{1}{\rho}G^{\flat},
\end{equation*}
and
\begin{equation}\label{eq:nabla_local_coord}
 \nabla=\nabla^{\R^3}=\big(1+\frac{\rho v_2}{h}\big)\bT(\partial_s-\tau\partial_{\theta})+V\partial_{\rho}+G\frac{\partial_{\theta}}{\rho}.
\end{equation}

\subsubsection{Calculation of the geometrical objects in local coordinates}\label{sec:geom_local_coord}
In this section, we write the terms appearing in Theorem~\ref{thm:behavior_BS} in terms of $(s,\rho,\theta)$.
In particular we show that the functions defined in \eqref{eq:def_v_w_2_w_3} only depend on $s,\theta$.
Let $\bx$ in the vicinity of $\gamma$: we write $\bx:=\gamma(s_0)+\rho V(\bx)$.

First, decomposing $\ddot{\gamma}$ w.r.t. $\bS_S$ and $\bN_S$, we have:
\[
\kappa_{\bg}=\sqrt{\kappa_g^2+\kappa_n^2}.
\]
Furthermore, by differentiating $0=\cip{\ddot{\bg}}{V}$ with $V=\bS$ and $V=\bN_S$ we obtain:
\[
\cip{\ddot{\bg}}{\bS}=\kappa_g\quad \&\quad \cip{\ddot{\bg}}{\bN_S}=\kappa_n,
\]
and by further differentiating, we obtain:
\[
\cip{\bg^{(3)}}{\bS}=\dot{\kappa}_g-\tau\kappa_n  \quad \&\quad \cip{\bg^{(3)}}{\bN_S}=\dot{\kappa}_n+\tau\kappa_g.
\]
By differentiating $0=\cip{\ddot{\bg}}{\dot{\bg}}$, we have $\cip{\bg^{(3)}}{\dot{\bg}}=-|\ddot{\bg}|^2=-\kappa_{\bg}^2$.

Decomposing $\ddot{\gamma}$ w.r.t. $V$ and $G$ we have:
\begin{equation}\label{eq:formule_v_w_2}
	\begin{array}{rcl}
	v_2(\bx)&=&\kappa_g(s(\bx))\cos(\theta(\bx))+\kappa_n(s(\bx))\sin(\theta(\bx))\\
	w_2(\bx)&=&-\kappa_g(s(\bx))\sin(\theta(\bx))+\kappa_n(s(\bx))\cos(\theta(\bx)),
	\end{array}
\end{equation}
and
\begin{equation*}
	\dot{\gamma}\times\ddot{\gamma}(s_0)
		=[-\kappa_n\cos(\theta)+\kappa_g\sin(\theta)]V+[\kappa_n\sin(\theta)+\kappa_g\cos(\theta)]G,
\end{equation*}
where $\ddot{\bg}(s_0)$ (rather its parallel transportation along the $\gamma$-orthogonal geodesic
$(\gamma(s_0)+t\rho V(\bx))_{0\le t\le 1}$) is seen as an element of $\rT_{\bx}\R^3$. In particular:
\begin{equation}\label{eq:formule_beta}
	\dot{\gamma}\times\ddot{\gamma}(s_0)=-w_2 V(\bx)+v_2 G(\bx).
\end{equation}

Decomposing $\gamma^{(3)}(s_0)$ w.r.t. $V$ and $G$ we obtain
\begin{equation}\label{eq:formule_v_w_3}
  \begin{array}{rcl}
    v_3(\bx)&=&\cos(\theta(\bx))(\dot{\kappa}_g-\tau\kappa_n)(s(\bx))+\sin(\theta(\bx))(\dot{\kappa}_n+\tau\kappa_g)(s(\bx)),\\
		&=&\partial_s v_2(\bx)-\tau(s(\bx)) \partial_{\theta}v_2(\bx),\\
    w_3(\bx)&=&\cos(\theta(\bx))(\dot{\kappa}_n+\tau\kappa_g)(s(\bx))-\sin(\theta(\bx))(\dot{\kappa}_g-\tau\kappa_n)(s(\bx)),\\
		&=&\partial_s w_2(\bx)-\tau\partial_{\theta}w_2(\bx)=-\cip{\bT(s(\bx))}{\gamma^{(3)}(s(\bx))\times V(\bx)}.
  \end{array}
\end{equation}

The full decomposition of $\gamma^{(3)}(s_0)\times V$ is:
\begin{equation}\label{eq:formule_L}
\gamma^{(3)}(s_0)\times V=-w_3\bT-\kappa_{\bg}^2G=-(\partial_s w_2-\tau \partial_\theta w_2)\bT-\kappa_{\bg}^2G.
\end{equation}

\subsubsection{End of the proof of Theorem~\ref{thm:behavior_BS}}\label{sec:proof_end_thm_behav}

Using \eqref{eq:formule_v_w_2}, \eqref{eq:formule_beta}, \eqref{eq:formule_v_w_3} and \eqref{eq:formule_L},
and writing $\kappa_{\gamma}^2=v_2^2+w_2^2$, we obtain:
\begin{multline*}
W:=v_2G-\log(\rho)\dot{\gamma}\times\ddot{\gamma}(s_0)
		-\rho\log(\rho)\Big[\gamma^{(3)}(s_0)\times V+\tfrac{|\ddot{\gamma}_0|^2}{4}G
		+\tfrac{3}{2}v_2\dot{\gamma}\times\ddot{\gamma}(s_0)\Big]\\
=\rho\log(\rho)w_3\bT+(w_2\log(\rho)-\tfrac{3}{2}v_2w_2)V+(v_2-v_2\log(\rho)+\tfrac{3}{4}(v_2^2-w_2^2))G.
\end{multline*}
It follows from \eqref{eq:nabla_local_coord} that the vector $W$ is equal to:
\[
W=2\nabla \Phi(s,\rho,\theta)+\underset{\rho\to 0}{\mathcal{O}}(\rho).
\]

To end the proof we show that the one-form $\omega:=2\pi\bA^\flat-\d(\theta+\Phi)$ can be continuously extended to $\gamma$,
and then that it is exact on $B_{\eps}[\gamma]$.
This domain is topologically a torus and homotopic to $\gamma$. 
The first (real) de Rham co-homology group of $\gamma\simeq \S^1$
is $\R$ and the class is characterized by the integration of the form along $\gamma$. 
So a one-form $\omega_0$ on $B_\eps[\gamma]$ is exact if and only if $\omega_0$
is closed and satisfies $\int_{\gamma}\omega_0=0$.

First, from the formula of $\bA$ that we established, we have the expansion $\omega(\bx)=2\pi\widetilde{\bA}^{\flat}(s(\bx))+\mathcal{O}_{\rho\to 0}(\rho)$, and $\omega$
can be continuously extended to $\gamma$. 

We know from Section~\ref{sec:gauge_choice} that $2\pi\bA$ is a gradient on $\R^3\setminus\gamma$: there holds 
$2\pi\bA^\flat=\d \phi(\cdot,\bx_0)$ where $\bx_0\in \R^3\setminus\gamma$ (and we choose $\bx_0\in B_{\eps}[\gamma]$). 
So we have $\omega=\d \big(\phi(\cdot,\bx_0)-\theta-\Phi\big)$ on $B_{\eps}[\gamma]$. 
Since it can be continuously extended to $\gamma$, we get that for all contractile loops $c$ on 
$B_{\eps}[\gamma]$, we have $\int_c \omega=0$. In other words $\d\omega=0$.
Let us show $\int_{\gamma}\omega=0$. In local coordinates, we can write:
\begin{equation*}
	2\pi\bA^\flat(\bx)=\big(2\pi\wt{\bA}^\flat(\gamma(s_0))+\tfrac{\tau}{h}\bT^{\flat}\big)+\d(\theta+\Phi)+\underset{\rho\to 0}{\mathcal{O}}(\rho),
\end{equation*}
where 
\[
\d\theta=\frac{1}{\rho}G^{\flat}-\frac{\tau}{h}\bT^{\flat}=\frac{1}{\rho}G^{\flat}-\tau\bT^{\flat}+\underset{\rho\to 0}{\mathcal{O}}(\rho).
\]

Recall \eqref{eq:total_rel_torsion}:
\begin{equation*}
	2\pi\mathrm{Wr}(\gamma)=2\pi\int_{0}^{\ell}\bap(\gamma(s))\d s
					 =-\int_0^{\ell}\tau(s)\d s,
\end{equation*}
where $\bap(\gamma(s))$ denotes $\cip{\bT(\gamma(s))}{\wt{\bA}(\gamma(s))}$. Hence we have $\int_{\gamma}\omega=0$,
which ends the proof.

\subsection{The Biot and Savart gauge for cable knots}\label{BS_cable}

	\subsubsection{Collapse of a cable knot}
	Let $\overline{\gamma_1}$ be a cable knot.
	Let us consider its adapted cable representations $(\gamma_\eta,\gamma_0)$ defined in Section~\ref{sec:intro_adapted_cable_rep},
	with vector map $U:\T_{N\ell}\to\R^3$. We have $\gamma_{\eta}(s)=\gamma_0(s)+\eta U(s)$ for $0<\eta<\eta_0$ and $s\in \T_{N\ell}$.
	For each $\eta$, we write $c_{\eta}$ the arclength parametrisation of $\gamma_{\eta}$, 
	and $u=u_{\eta}$ denotes its arclength parameter.

	The knot $\gamma_{\eta}$ weakly converges to $\gamma_0$ in the following sense. For each interval $I\subset \T_{N\ell}$ with length 
	$0<|I|<\ell$, the local branch $(\gamma_{\eta})\restriction{I}$ converges strongly to the associated branch $(\gamma_0)\restriction{J}$
	where $J$ is the image of $I$ through the congruent map $\R/(N\ell\Z)\to \R/(\ell\Z)$ (in the metric \cite{dirac_s3_paper1}*{Section~4.1} or as a $C^\infty(I,\R^3)$-map).

	For analytical purposes, we choose special cable representations $(\gamma_\eta,\gamma_0)$ that we now describe. With this choice,
	$\Wr(\gamma_\eta)$ has a simple expansion as $\eta\to 0^+$ (which is not the case for other cable representations).

	\subsubsection{Adapted realization of cable knots}\label{sec:intro_adapted_cable_rep}
	We \emph{choose} a (smooth) realization $\gamma_0:\T_{\ell}\to \R^3$ of $\overline{\gamma_0}$ which satisfies
	\begin{equation}\label{eq:cond_gamma_0}
	\Wr(\gamma_0)=\frac{M}{N}.
	\end{equation}
	This choice is always possible, see \cite{dirac_s3_paper2}*{Appendix~A}. Let $S$ be a smooth Seifert surface for $\gamma_0$,
	 $(\bT:=\dot{\gamma}_0,\bS,\bN)$ the associated Darboux frame, and $\tau:\T_\ell\to \R$ be the relative torsion of $\gamma_0$ w.r.t. $S$ 
	 (in the flat metric), see \cite{Spivakvol4}*{Chapter~7}. 
	As shown in \cite{dirac_s3_paper2}*{Appendix~A}, the following two facts hold.
	
	\smallskip
	
	\noindent 1. An application of the C\u{a}lug\u{a}reanu-White-Fuller Theorem  \cite{Calu61,White69}
	gives
	\begin{equation}\label{eq:total_rel_torsion}
	I_\tau(\gamma_0):=\int_0^{\ell}\tau(s)\d s=-2\pi\Wr(\gamma_0).
	\end{equation}
	
	\noindent 2. Let $U$ be the parallel transportation of $\bS(0)$ along $\gamma_0$
	w.r.t. the canonical connection on the normal bundle $\cN\gamma_0$: for $s_1\in (0,\ell)$, the differential of the measured angle $\measuredangle \big(\bS(s),U(s)\big)$
	in  $(\T_{s_1}\gamma_0)^\perp$ (oriented by $\dot{\gamma}_0(s_1)$) is $-\tau(s)\d s$.

	\smallskip
	
	After one turn, $U$ has rotated from its initial position: we have 
	$U(s+\ell)=R_{s,2\pi M/N}U(s)$, where $R_{s,\theta}$ is the rotation of positive axis $\dot{\gamma}_0(s)$
	and angle $\theta$. It comes back at its initial position after $N$ turns.
	
	For $0\le s\le N\ell$, we denote by $V(s)$ the vector $\dot{\gamma}_0(s)\times U(s)$, and by
	$I_0(s)$ the integrated torsion of the curve $\gamma_0$ w.r.t. the Seifert surface $S$:
	\begin{equation}\label{eq:def_I_0}
	 I_0(s):=\int_0^{s}\tau(s')\d s'.
	 \end{equation}
	 We can write $U$ and $V$ in terms of $I_0$:
	\begin{equation}\label{eq:def_U_V}
		\begin{array}{rcl}
			U(s)&=&\cos(I(s))\bS(s)-\sin(I(s))\bN(s),\\
			V(s)&=&\sin(I(s))\bN(s)+\cos(I(s))\bN(s).
		\end{array}
	\end{equation}
	
	We \emph{choose} this $U$ in $\gamma_{\eta}$ \eqref{eq:def_gamma_eta}:
	it is a cable representation of $\gamma_1$. In other words, we will consider a realization:
	\[
		\gamma_{\eta}(s):=\gamma_0([s])+\eta U(s),\ 0\le s\le N\ell,
	\]
	with \eqref{eq:cond_gamma_0} and $U$ given as above (and base-point $\gamma_0(0)+\eta\bS(0)$). 
	We will call such a realization $(\gamma_\eta,\gamma_0)$ an \emph{adapted cable representation} of $\overline{\gamma_1}$. 
	Its nice geometrical properties (studied in Section~\ref{sec:prop_adap_seif_fib}) follow from the fact that $\dot{\gamma}_\eta(s)$ and $\dot{\gamma}_0(s)$ are co-linear.

	\subsubsection{Properties of the adapted realizations}\label{sec:prop_adap_seif_fib}

	Recall that $I_0$ is defined in \eqref{eq:def_I_0}.
	
	\smallskip
	
	\paragraph{\emph{Parallel tangent vectors}}
	Let $m=m(\eta,s),\ 0\le s\le N\ell$ be the function defined by: 
	\begin{equation}\label{eq.def_m}
		m(\eta,s):=h(s,\eta,-I_0(s))=1-\eta(\kappa_g(s)\cos(-I_0(s))+\kappa_n(s)\sin(-I_0(s))).
	\end{equation}
	Then we have:
	\begin{equation*}
		\dot{\gamma}_{\eta}(s)=m(\eta,s)\dot{\gamma}_0(s)\in\rT_{\gamma_0(s)}\gamma_0.
	\end{equation*}
	Thus, up to normalization, the tangent vectors $\dot{\gamma}_{\eta}(s)$ 
	define a vector field in $B_{\eps}[\gamma_0]$ which is precisely
	the extension of $\dot{\gamma}_0$ by parallel transportation (in the flat metric) 
	along $\gamma_0$-orthogonal geodesics.
	
	\smallskip
	
	\paragraph{\emph{Length}}
	From  \eqref{eq:hey_root_guy} and \eqref{eq.def_m} we obtain
	that the $\gamma_{\eta}$'s all have length $N\ell$. 
	
	\smallskip
	\paragraph{\emph{Convergence of tubular coordinates}}
	Moreover, there holds convergence of the tubular coordinates \eqref{eq:tub_coord} of the cable knot $\gamma_{\eta}$
	to those of the base-curve $\gamma_0$ in the following sense. 
	
	In the vicinity of a point $\gamma_0(s)\in\gamma_0$, the cross-section
	$B_\eps[\gamma_0]\cap (\gamma_0(s)+(\dot{\gamma}_0(s))^\perp)$ intersects $N$ times the curve $\gamma_{\eta}$ corresponding
	to $N$ different local branches of the cable knot. 
	
	Each of this branch together with
	the (parallel transported) Seifert frame $(\bT,\bS,\bN)$ defines tubular coordinates \eqref{eq:tub_coord}.
	For any $\eps_1\in (0,\eps)$, the tubular coordinates of any branch converge to that of $\gamma_0$ in
	the space $C^\infty(B_\eps[\gamma_0(s)]\cap (B_{\eps_1}[\gamma_0])^c)$. We have fixed a basepoint for $\gamma_0$
	fixing that of $\gamma_{\eta}$ (as in \cite{dirac_s3_paper1}*{Proposition~21}). 
	Then we used the congruent map $\R/(N\ell\Z)\to \R/(\ell\Z)$ to compare the arclength parameters of $\gamma_{\eta}$ and $\gamma_0$.
	The proof is similar to that of  \cite{dirac_s3_paper1}*{Proposition~21} and left to the reader.
	
	\medskip
	
	\paragraph{\emph{Writhe}}
	We can easily estimate the writhe of $\gamma_{\eta}$.
	\begin{lemma}\label{lem:writh_cabled_knot}
	 Let $(\gamma_{\eta},\gamma_0)$ be an adapted cable representation for $\overline{\gamma_1}$.
	 Then as $\eta\to 0$ we have:
	 \[
	  \Wr(\gamma_{\eta})=N^2\Wr(\gamma_0)+\underset{\eta\to 0}{\mathcal{O}}(\eta).
	 \]
	\end{lemma}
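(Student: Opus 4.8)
The plan is to bypass the Gauss writhe integral \eqref{eq:def_writhe} entirely and use the C\u{a}lug\u{a}reanu--White--Fuller identity, exactly in the form already behind \eqref{eq:total_rel_torsion}: for a smooth simple closed curve $\gamma$ equipped with a unit normal field $\mathbf{m}$ one has $\link(\gamma,\gamma+\eps\mathbf{m})=\Wr(\gamma)+\mathrm{Tw}(\mathbf{m};\gamma)$ for all small $\eps>0$, where $\mathrm{Tw}(\mathbf{m};\gamma)=\tfrac{1}{2\pi}\int\langle\mathbf{m}'(s),\mathbf{t}(s)\times\mathbf{m}(s)\rangle\,\d s$ and $\mathbf{t}$ is the unit tangent. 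I would apply this to $\gamma_\eta$ with the framing $\mathbf{m}=U$. This is legitimate: by the ``parallel tangent vectors'' property of adapted realizations recalled in Section~\ref{sec:prop_adap_seif_fib}, $\dot\gamma_\eta(s)=m(\eta,s)\dot\gamma_0([s])$ with $m(\eta,s)>0$ for $\eta<\eta_0$, so the unit tangent of $\gamma_\eta$ at $s$ is exactly $\bT([s])=\dot\gamma_0([s])$; since $U(s)\perp\dot\gamma_0([s])$ by construction, $U$ is an admissible framing of $\gamma_\eta$, and I set $\gamma_\eta^+:=\gamma_\eta+\eps U$ with $0<\eps\ll\eta$ (embedded and disjoint from $\gamma_\eta$). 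The identity then reduces the lemma to computing a twist and a linking number, and will in fact show the remainder is exactly $0$.

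\emph{The twist vanishes.} Differentiating the formula for $U$ in \eqref{eq:def_U_V} with the Darboux equations $\bS'=-\kappa_g\bT+\tau\bN$, $\bN'=-\kappa_n\bT-\tau\bS$ and $\dot I_0=\tau$ gives $U'(s)=(-\kappa_g\cos I_0+\kappa_n\sin I_0)(s)\,\bT([s])$, i.e.\ $U'(s)$ is colinear with $\bT([s])$ --- this is just the statement that $U$ is parallel along $\gamma_0$ for the normal connection, which depends only on the tangent direction and hence coincides with the normal connection of $\gamma_\eta$. Since $V(s)=\bT([s])\times U(s)$ is orthogonal to $\bT([s])$, the twist density $\langle U'(s),V(s)\rangle$ vanishes identically, so $\mathrm{Tw}(U;\gamma_\eta)=0$. (As a byproduct this recovers $\dot\gamma_\eta=\dot\gamma_0+\eps U'$ colinear with $\bT$ and the formula \eqref{eq.def_m} for $m$.)

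\emph{The linking number.} Because $|U|\equiv1$, the curve $\gamma_\eta$ lies \emph{exactly} on the torus $\partial\overline{B}_\eta[\gamma_0]$; in the Seifert-frame coordinates \eqref{eq:tub_coord} it is the embedded closed curve $s\mapsto(\sigma,\rho,\theta)=([s],\eta,-I_0(s))$. As $s$ runs once over $\T_{N\ell}$ the longitudinal coordinate $\sigma$ winds $N$ times, and $\theta=-I_0(s)$ winds $M$ times, since $I_0(N\ell)=N\!\int_0^\ell\!\tau=-2\pi N\Wr(\gamma_0)=-2\pi M$ by \eqref{eq:total_rel_torsion} and the adapted choice \eqref{eq:cond_gamma_0} (one also checks directly, using $\gcd(M,N)=1$, that this curve is injective into the torus). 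The $\theta=0$ longitude $\gamma_0+\eta\bS$ is the Seifert pushoff, which has zero linking with $\gamma_0$, so these coordinates realize the $0$-framing of the solid torus $\overline{B}_\eta[\gamma_0]$; thus $\gamma_\eta$ is an $(N,M)$-curve on its boundary and $\gamma_\eta^+=\gamma_\eta+\eps U$ is the parallel $(N,M)$-curve on the concentric torus $\partial\overline{B}_{\eta+\eps}[\gamma_0]$. Pushing $\gamma_\eta$ slightly inside and $\gamma_\eta^+$ slightly outside the $0$-framed solid torus $\overline{B}_{\eta+\eps/2}[\gamma_0]$ and using that its longitude $\lambda$ and meridian $\mu$ satisfy $\link(\lambda_{\mathrm{in}},\mu_{\mathrm{out}})=1$ while the three remaining pairings vanish, one gets the standard cabling value $\link(\gamma_\eta,\gamma_\eta^+)=N\cdot M=NM$. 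Combining with the vanishing twist, $\Wr(\gamma_\eta)=NM-0=N^2\cdot\tfrac{M}{N}=N^2\Wr(\gamma_0)$ for every $\eta\in(0,\eta_0)$, which proves the lemma (with remainder identically $0$, a fortiori $O(\eta)$).

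The step I expect to be the actual work is the linking-number bookkeeping in the third paragraph: one must verify that the chosen longitude really is the $0$-framing, that $\theta$ winds $+M$ rather than $-M$ times, and that all these orientation choices are consistent with $\link(\gamma_0,\gamma_\eta)=M$ and $\Wr(\gamma_0)=M/N$ as fixed elsewhere in the paper --- a sign error here would flip the answer to $-NM$. A more self-contained but messier alternative, which only yields the stated $O(\eta)$, is to split the Gauss integral \eqref{eq:def_writhe} for $\gamma_\eta$ into the region where $[s_1],[s_2]$ are $\delta$-separated on $\gamma_0$ --- which, via the $N$-to-$1$ map $\T_{N\ell}\to\T_\ell$ and then $\delta\to0$, contributes $N^2\Wr(\gamma_0)$ --- and the near-diagonal region, controlled branch by branch using that the leading singular part of the integrand is odd in the displacement parameter along each branch and hence integrates to a lower-order term; optimizing in $\delta$ then gives the $O(\eta)$ bound.
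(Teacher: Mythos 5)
Your proof is correct, and it takes a genuinely different route from the paper's. The paper obtains Lemma~\ref{lem:writh_cabled_knot} as a corollary of Lemma~\ref{lem:refinement_BS_behav}: it writes $\Wr(\gamma_\eta)$ as the circulation of $\bA_{\gamma_\eta}$ along $\gamma_\eta$, splits the integral at each $c_\eta(u_k)$ into windows around the $N$ branches and a far part, identifies the far part with $N$ copies of the corresponding $\gamma_0$-integral up to $\mathcal{O}(\eta)$, and cancels the cross-branch near-diagonal contributions pairwise using the antisymmetry encoded in \eqref{eq:formule_v_w_3} --- essentially the ``messier alternative'' you sketch in your last paragraph, and it only yields $\mathcal{O}(\eta)$. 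Your main argument instead runs C\u{a}lug\u{a}reanu--White--Fuller on $\gamma_\eta$ with the framing $U$, exploiting precisely the two defining features of the adapted representation: $U$ is normal-parallel, so $U'\parallel\bT$ and (since $\dot\gamma_\eta\parallel\bT$) the twist vanishes identically; and $\Wr(\gamma_0)=M/N$ from \eqref{eq:cond_gamma_0}, so $U$ closes up after $N$ turns and $\gamma_\eta$ is an $(N,M)$-curve in the Seifert $0$-framing of $\partial \overline{B}_\eta[\gamma_0]$. Everything then reduces to $\link(\gamma_\eta,\gamma_\eta+\eps U)=NM$, and your homological bookkeeping is right: on concentric tori only the pairing of the inner longitude with the outer meridian survives, and the orientation conventions of the paper ($U(s+\ell)=R_{s,2\pi M/N}U(s)$, $\link(\gamma_0,\gamma_\eta)=M$) make the meridional winding $+M$, so the sign worry you flag does not materialize. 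What your approach buys is the exact identity $\Wr(\gamma_\eta)=NM=N^2\Wr(\gamma_0)$ for every $\eta\in(0,\eta_0)$, with remainder identically zero --- strictly stronger than the stated $\mathcal{O}(\eta)$, and consistent with how $\Wr(\gamma_\eta)$ is actually used in \eqref{eq:crit_eta} and \eqref{eq:sf_eta}. The only thing it does not deliver is the local expansion of $\bA_{\gamma_\eta}$ near $\gamma_0$, which the paper's proof gets for free from Lemma~\ref{lem:refinement_BS_behav} and which is needed elsewhere in Section~\ref{BS_cable}; but for the lemma as stated your route is cleaner.
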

	The proof is a corollary of Lemma~\ref{lem:refinement_BS_behav} and given in Section~\ref{sec:behavior_cable}.

\subsubsection{Study of the curves $\gamma_{\eta}$}\label{sec:study_cable}

Fix $\eta\in (0,\eta_0)$. Let $u(s)$ be the arclength and $c_{\eta}$ be the arclength parametrisation of $\gamma_{\eta}$:
\begin{equation*}
	u(s):=\dint_0^{s}m(\eta,s')\d s'.
\end{equation*}
For $u\in\T_{N\ell}$, we consider $W(u)\subset \T_{N\ell}$ the set of its ``neighbors": for all $v\in W(u)$, there exists
$j\in \Z$ such that
\begin{equation*}
s(v)=s(u)+j\ell,
\end{equation*}
where $s(v),s(u)\in \T_{N\ell}$ are the parameters such that
\[
\gamma_{\eta}(s(u))=c_{\eta}(u)\quad\&\quad \gamma_{\eta}(s(v))=c_{\eta}(v).
\]

 Denoting the differentiation w.r.t. $s$ and $u$ by dots and primes respectively, we have:
\begin{multline}\label{eq:calcul_c}
	c_{\eta}'(u)=\bT(s(u)),\ c_{\eta}''(u)=\frac{1}{m(\eta,s(u))}\ddot{\gamma}_0(s(u))\\
	 \&\ c_{\eta}'''(u)=\frac{1}{m(\eta,s(u))^2}\bigg(\gamma_0^{(3)}(s(u))-\frac{\dot{m}(\eta,s)}{m(\eta,s)}\ddot{\gamma}_0(s(u))\bigg).
\end{multline}

\subsubsection{The complex polynomial $P_{\eta}$}\label{sec:complex_poly}
In this section we introduced functions that will be thoroughly used in later sections.
They all depend on the integers $N$, but to simplify notations we will not emphasize this dependence.

Let $\zeta$ be the $N$-th root
	of the unity:
	\begin{equation*}
	 \zeta:=e^{2i\pi\tfrac{M}{N}}.
	\end{equation*}
	We emphasize the equality
	\begin{equation}\label{eq:hey_root_guy}
	 \sum_{k=0}^{N-1}\zeta^k=\frac{1-\zeta^N}{1-\zeta}=0.
	\end{equation}

For $\eta>0$, we write $Z_\eta$ the
multipoint defined as the collection of all the $N$-th roots of $\eta^N$, and we call $P_\eta\in\C[z]$
the complex polynomial 
\begin{equation*}
	P_\eta(z):=z^N-\eta^N=\prod_{k=0}^{N-1}(z-\eta\zeta^k)=\prod_{\xi\in Z_{\eta}}(z-\xi).
\end{equation*}
 
Let $Q_\eta$ and $\Theta_\eta\in\R/(2\pi\Z)$  
 denote the polar coordinates of $P_\eta$:
 \begin{equation*}
 	Q_\eta(z):=|P_\eta(z)|\quad\&\quad e^{i\Theta_\eta(z)}:=\frac{P_\eta(z)}{|P_\eta(z)|}.
 \end{equation*}
 The level sets of $Q_\eta$ are called \emph{Cassinians} and those of $\Theta_\eta$ \emph{stelloids}.

 The Seifert frame $(\bT,\bS,\bN)$ gives a trivialization of $\cN\gamma_0$ 
 (realized by the tubular coordinates \eqref{eq:tub_coord}). Up to identifying $\R^2$ with $\C$,
 we can identify the normal planes of $\gamma_0$ with $\C$. The intersection of $\gamma_{\eta}$
 with the cross-section $\{s(\bx)=0\}$ corresponds to $Z_\eta\subset \C$, and as $s$ runs over $\T_{\ell}$, there is a rotation
 of this intersection on $\C$. On the cross section $\{s(\bx)=s_0\}$ the intersection corresponds to
 $e^{-iI_0(s_0)}Z_\eta$ ($I_0$ is defined in \eqref{eq:def_I_0}).

 We introduce the notations on $B_\eps[\gamma_0]$
 \begin{equation*}
 	z(\bx):=\rho(\bx) e^{i\theta(\bx)}\quad\&\quad z_k(\bx):=\rho(\bx) e^{i\theta(\bx)}-\eta \zeta^k e^{-iI_0(s(\bx))}.
 \end{equation*}

 In the formula above, $s(\bx)\in \T_{\ell}$. 
Considering the tubular coordinates with $s=s(\bx)\in [0,\ell)$, 
we define $p_\eta,q_\eta,\theta_\eta$ on $B_\eps[\gamma_0]$ by:
 \begin{equation}\label{eq:def_fun_eta}
 	\begin{array}{ccl}
 	p_\eta(\bx)&:=&\prod_{k=0}^{N-1}z_k(\bx)=e^{-iNI_0(s)}P_\eta(z(\bx)e^{iI_0(s)}),\\
	q_\eta(\bx)&:=& |p_\eta(\bx)|=Q_\eta(z(\bx)e^{iI_0(s)}),\\
	\mathrm{exp}(i\theta_\eta(\bx))&:=&\frac{p_\eta(\bx)}{|p_\eta(\bx)|}=e^{-iNI_0(s)}\mathrm{exp}(i\Theta_\eta(z(\bx)e^{iI_0(s)})).
	\end{array}
 \end{equation}
 In particular we have $q_\eta(\bx)^2=\rho^{2N}+\eta^{2N}-2\rho^N\cos(N(\theta+I_0(s)))$.

 \subsubsection{Behavior of the B.S. gauge for cable knots}\label{sec:behavior_cable}

 In this section, we improve Theorem~\ref{thm:behavior_BS} for the behavior of the Biot and Savart gauge
 on $B_\eps[\gamma_0]$ corresponding to $2\pi\alpha [\gamma_{\eta}]$ (Lemma~\ref{lem:refinement_BS_behav}). 
 As above, let $\bA_{\gamma_{\eta}}$ defined by
 \[ 
 \bA_{\gamma_{\eta}}(\bx)=\frac{1}{4\pi}\int_{\gamma_{\eta}}\d \mathrm{r}\times \frac{\bx- \mathrm{r}}{|\bx- \mathrm{r}|^3}.
 \]
 Then $2\pi\alpha \bA_{\gamma_{\eta}}$ is a magnetic potential for $2\pi\alpha [\gamma_{\eta}]$. 
 
 Let $\bx\in B_\eps[\gamma_0]$. In the tubular coordinates, we have $\bx=\gamma_0(s)+\rho V(\bx)$. The curve $\gamma_{\eta}$
 intersects the cross-section $s=s(\bx)$ at the $N$ points $\gamma_{s,\eta}(s+k\ell)$, $0\le k\le N-1$, and we write $u_k(s(\bx))$ the corresponding
 points in the arclength parametrisation $c_{\eta}$ of $\gamma_{\eta}$. For $0\le k\le N-1$, we define $V_k(\bx)$ as the unit vector such that
 \[
 	\bx-\gamma_{\eta}(s(\bx)+k\ell)=\rho_k V_k(\bx),\ \rho_k=|\bx-\gamma_{\eta}(s(\bx)+k\ell)|,
 \]
 and $G_k(\bx)$ denotes $\bT(s(\bx))\times V_k(\bx)$. We recall that $G(\bx)=\bT(s)\times V(\bx)$. 
 Using \eqref{eq:def_v_w_2_w_3} and \eqref{eq:calcul_c}, we define:
\begin{equation*}
	\begin{array}{rclcrcl}
	v_{2,k}(\bx)&:=&\cip{c_{\eta}''(u_k(s(\bx)))}{V_k(\bx)},&& w_2(\bx)&:=&\cip{c_{\eta}''(u_k(s(\bx)))}{G_k(\bx)},\\
	v_{3,k}(\bx)&:=&\cip{c_{\eta}'''(u_k(s(\bx)))}{V_k(\bx)} ,&& w_3(\bx)&:=&\cip{c_{\eta}'''(u_k(s(\bx)))}{G_k(\bx)},
	\end{array}
\end{equation*}
and on $B_\eps[\gamma_0]$, we denote by $\Phi_\eta$ the \emph{differentiable} function
\begin{equation*}
	\Phi_\eta(\bx):=\sum_{k=0}^{N-1} \Big(\frac{w_{2,k}}{2}(\rho_k\log(\rho_k)-\rho_k)+\frac{3}{8}v_{2,k}w_{2,k}\rho_k^2\log(\rho_k)\Big).
\end{equation*}
	
	Recall the formula of the gradient in tubular coordinates \eqref{eq:nabla_local_coord}.
 Using the chain rule in \eqref{eq:def_fun_eta} and
 the identity $(\partial_s-\tau\partial_{\theta})e^{i(\theta+I_0(s))}=0$, we obtain:
 \begin{equation*}
 	\begin{array}{rcl}
 	\nabla q_\eta&=& \dfrac{N}{|(\rho e^{i\theta})^N-\eta^N|}\Big\{ \big(\rho^{2N-1}-\rho^{N-1}\cos[N(\theta+I_0(s))]\big)V\\
				&&\quad+\rho^N\sin[N(\theta+I_0(s))]G\Big\},\\
	\nabla \theta_\eta&=& -N\frac{\tau}{h}\bT+\sum_{k=0}^{N-1}\frac{1}{\rho_k}G_k.
	\end{array}
 \end{equation*}

\begin{lemma}\label{lem:refinement_BS_behav}
	Let $(\gamma_{\eta},\gamma_0)$ be an adapted cable representation as explained in Section~\ref{sec:intro_adapted_cable_rep},
	and let $\bA_{\gamma_{\eta}}$ denotes the Biot and Savart gauge w.r.t. $\gamma_{\eta}$. 
	Let $g_0$ and $\Phi_0$ be the function of Theorem~\ref{thm:behavior_BS} associated with $\gamma_0$.
	Then on the tubular neighborhood
	$B_\eps[\gamma_0]$, the $1$-form $\omega_{\eta}\in \Omega^1(B_\eps[\gamma_0]\setminus \gamma_0)$:
	\[
		\omega_{\eta}:=2\pi \bA_{\gamma_{\eta}}^\flat-\d (\theta_\eta+\Phi_{\eta}),
	\]
	can be continuously extended to $B_\eps[\gamma_0]$ as an \emph{exact} $1$-form $\d g_\eta$ on the tubular neighborhood. 
	
	Furthermore $g_\eta$ varies continuously with $\eta>0$, and converges almost everywhere and uniformly on any set $\{0<\eps_1\le \rho <\eps\}$
	to $Ng_0$. The functions $g_\eta$ and $g_0$ are set to be equal to $0$ on $\gamma_0(0)$.
	The same holds for the convergences $\theta_\eta\to N\theta$ and $\Phi_{\eta}\to N\Phi_0$.
\end{lemma}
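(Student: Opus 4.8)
The plan is to rerun the argument of Section~\ref{sec:end_proof_behavior}, now with the solid torus $B_\eps[\gamma_0]$ in place of $B_\eps[\gamma]$, and then to let $\eta\to0^+$ using the explicit formulas for $p_\eta,\theta_\eta,\Phi_\eta$ and for the Biot and Savart integral. First fix $\eta\in(0,\eta_0)$. Off $\gamma_\eta$ the one-form $2\pi\bA_{\gamma_\eta}^\flat$ is curl-free, so $\omega_\eta$ is closed on $B_\eps[\gamma_0]\setminus\gamma_\eta$. Near the $k$-th local branch of $\gamma_\eta$ the factor $z_k(\bx)$ is exactly the complex normal coordinate of that branch (and $\rho_k=|z_k|$), while the $z_j(\bx)$, $j\neq k$, are smooth and nonvanishing there; hence $\theta_\eta=\arg z_k+(\text{smooth})$, and the $k$-th summand of $\Phi_\eta$ is precisely the function $\Phi$ of Theorem~\ref{thm:behavior_BS} attached to branch $k$ while the other summands are smooth there. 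Applying Theorem~\ref{thm:behavior_BS} to $\gamma_\eta$ near branch $k$ then shows that $\omega_\eta$ differs there from a continuous form by $\d(\text{smooth})$, so $\omega_\eta$ extends continuously across $\gamma_\eta$ and is a continuous $1$-form on all of $B_\eps[\gamma_0]$, closed off the curve $\gamma_\eta$.

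Next I would show $\omega_\eta$ is exact on $B_\eps[\gamma_0]$, exactly as in Section~\ref{sec:end_proof_behavior}. A null-homotopic loop of $B_\eps[\gamma_0]$ bounds a disc meeting $\gamma_\eta$ transversally, so its $\omega_\eta$-period is a sum of meridian periods of $\gamma_\eta$; and for a meridian $m$ of $\gamma_\eta$ one has $\oint_m 2\pi\bA_{\gamma_\eta}^\flat=2\pi\link(m,\gamma_\eta)=2\pi$, $\oint_m\d\theta_\eta=2\pi$ (only $\arg z_k$ winds), $\oint_m\d\Phi_\eta=0$ ($\Phi_\eta$ being a genuine function), whence $\oint_m\omega_\eta=0$ and $\d\omega_\eta=0$ on $B_\eps[\gamma_0]$. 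Since $B_\eps[\gamma_0]$ retracts onto $\gamma_0$, exactness then amounts to $\oint_{\gamma_0}\omega_\eta=0$: here $\oint_{\gamma_0}2\pi\bA_{\gamma_\eta}^\flat=2\pi\link(\gamma_0,\gamma_\eta)=2\pi M$ by Gauss' formula~\eqref{eq:link}, $\oint_{\gamma_0}\d\Phi_\eta=0$, and $p_\eta(\gamma_0(s))=(-\eta)^N\zeta^{N(N-1)/2}e^{-iNI_0(s)}$ gives $\oint_{\gamma_0}\d\theta_\eta=-NI_0(\ell)=2\pi N\,\Wr(\gamma_0)=2\pi M$ by \eqref{eq:total_rel_torsion} and \eqref{eq:cond_gamma_0}. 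Thus $\omega_\eta=\d g_\eta$, which I normalise by $g_\eta(\gamma_0(0))=0$ as for $g_0$.

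For the limit, work on a region $\{0<\eps_1\le\rho<\eps\}$, which lies off $\gamma_\eta=\{\rho=\eta\}$ once $\eta<\eps_1$. From $p_\eta(\bx)=z(\bx)^N-\eta^Ne^{-iNI_0(s)}$ we get $p_\eta/z^N\to1$ uniformly, hence $\theta_\eta\to N\theta$ and, via the recorded formula $\nabla\theta_\eta=-N\tfrac{\tau}{h}\bT+\sum_k\tfrac1{\rho_k}G_k$, also $\d\theta_\eta\to N\d\theta$, uniformly. Since the $N$ local branches of $\gamma_\eta$ converge to $\gamma_0$ and $m(\eta,\cdot)\to1$, one gets $\rho_k\to\rho$, $V_k\to V$, $G_k\to G$, $v_{2,k}\to v_2$, $w_{2,k}\to w_2$, hence $\Phi_\eta\to N\Phi_0$ and $\d\Phi_\eta\to N\d\Phi_0$, uniformly. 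Finally, using $\dot\gamma_\eta(s)=m(\eta,s)\dot\gamma_0([s])$, the integral $\bA_{\gamma_\eta}(\bx)=\tfrac1{4\pi}\int_0^{N\ell}m(\eta,s)\dot\gamma_0([s])\times\tfrac{\bx-\gamma_0([s])-\eta U(s)}{|\bx-\gamma_0([s])-\eta U(s)|^3}\,\d s$ has bounded integrand on $\{\rho\ge\eps_1\}$ and converges uniformly to $N\bA_{\gamma_0}(\bx)$. Adding these, $\omega_\eta\to 2\pi N\bA_{\gamma_0}^\flat-N\d(\theta+\Phi_0)=\d(Ng_0)$ uniformly on $\{\rho\ge\eps_1\}$; on this connected set $g_\eta-Ng_0$ is therefore constant up to an $o(1)$, and continuity of $\eta\mapsto g_\eta$ on $(0,\eta_0)$ follows similarly away from $\gamma_\eta$.

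The main obstacle is pinning down that constant, i.e.\ proving $g_\eta(\bx_\ast)\to Ng_0(\bx_\ast)$ at one point $\bx_\ast$ with $\rho(\bx_\ast)=\eps_1$ — equivalently, that the integral of $\omega_\eta$ along a fixed path from $\gamma_0(0)$ to $\bx_\ast$ converges to $N$ times the corresponding integral for $\gamma_0$. Such a path must cross the layer $\{\rho\sim\eta\}$ around $\gamma_0$, where $\bA_{\gamma_\eta}$ is of size $\sim 1/\eta$ (indeed $\wt\bA_{\gamma_\eta}$ itself diverges), and the cancellation rests on the root-of-unity structure of $P_\eta$: the identity $\sum_k\zeta^k=0$ of~\eqref{eq:hey_root_guy} (equivalently $\sum_k U(k\ell)=0$ in each normal plane) together with the orthogonality of the singular part $\tfrac1{\rho_k}G_k$ of branch $k$ to the radial direction — the very mechanism behind $\Wr(\gamma_\eta)=N^2\Wr(\gamma_0)+\mathcal O(\eta)$ in Lemma~\ref{lem:writh_cabled_knot}. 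I would obtain the near-$\gamma_0$ expansion of $\bA_{\gamma_\eta}$ by applying the (first part of the) expansion in Theorem~\ref{thm:behavior_BS} branch by branch and summing, using $q_\eta^2=\rho^{2N}+\eta^{2N}-2\rho^N\cos(N(\theta+I_0(s)))$ to collect the $N$ contributions into a single expression in $p_\eta$, so that the $\log\eta$ terms cancel and a bounded limit remains. Once $g_\eta(\bx_\ast)\to Ng_0(\bx_\ast)$ is established, $g_\eta\to Ng_0$ uniformly on every $\{\eps_1\le\rho<\eps\}$, hence a.e.\ on $B_\eps[\gamma_0]$, and the statements for $\theta_\eta$ and $\Phi_\eta$ were already obtained in the preceding paragraph.
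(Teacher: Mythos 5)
Your argument is correct and follows the paper's architecture: continuous extension of $\omega_\eta$ by applying Theorem~\ref{thm:behavior_BS} branch by branch, closedness, exactness reduced to a single period on the solid torus, and convergence from the convergence of the tubular coordinates of the $N$ branches. The one genuinely different sub-step is the exactness computation. The paper evaluates the period on the loop $\gamma_\eta$ (homotopic to $N\gamma_0$ in $B_\eps[\gamma_0]$), which requires the exact identity $\int_{\gamma_\eta}2\pi\bA_{\gamma_\eta}^\flat=2\pi\Wr(\gamma_\eta)=2\pi NM$ together with $-\int_{\gamma_\eta}\d\theta_\eta=N\int_0^{N\ell}\tau=-2\pi NM$; you instead evaluate the period directly on $\gamma_0$, where $\int_{\gamma_0}2\pi\bA_{\gamma_\eta}^\flat=2\pi\link(\gamma_0,\gamma_\eta)=2\pi M$ by \eqref{eq:link} and $p_\eta(\gamma_0(s))=(-\eta)^N\zeta^{N(N-1)/2}e^{-iNI_0(s)}$ gives $\int_{\gamma_0}\d\theta_\eta=-NI_0(\ell)=2\pi M$ by \eqref{eq:total_rel_torsion} and \eqref{eq:cond_gamma_0}. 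Your route is cleaner and more self-contained: it uses only the linking number of the cable with its base and the normalisation of the writhe of $\gamma_0$, and avoids invoking the exact value of $\Wr(\gamma_\eta)$ — which the paper only establishes to order $\mathcal{O}(\eta)$ in Lemma~\ref{lem:writh_cabled_knot}, and whose proof is itself derived \emph{from} the present lemma.

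On the convergence $g_\eta\to Ng_0$: you correctly isolate the one point that needs care, namely that uniform convergence of $\omega_\eta$ on $\{\rho\ge\eps_1\}$ only gives $g_\eta-Ng_0\to\mathrm{const}$ there, and that fixing the constant requires integrating $\omega_\eta$ along a path from $\gamma_0(0)$ through the layer $\{\rho\lesssim\eta\}$, where the cancellation rests on the root-of-unity structure of $P_\eta$ (the same mechanism as in Lemma~\ref{lem:writh_cabled_knot}). You sketch rather than execute this step, but the mechanism you name is the right one, and the paper's own proof disposes of the entire convergence statement in a single sentence appealing to the convergence of tubular coordinates — so your treatment is, if anything, the more careful of the two.
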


 \begin{rem}
	This Lemma~\ref{lem:refinement_BS_behav} can be easily extended to other cable representations $(\gamma_1,\gamma_0)$. 
\end{rem}

\begin{proof}
	We follow the same proof as that of Theorem~\ref{thm:behavior_BS}. Let $\bx=\gamma_0(s)+\rho V(\bx)$
	and $0<\eta_0<\tfrac{\ell}{2}$. Let $\{u_k(s),\,0\le k\le N-1\}$ be the corresponding points in the arclength parametrisation of $\gamma_{\eta}$: 
	\[
		\int_0^{s+k\ell}|\dot{\gamma}_{\eta}(s')|\d s'=u_k(s).
	\]
	If we read the first part in Appendix \ref{sec:first_part_proof_thm_behavior},
	one sees that the integration over $[-\eps_0+u_k,\eps_0+u_k]$ in the formula of $\bA_{\gamma_{\eta}}$ gives rise
	to the same expansion as in \eqref{eq:formule_behavior}, with $s_0,\rho,V,G$ replaced by $u_k,\rho_k,V_k,G_k$
	respectively. This shows that $\omega_\eta$ can be continuously extended on $B_\eps[\gamma_0]$. 
	As in Section~\ref{sec:end_proof_behavior}, we obtain that $\d \omega_\eta=0$. It is exact if and only if $N\int_{\gamma_0}\omega_\eta=0$.
	By homotopy, this is equivalent to $\int_{\gamma_{\eta}}\omega_\eta=0$, and indeed:
	\[
		\int_{\gamma_{\eta}}\omega_\eta=2\pi \Wr(\gamma_{\eta})+N\int_0^{\ell} \frac{\tau(s)}{h(s,\eta,-I_0(s))}h(s,\eta,-I_0(s))\d s=0.
	\]
	We have used an approximation argument to obtain
	\[
		-\lim_{r\to \eta^+}\int_{\gamma_{r}}\d (\theta_{\eta}+\Phi_{\eta})=\int_{\gamma_{\eta}} N\frac{\tau}{h}\bT^{\flat},
	\]
	using the fact that $\dot{\gamma}_{r}$ is orthogonal to the $G_k$'s, and $\int_{\gamma_{r}}\d \Phi_{\eta}\equiv 0$.
	Alternatively by the Seifert surface crossing characterization of the linking number \eqref{eq:link},
	in the limit $r\to \eta^{-}$ we have
	\[
	\frac{1}{2\pi}\int_{\dot{\gamma}_{r}}\d \theta_{\eta}= \link(\gamma_{\eta},\gamma_{0})= 2\pi\int_{\gamma_{r}}\bA_{\gamma_{\eta}}^\flat.
	\]

	The convergence of $g_\eta,\theta_\eta$ and $\Phi_{\eta}$ follows from the convergence of the tubular coordinates $(\gamma_{\eta}(u_k),\rho_k,\theta_k)$ to 
	$(\gamma_0(s),\rho,\theta)$, $0\le k\le N-1$.
\end{proof}

 \begin{proof}[Proof of Lemma~\ref{lem:writh_cabled_knot}]

	For a given $\gamma_0(s)$, there corresponds $N$ points on $\gamma_{\eta}$,
	with parameters $u_k$, $0\le k\le N-1$. We use the integral formula of the writhe for $\gamma_{\eta}$, and consider 
	$\cip{\bA_{\gamma_{\eta}}^{\flat}(u)}{c_{\eta}'(u)}$ (which is well-defined as a limit). 
	We focus on the integral formula for $\bA_{\gamma_{\eta}}(\bx)$ and $\bA_{\gamma_0}(\bx)$: 
	\[
	 \bA_{\gamma_{\eta}}(\bx)=\int_{\T_{N\ell}}f_{\eta}(\bx,u)\d u\quad\&\quad \bA_{\gamma_0}(\bx)=\int_{\T_{\ell}}f_{0}(\bx,s)\d s.
	\]
	We write $\cI_\eta\{\bx,J_1\}$ resp. $\cI_0\{\bx,J_2\}$ the partial integrals over the subsets $J_1$ and $J_2$.
	At the point $u_k$, $\cI_\eta\big\{c_{\eta}(u_k),\T_{N\ell}\setminus (\cup_{k'}[u_{k'}-\eps,u_{k'}+\eps])\big\}$ differs from the quantity 
	$N\cI_0\big\{\gamma_0(s),\T_{\ell}\setminus [s-\eps,s+\eps]\big\}$ by a  $\mathcal{O}_{\eta\to 0}(\eta)$. As shown in the proof of Lemma~\ref{lem:refinement_BS_behav}, for all $k'\neq k$ we have:
	\[
	\big\langle c_{\eta}'(u_k),\, \cI_\eta\big\{c_{\eta}(u_k),[u_{k'}-\eps,u_{k'}+\eps]\big\} \big\rangle=-w_{3,k'}(c_{\eta}(u_k))\rho_{kk'}\log(\rho_{kk'})+\mathcal{O}_{\eta\to 0}(\eta),
	\]
	where $\rho_{kk'}$ denotes $|c_{\eta}(u_k)-c_{\eta}(u_{k'})|$. Using Formula \eqref{eq:formule_v_w_3}, but for the tubular coordinates of $c_{\eta}$,
	and the fact that $c_{\eta}^{(3)}(u_k)=\gamma_0^{(3)}(s)+\mathcal{O}_{\eta\to 0}(\eta)$, we get cancellation and the following quantity is $\mathcal{O}_{\eta\to 0}(\eta)$:
	\[
	\big\langle c_{\eta}'(u_k),\,  \cI_\eta\big\{c_{\eta}(u_k),[u_{k'}-\eps,u_{k'}+\eps]\big\}\big\rangle +   \big\langle c_{\eta}'(u_{k'}),\, \cI_\eta\big\{c_{\eta}(u_{k'}),[u_{k}-\eps,u_{k}+\eps]\big\}\big\rangle.
	\]
	Summing all the terms yields the announced convergence.
\end{proof}

\section{Dirac operators with magnetic cable knots}\label{sec:descr_dirac_op}

Let $(\gamma_\eta,\gamma_0)$ be an adapted cable representation with running cable knot $\gamma_\eta$ collapsing to $\gamma_0$,
and consider the corresponding magnetic field with a given flux $0<2\pi\alpha<2\pi$. 
We add to it an auxiliary magnetic field as explained in Section \eqref{sec:strategy}, with total gauge $\bA_{\full}$.
In the vicinity of $\gamma_0$, the action of the Dirac operator $\cD_{\bA_{\full}}$ is locally decomposed 
into the derivative colinear to the Seifert fibration and the derivatives transversal to it.

We start with the transversal part and study the 2D-Dirac operator with $N$ Aharonov-Bohm solenoids placed on 
the $N$-th roots of $\eta^N$. In particular we deal with the convergent issues as $\eta\to 0$.

Then we describe more thoroughly the domain of $\cD_{\bA_{\full}}$ (Dirac operator in the $\S^3$-metric) and relate it to a model operator.
The important relation is given in Lemma~\ref{lem:graph_norm_estimate}. We use this lemma to decompose an element of the 3D Dirac operator
into a regular and a singular part (according to the decomposition of the model operator, Lemma~\ref{lem:decomp_sing_subspace}).

\subsection{$2$D operator with $N$ Aharonov-Bohm solenoids of same flux}\label{sub:2dim_op}
As the curve $\gamma_{\eta}$ collapses onto $\gamma_0$, it intersects each cross-section at $N$ points.
This leads us to study -- for $\eta\in(0,\eps)$ and $\ 0<\alpha<1$-- the two-dimensional Dirac operators with magnetic field:
\[
2\pi\alpha \sum_{k=0}^{N-1}\delta_{\eta\zeta^k},
\]
where  $\zeta=e^{2i\pi M/N}$.  
Its scalar gauge is:
$
\alpha \sum_{k=0}^{N-1}\nabla\theta(\cdot-\eta\zeta^k).
$
\subsubsection{Description of the domains and convergence result}
Let us write $\cD_{\eta,\alpha}=\cD_{\eta,\alpha}^{(-)}$ the associated Dirac operator \cite{Persson_dirac_2d} 
(see also \cite{dirac_s3_paper3}). We recall that the minimal domain $\dom(\cD_{\eta,\alpha}^{(\min)})$
is the closure in the graph norm of $C^{\infty}_0(\C\setminus\{\eta\zeta^k,\,0\le k\le N-1\})^2$, and the minimal domain has co-dimension $N$
in the full domain $\dom(\cD_{\eta,\alpha}^{(-)})$.

Let $\chi\in C^{\infty}_0(\C,[0,1])$ be a radial function which equals one on the unit disk $\mathbb{D}$ and vanishes outside $2 \mathbb{D}$.
For $0\le k\le N-1$, let $\chi_{k,\eta}$ be the function 
\[
\chi_{k,\eta}(z):=\chi\big(\tfrac{N}{\eta}(z-\eta \zeta^k)\big).
\]
The function $\chi_{k_0,\eta}$ equals $1$ around $\eta\zeta^{k_0}$ and vanishes around the other $\eta\zeta^k$'s.
Using \cite{Persson_dirac_2d}, the domain of $\cD_{\eta,\alpha}$ can be given as the sum:
\[
	\dom(\cD_{\eta,\alpha}^{(\min)})+
		\mathrm{Span}\bigg\{ \chi_{k,\eta}(z)\begin{pmatrix} 0\\ |z|^{-\alpha}\end{pmatrix},\ 0\le k\le N-1\bigg\},
\]
Using the fact that for all $0\le k_0\le N-1$, the polynomial $\frac{z^N-\eta^N}{z-\eta \zeta^{k_0}}$ has degree $N-1$
and vanishes on the $\eta\zeta^k$'s but $\eta \zeta^{k_0}$, we can rewrite this domain as
\begin{equation*}
	\dom(\cD_{\eta,\alpha}^{(-)})=
		\dom(\cD_{\eta,\alpha}^{(\min)})+\mathrm{Span}\bigg\{ \chi(z)\overline{z}^k\begin{pmatrix} 0\\ Q_{\eta}(z)^{-\alpha}\end{pmatrix},\ 0\le k\le N-1\bigg\},
\end{equation*}
where $Q_\eta(z)=|z^N-\eta^N|$. Let us write $g_{\eta,k}$ the $N$ vectors defining the additional span in the expression above:
\begin{equation}\label{eq:def_psi_k_eta}
	g_{\eta,k}(z)=g_k(z):=\chi(z)\overline{z}^k\begin{pmatrix} 0\\ Q_{\eta}(z)^{-\alpha}\end{pmatrix}\in L^2(\C)^2,\ 0\le k\le N-1.
\end{equation}

A computation shows the following:
\begin{equation}\label{eq:prop_g}
	\cD_{\eta,\alpha}g_0=\sigma_{\R^2}\cdot(-i\nabla^{\R^2}\chi)\begin{pmatrix} 0\\ Q_{\eta}(z)^{-\alpha}\end{pmatrix}
		\ \& \ \cD_{\eta,\alpha}g_{k}=\overline{z}^k\cD_{\eta,\alpha}g_0.
\end{equation}

Similarly for $m\in\Z$, we write $\cD_{0,\alpha}^{(m)}$ the Dirac operator with magnetic field $2\pi\alpha\delta_0$ with the gauge $(\alpha+m)\nabla\theta$.
For $m=0$, we obtain the Dirac operator defined in \cite{Persson_dirac_2d}, and $e^{-im\theta}$ is the gauge transformation mapping $\dom(\cD_{0,\alpha}^{(0)})$
onto $\dom(\cD_{0,\alpha}^{(m)})$.

\begin{lemma}\label{lem:strg_res_cont_2D}
	For $0< \alpha < 1$, we decompose $N\alpha$ into its integer and fractional parts $E(N\alpha):=\lfloor N\alpha \rfloor$ and  
	$e(\alpha)=N\alpha-E(N\alpha)$.
	
	As $\eta\to 0$, $\cD_{\eta,\alpha}$ converges to $\cD_{0,e(N\alpha)}^{(E(N\alpha))}$ in the strong-resolvent sense.
\end{lemma}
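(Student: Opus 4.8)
The plan is to use the graph-limit characterization of strong-resolvent convergence, i.e. condition (2) of Lemma~\ref{lem:char_sres_conv}: I must show that for every $(f,\cD_{0,e(N\alpha)}^{(E(N\alpha))}f)$ in the graph of the limit operator there is a sequence $(f_\eta,\cD_{\eta,\alpha}f_\eta)$ in the graphs of $\cD_{\eta,\alpha}$ converging to it in $L^2(\C)^2\times L^2(\C)^2$, and conversely that any convergent sequence of this form has its limit in the graph of $\cD_{0,e(N\alpha)}^{(E(N\alpha))}$. The first step is to fix the gauge bookkeeping: the scalar gauge for $\cD_{\eta,\alpha}$ is $\alpha\sum_k \nabla\theta(\cdot-\eta\zeta^k)$, whose total circulation around a large circle is $N\alpha$; as $\eta\to 0$ this should merge into a single solenoid of flux $2\pi N\alpha$, and since fluxes only matter modulo $2\pi$ the natural limit carries fractional flux $e(N\alpha)$ together with the "integer remainder" $E(N\alpha)$ absorbed into the gauge $\nabla\theta$-multiplicity — hence the operator $\cD_{0,e(N\alpha)}^{(E(N\alpha))}$. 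One should verify that off the singularities $\cD_{\eta,\alpha}$ converges in the natural sense (e.g. the magnetic potentials converge in $L^2_{\loc}(\C\setminus\{0\})$ and the scalar phases converge uniformly away from $0$), which handles the "bulk" part away from the origin.

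Next I would handle the singular part, which is where the codimension-$N$ defect of $\cD_{\eta,\alpha}$ over its minimal operator collapses onto the codimension-$1$ defect of $\cD_{0,e(N\alpha)}^{(E(N\alpha))}$. The explicit basis $g_{\eta,k}(z)=\chi(z)\overline z^{\,k}\big(\begin{smallmatrix}0\\ Q_\eta(z)^{-\alpha}\end{smallmatrix}\big)$, $0\le k\le N-1$, of the singular span and the identities \eqref{eq:prop_g} are the key tools here. As $\eta\to 0$ we have $Q_\eta(z)=|z^N-\eta^N|\to |z|^N$ pointwise and in the relevant $L^p_{\loc}$ senses, so $g_{\eta,k}(z)\to \chi(z)\overline z^{\,k}\big(\begin{smallmatrix}0\\ |z|^{-N\alpha}\end{smallmatrix}\big)$. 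For $k=0$ this limiting function behaves near $0$ like $|z|^{-N\alpha}=|z|^{-E(N\alpha)}|z|^{-e(N\alpha)}$, which is precisely (up to the lower-order gauge twist $e^{-iE(N\alpha)\theta}$ relating $\dom(\cD_{0,e(N\alpha)}^{(0)})$ to $\dom(\cD_{0,e(N\alpha)}^{(E(N\alpha))})$) the generator of the one-dimensional singular span of the limit operator; for $1\le k\le N-1$ the extra factor $\overline z^{\,k}=|z|^k e^{-ik\theta}$ makes $g_{\eta,k}$ converge to a function which is locally $H^1$-regular after the gauge change (the singularity $|z|^{k-N\alpha}$ is square-integrable together with its weak derivative once $k\ge 1$, since $k-N\alpha>-1$), hence lands in the \emph{minimal} domain of the limit operator. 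Thus in the limit exactly one of the $N$ singular directions survives as genuinely singular, the other $N-1$ becoming regular, which matches the domains. One then checks, using \eqref{eq:prop_g}, that $\cD_{\eta,\alpha}g_{\eta,k}$ converges in $L^2$ to the corresponding object for the limit operator — note $\cD_{\eta,\alpha}g_0$ is supported in the annulus $\{\mathds{1}\le|z|\le 2\mathbb{D}\}$ away from all singularities, where convergence is routine, and $\cD_{\eta,\alpha}g_k=\overline z^{\,k}\cD_{\eta,\alpha}g_0$.

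Assembling: given $f$ in the limit domain, write $f=f^{(\reg)}+c\,g^{(0)}_{\lim}$ with $f^{(\reg)}$ in the minimal domain of $\cD_{0,e(N\alpha)}^{(E(N\alpha))}$; approximate $f^{(\reg)}$ by $C^\infty_0(\C\setminus\{0\})^2$ functions, which for $\eta$ small lie in $C^\infty_0(\C\setminus\{\eta\zeta^k\})^2\subset\dom(\cD_{\eta,\alpha}^{(\min)})$, and multiply by the gauge phase $e^{i(\ldots)}$ interpolating between the two gauges (this introduces only an error tending to $0$ in graph norm since the phase difference $\to\mathrm{const}$ locally uniformly and its gradient $\to 0$ in $L^2_{\loc}$ off the origin); and take $f_\eta := (\text{gauged }f^{(\reg)}_\eta) + c\,g_{\eta,0}$. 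A diagonal/density argument gives the general $f$. For the converse inclusion one uses Lemma~\ref{lem:char_sres_conv}(2) in the other direction together with the fact that a weak $L^2\times L^2$ limit of graph elements, combined with the already-established convergence away from $0$, must be a distributional solution of the limiting equation in the correct domain; closedness of $\cD_{0,e(N\alpha)}^{(E(N\alpha))}$ then finishes. The main obstacle I anticipate is the careful control of the singular directions near $z=0$ — specifically, showing that the $k\ge 1$ directions genuinely pass into the minimal (regular) domain of the limit operator uniformly in $\eta$, and that the gauge-phase interpolation does not spoil graph-norm convergence; this requires quantitative estimates on $Q_\eta(z)^{-\alpha}$ versus $|z|^{-N\alpha}$ near the origin (e.g. a uniform bound $Q_\eta(z)\ge c\,|z|^N$ is false near the $\eta\zeta^k$, so one must split into $|z|\le C\eta$ and $|z|\ge C\eta$ regions and estimate each, using that on $|z|\le C\eta$ the mass is $O(\eta^{2-2N\alpha+\,\cdot})\to 0$), which is the technical heart of the argument.
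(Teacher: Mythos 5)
Your overall strategy is the same as the paper's (which is very terse): check characterization (2) of Lemma~\ref{lem:char_sres_conv} by combining the local convergence of the gauge away from the origin with the explicit convergence of the singular basis $g_{\eta,k}$. However, your classification of the limits of the $g_{\eta,k}$ is wrong, and the error propagates into your assembly step. The function $g_{\eta,k}$ converges in $L^2$ only when $N\alpha-k<1$, i.e.\ $k\ge E(N\alpha)$ (this is the $L^2$-integrability threshold for $r^{k-N\alpha}$ near $0$: one needs $2(k-N\alpha)+1>-1$). For $k<E(N\alpha)$ — in particular for $k=0$ whenever $N\alpha\ge 1$ — one has $\norm{g_{\eta,k}}_{L^2}\to+\infty$, and the normalized vectors collapse onto the origin and converge weakly to $0$; these directions neither survive as singular nor "become regular" (this is precisely the content of case (3) of Lemma~\ref{lem:dichotomy_sing_basis}, and the reason norm-resolvent convergence fails for $\alpha\ge 1/N$). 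The direction that converges to the one-dimensional singular span of the limit operator is $k=E(N\alpha)$ (when $e(N\alpha)>0$), not $k=0$: the singular generator behaves like $e^{-iE(N\alpha)\theta}r^{-e(N\alpha)}$, whereas your $k=0$ limit behaves like $r^{-N\alpha}=r^{-E(N\alpha)}\cdot r^{-e(N\alpha)}$, and the extra factor $r^{-E(N\alpha)}$ is a genuine worsening of the radial singularity, not a gauge phase. Finally, the criterion you quote for landing in the minimal domain ($k-N\alpha>-1$) is the $L^2$ criterion; membership in the minimal (magnetic $H^1$) domain requires $k-N\alpha>0$, i.e.\ $k>E(N\alpha)$.

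Concretely, in your last paragraph you approximate the singular part of a general $f$ by $c\,g_{\eta,0}$; whenever $E(N\alpha)\ge 1$ this sequence does not converge in $L^2$, so your $(f_\eta,\cD_{\eta,\alpha}f_\eta)$ does not converge to $(f,\cD_{0,e(N\alpha)}^{(E(N\alpha))}f)$ and the verification of Lemma~\ref{lem:char_sres_conv}(2) fails. The fix is simple and restores the paper's argument: use $g_{\eta,E(N\alpha)}$ to approximate the singular generator, use the $g_{\eta,k}$ with $k>N\alpha$ (or smooth functions supported away from $0$) for the regular part, and simply discard the directions $k<E(N\alpha)$ — they are not needed for strong-resolvent convergence, since characterization (2) only requires approximating elements of the \emph{limit} graph, though their collapse is exactly what rules out convergence in the gap topology.
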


Note that the convergence is qualitatively different for $0<\alpha<\tfrac{1}{N}$ and for $\alpha\ge \tfrac{1}{N}$.
More generally as $\alpha$ increases, the convergence changes  at the $N$ values $\alpha_k:=\tfrac{k+1}{N}$, $0\le k\le N-1$.
For $\alpha\ge \alpha_k$, the complex line $\C (g_{\eta,k},\cD_{\eta,\alpha}g_{\eta,k})$ collapses onto $0$ (that is the corresponding unit vector
concentrates onto $0$ and converges $L^2$-weakly to $0$). Hence, for $\alpha\ge \frac{1}{N}$, there cannot be norm-resolvent convergence.

\begin{proof}

First, $\alpha\nabla \vartheta_{\eta}$ converges to $N\alpha \nabla\theta$ 
in $L^2_{\loc}(\C\setminus \{0\})^2$, and uniformly on each $\C\setminus D_\C(0,r)$, $r>0$.
For $0\le k\le N-1$, the same holds for
\begin{equation*}
	\begin{array}{rcc}
		g_{\eta,k}&\underset{\eta\to 0}{\longrightarrow}& \chi(z)\frac{\overline{z}^k}{|z|^{N\alpha}}\begin{pmatrix} 0& 1\end{pmatrix}^T,\\
		\cD_{\eta,\alpha}g_{\eta,k}&\underset{\eta\to 0}{\longrightarrow}& 
			\sigma_{\R^2}\cdot(-i\nabla^{\R^2}\chi)(z)\frac{\overline{z}^k}{|z|^{N\alpha}}\begin{pmatrix} 0& 1\end{pmatrix}^T.
	\end{array}
\end{equation*}
For $N\alpha-k<1$, this shows that $(g_{\eta,k},\cD_{\eta,\alpha}g_{\eta,k})$ converges in $L^2(\C)^2\times L^2(\C)^2$ to
$(g_{0,k},\cD_{0,e(N\alpha)}^{(E(N\alpha))}g_{0,k})$. 
Using the second characterisation of Lemma~\ref{lem:char_sres_conv}, strong-resolvent convergence is easy to establish:
the details are left to the reader.
\end{proof}

\subsubsection{Adapted decomposition of the domain}\label{sec:decomp_2d_dom}
  It is known that $\dom(\cD_{\eta,\alpha})$ can be decomposed into $\dom(\cD_{\eta,\alpha}^{(\min)})$
  and its graph norm-orthogonal complement $D_{\sing}$ (the latter subspace has dimension $N$).
  Following this decomposition, an element of the domain can be split into a regular part and a singular part. 
  Here, we introduce an adapted basis of $D_{\sing}$ which is semi-explicit.

  Let $P_{\min}$ be the graph-norm orthogonal projection onto $\dom(\cD_{\eta,\alpha}^{(\min)})$.
  Recall $\zeta:=\mathrm{exp}(2i\pi\tfrac{M}{N})$. Consider the family $\underline{g}=(g_k)_{0\le k\le N-1}$ defined in \eqref{eq:def_psi_k_eta}. 
  This family satisfies the property $g_k(\zeta z)=\overline{\zeta}^kg_k(z)$ and \eqref{eq:prop_g},
  and so it is already orthogonal with respect to the graph-norm
  of $\cD_{\eta,\alpha}$. We construct the family $\underline{f_{\sing}}=(f_{\sing,k})_{0\le k\le N-1}$ by removing the regular part from $\underline{g}$:
  \begin{equation}\label{eq:def_f_sing_2D}
   f_{\sing,k}:= \frac{(1-P_{\min})g_k}{\norm{(1-P_{\min})g_k}_{\cD_{\eta,\alpha}}}.
  \end{equation}
  The family  $\underline{f_{\sing}}$ is an orthonormal basis of $D_{\sing}$, and it
  turns out that it satisfies the same properties as $\underline{g}$.
  \begin{lemma}
   For $0\le k\le N-1$, $f_{\sing,k}$ has no spin up component and satisfies $f_{\sing,k}(\zeta z)=\overline{\zeta}^kf_{\sing,k}(z)$
   for almost all $z\in\C$.
  \end{lemma}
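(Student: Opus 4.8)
The plan is to exploit the two structural properties of the original family $\underline{g}=(g_k)$ — namely that each $g_k$ has no spin-up component and that $g_k(\zeta z)=\overline{\zeta}^k g_k(z)$ — together with the equivariance of the operator $\cD_{\eta,\alpha}$ under the rotation $z\mapsto\zeta z$, and then to check that the Gram--Schmidt-type correction in \eqref{eq:def_f_sing_2D} preserves both. First I would make precise the unitary implementing the $\zeta$-rotation. Since $\zeta=e^{2i\pi M/N}$ has modulus one, define $R$ on $L^2(\C)^2$ by $(R\psi)(z):=\psi(\overline{\zeta}z)$ acting diagonally on the two spinor components (one could absorb a scalar phase, but the bare rotation suffices here). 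Because the magnetic potential $\alpha\sum_k \nabla\theta(\cdot-\eta\zeta^k)$ is invariant under $z\mapsto\zeta z$ (the rotation permutes the solenoid locations $\{\eta\zeta^k\}$ cyclically), and because $\sigma_{\R^2}\cdot(-i\nabla^{\R^2})$ rotates covariantly, one checks that $\cD_{\eta,\alpha}$ commutes with $R$ up to a constant phase on the symbol — more carefully, there is a fixed unitary $\widetilde R$ (the rotation $R$ possibly twisted by a constant spinor phase $e^{i\beta\sigma_3}$) such that $\widetilde R\,\dom(\cD_{\eta,\alpha})=\dom(\cD_{\eta,\alpha})$, $\widetilde R\,\dom(\cD_{\eta,\alpha}^{(\min)})=\dom(\cD_{\eta,\alpha}^{(\min)})$, and $\widetilde R\,\cD_{\eta,\alpha}\,\widetilde R^{-1}=\cD_{\eta,\alpha}$. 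This is the key step and, I expect, the only genuinely delicate one: getting the phase bookkeeping right so that $\widetilde R$ is an honest symmetry of both the minimal and the maximal operator. The defining relation $g_k(\zeta z)=\overline{\zeta}^k g_k(z)$ then reads $\widetilde R g_k = c\,\overline{\zeta}^{\,k} g_k$ for a $k$-independent constant $c$ (a root of unity), i.e. each $g_k$ is an eigenvector of $\widetilde R$ with eigenvalue $c\,\overline{\zeta}^{\,k}$.

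Next I would transport this to the correction term. Since $\widetilde R$ commutes with $\cD_{\eta,\alpha}$ and preserves $\dom(\cD_{\eta,\alpha}^{(\min)})$, it commutes with the graph-norm orthogonal projection $P_{\min}$ onto that subspace (the graph-norm inner product $\langle u,v\rangle_{\cD}=\langle u,v\rangle+\langle\cD u,\cD v\rangle$ is $\widetilde R$-invariant, and $P_{\min}$ is the orthogonal projection for it). Hence $\widetilde R(1-P_{\min})g_k=(1-P_{\min})\widetilde R g_k=c\,\overline{\zeta}^{\,k}(1-P_{\min})g_k$, and because $\widetilde R$ is unitary for the graph norm it also preserves $\norm{(1-P_{\min})g_k}_{\cD_{\eta,\alpha}}$. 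Dividing, $\widetilde R f_{\sing,k}=c\,\overline{\zeta}^{\,k} f_{\sing,k}$. Unwinding the definition of $\widetilde R$ back to the plain substitution $z\mapsto\zeta z$ (absorbing the constant phases, which cancel between numerator and denominator in \eqref{eq:def_f_sing_2D}) gives exactly $f_{\sing,k}(\zeta z)=\overline{\zeta}^{\,k} f_{\sing,k}(z)$ for a.e.\ $z$.

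Finally, for the spin structure: the spin-up component. I would introduce the spin-component reflection $S:=\begin{pmatrix}1&0\\0&-1\end{pmatrix}$ acting pointwise, or rather argue directly. The functions $g_k$ have vanishing first (spin-up) component by \eqref{eq:def_psi_k_eta}. One checks that the subspace $\{(0,\psi_-)^T\}$ of purely spin-down spinors, intersected with $\dom(\cD_{\eta,\alpha})$, is itself characterised intrinsically — indeed the known description of $\dom(\cD_{\eta,\alpha})$ from \cite{Persson_dirac_2d} shows that the full singular defect space $D_{\sing}$ is spanned by the $g_k$, all of which are spin-down, while $\dom(\cD_{\eta,\alpha}^{(\min)})$ is invariant under the natural symmetry exchanging the local behaviours; more cheaply, since $D_{\sing}=\spann\{g_0,\dots,g_{N-1}\}$ already consists of spin-down vectors and $f_{\sing,k}\in D_{\sing}$, the vector $f_{\sing,k}$ is automatically a linear combination of spin-down vectors, hence has no spin-up component. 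So in fact this half of the lemma is immediate once one recalls that $\underline{f_{\sing}}$ and $\underline{g}$ span the same $N$-dimensional space $D_{\sing}$; the content is entirely in the equivariance statement. I would therefore present the proof as: (i) $D_{\sing}=\spann\{g_k\}$ is a space of spin-down spinors, so every $f_{\sing,k}$ lies in it and has no spin-up part; (ii) construct $\widetilde R$ and verify it commutes with $\cD_{\eta,\alpha}$ and with $P_{\min}$; (iii) deduce $\widetilde R f_{\sing,k}=c\,\overline{\zeta}^{\,k}f_{\sing,k}$ and translate into the pointwise covariance identity.
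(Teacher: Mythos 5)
Your treatment of the rotation-equivariance (steps (ii)--(iii)) is sound and is essentially the paper's own argument in different clothing: the paper decomposes $L^2(\C)$ into the $N$ rotation-eigenspaces $L^2_{\overline{\zeta}^k}(\C)$ via the $N$-folding of the angular variable and checks that the graph inner product and the relevant subspaces respect this decomposition, which is precisely the statement that your unitary $\widetilde R$ is a graph-norm isometry commuting with $P_{\min}$. Either formulation works, and you correctly identify the construction of $\widetilde R$ as the point requiring care.

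The genuine gap is in your step (i). You assert that $D_{\sing}=\spann\{g_0,\dots,g_{N-1}\}$ and conclude that $f_{\sing,k}$, being in $D_{\sing}$, is a combination of the spin-down vectors $g_k$. But $D_{\sing}$ is by definition the graph-norm orthogonal complement of $\dom(\cD_{\eta,\alpha}^{(\min)})$ inside $\dom(\cD_{\eta,\alpha})$, and the $g_k$'s do not lie in it: $P_{\min}g_k\neq 0$ in general (graph-orthogonality of $g_k$ to the whole minimal domain would force $g_k$ to solve $(\cD_{\eta,\alpha}^2+1)g_k=0$ away from the solenoids, which fails since $\cD_{\eta,\alpha}g_k$ is supported in the annulus $1\le|z|\le 2$). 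What is true is $\dom(\cD_{\eta,\alpha})=\dom(\cD_{\eta,\alpha}^{(\min)})+\spann\{g_k\}$, so that $D_{\sing}=(1-P_{\min})\spann\{g_k\}$ is spanned by the very vectors whose spin structure you are trying to determine; the ``cheap'' argument is circular. The fix, which is the paper's actual argument, is that $\cD_{\eta,\alpha}$ is off-diagonal in spin, so the spin-up and spin-down sectors of the minimal domain are graph-orthogonal to each other; since $g_k$ is purely spin-down it is already graph-orthogonal to the spin-up sector, hence $P_{\min}g_k$ lies in the spin-down sector and $(1-P_{\min})g_k$ is spin-down. Note that your final unwinding genuinely needs this: with $\widetilde R=e^{i\beta\sigma_3}R$, the eigenvalue relation $\widetilde R f_{\sing,k}=c\,\overline{\zeta}^{\,k}f_{\sing,k}$ yields the clean covariance $f_{\sing,k}(\zeta z)=\overline{\zeta}^{\,k}f_{\sing,k}(z)$ only on the spin-down component (the spin-up component would acquire an extra factor $e^{\pm 2i\beta}$), and the normalising denominator in \eqref{eq:def_f_sing_2D} is a positive real number, so no phases cancel there.
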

  \begin{proof}
  Let $\Omega_\eta$ be the open set $\C\setminus \{\eta\zeta^k,\,0\le k\le N-1\}$.
   Observe that the $g_k$'s only have a spin down component, hence are already $\norm{\cdot}_{\cD_{\eta,\alpha}}$-orthogonal to 
   $H^1_0(\Omega_\eta)\otimes\begin{pmatrix} 1& 0\end{pmatrix}^T$, the spin-up part of the minimal domain.
   
   The graph-inner product restricted to 
   $
   H^1_0(\Omega_\eta)\otimes\begin{pmatrix} 0& 1\end{pmatrix}^T\oplus D_{\sing}
   $
   corresponds to the scalar inner product defined by the graph-norm of $-2i\partial_z$ acting on the spin down component.
   Consider the \emph{orthogonal} decomposition $L^2(\C)=\oplus_{0\le k\le N-1}L_{\overline{\zeta}^k}^2(\C)$ where:
   \[
   L_{\overline{\zeta}^k}^2(\C):=\{ f\in L^2(\C),\ \mathrm{for\ almost\ all\ }z,\ f(\zeta z)=\overline{\zeta}^kf(z)\}.
   \]
   The orthogonality follows from the polar decomposition of $L^2(\C)$ into the tensor $L^2(\R_{\ge 0},r\d r)\otimes L^2(\S^1,\d\theta)$ and the $N$-folding
   of $L^2(\S^1,\d\theta)$ (identified with $L^2(\R/(2\pi\Z),\d\theta)$, see section \eqref{sec:N_fold}).
   This decomposition naturally extends to $H^1_0(\Omega_\eta)$ and to $D_{\sing}$, and is also $\norm{\cdot}_{-2i\partial_z}$-orthogonal.
   Thus we can decompose $H^1_0(\Omega_\eta) \oplus D_{\sing}$ in a similar way. As 
   $\cip{\begin{pmatrix} 0 & 1\end{pmatrix}^T}{g_k}_{\C^2}$ is in $(H^1_0(\Omega_\eta) \oplus D_{\sing})\cap L_{\overline{\zeta}^k}^2(\C)$, we obtain
   $f_{\sing,k}\in L_{\overline{\zeta}^k}^2(\C)^2$.
   \end{proof}

Fix $\alpha$. As $\eta\to 0$, the element $(g_{\eta,k},\cD_{\eta,\alpha}g_{\eta,k})$ behaves differently depending on whether $N\alpha-k<1$ or $N\alpha-k\ge 1$ as we have seen in the proof of Lemma~\ref{lem:strg_res_cont_2D}. In the first case it converges in $L^2$ to an element of the graph of the limit operator, which is also in the graph of the corresponding \emph{minimal} operator.
In the latter case, either it converges in the singular part of the limit operator for $k=E(N\alpha)$, or its $L^2$-norm explodes to $+\infty$ while
its energy remains bounded. The different behaviors have the following impact on the basis $\underline{f_{\sing}}$. 
Below, $K_a$, for $0<a<1$, denotes the modified Bessel function of the second kind and $C_a$ the constant
	\[
	C_a:= 2\pi\int_{r=0}^{+\infty}r(K_a(r)^2+K_{1-a}(r)^2)\d r.
	\]
\begin{lemma}\label{lem:dichotomy_sing_basis}
	Let $\underline{f_{\sing}}$ be the $\norm{\cdot}_{\cD_{\eta,\alpha}}$-orthonormal family \eqref{eq:def_f_sing_2D}.
	As $\eta\to 0$ the following holds.
	\begin{enumerate}
		\item If $k>E(N\alpha)$, or $k=E(N\alpha)=N\alpha$ then
		$\norm{\cD_{\eta,\alpha}f_{\sing,k}}_{L^2}\underset{\eta\to 0}{\to}1$, that is 
		\[
		\lim_{\eta\to 0}\frac{\norm{\cD_{\eta,\alpha}f_{\sing,k}}_{L^2}}{\norm{f_{\sing,k}}_{L^2}}=+\infty.
		\]
		\item If $k=E(N\alpha)$, $e(N\alpha)>0$, then $f_{\sing,k}$ converges in $L^2(\C)^2$ to 
		\[
			C_{e(N\alpha)}^{-1/2}e^{-iE(N\alpha)\theta}K_{e(N\alpha)}(r)\begin{pmatrix} 0 & 1\end{pmatrix}^T.
		\]
		\item If $k<E(N\alpha)$, then $f_{\sing,k}\rightharpoonup 0$ in $L^2$, collapses onto $0$ and satisfies:
		\[
			\lim_{\eta\to 0}\norm{\cD_{\eta,\alpha} f_{\sing,k}}_{L^2}=0.
		\]
	\end{enumerate}
\end{lemma}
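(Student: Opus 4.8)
Everything reduces to asymptotics of norms via the identity
\[
\norm{\cD_{\eta,\alpha}f_{\sing,k}}_{L^2}^2=\frac{\norm{\cD_{\eta,\alpha}s_{\eta,k}}_{L^2}^2}{\norm{s_{\eta,k}}_{L^2}^2+\norm{\cD_{\eta,\alpha}s_{\eta,k}}_{L^2}^2},\qquad s_{\eta,k}:=(1-P_{\min})g_{\eta,k},
\]
together with $f_{\sing,k}=s_{\eta,k}/\norm{s_{\eta,k}}_{\cD_{\eta,\alpha}}$, so that it suffices to track $\norm{s_{\eta,k}}_{L^2}$, $\norm{\cD_{\eta,\alpha}s_{\eta,k}}_{L^2}$ and the $L^2$-direction of $s_{\eta,k}$ as $\eta\to 0$. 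First I would record the elementary behaviour of the explicit generators. Since $\cD_{\eta,\alpha}g_{\eta,k}=\overline{z}^k\sigma_{\R^2}\cdot(-i\nabla^{\R^2}\chi)(0,Q_\eta^{-\alpha})^T$ is supported on the annulus $\{1\le|z|\le 2\}$, where $Q_\eta(z)^{-\alpha}\to|z|^{-N\alpha}$ uniformly with all derivatives, $\cD_{\eta,\alpha}g_{\eta,k}$ converges in $L^2$ to a fixed nonzero limit; in particular $\norm{\cD_{\eta,\alpha}g_{\eta,k}}_{L^2}=\Theta(1)$. For the $L^2$-norm I would run a two-scale analysis of $\norm{g_{\eta,k}}_{L^2}^2=\int_{\C}\chi(z)^2|z|^{2k}Q_\eta(z)^{-2\alpha}\,\mathrm{d}z$: the inner region $\{|z|\lesssim\eta\}$, where the rescaling $z=\eta w$ turns the integrand into $\eta^{2(k+1-N\alpha)}|w|^{2k}|w^N-1|^{-2\alpha}$, and the outer region, where $Q_\eta\approx|z|^N$ and the integrand is $\approx|z|^{2(k-N\alpha)}$. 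One finds that $\norm{g_{\eta,k}}_{L^2}$ stays bounded exactly when $k+1>N\alpha$ (i.e. $k\ge E(N\alpha)$, the integer borderline $k=E(N\alpha)=N\alpha$ landing on the bounded side with $L^2$-limit the bounded function $\chi e^{-ik\theta}$), while $k=E(N\alpha)-1$ gives a logarithmic blow-up and $k\le E(N\alpha)-2$ a polynomial blow-up $\eta^{2(k+1-N\alpha)}$, so the three arithmetic alternatives of the statement are exactly the three regimes of $\norm{g_{\eta,k}}_{L^2}$.

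The technical heart is to transfer these asymptotics from $g_{\eta,k}$ to $s_{\eta,k}=g_{\eta,k}-P_{\min}g_{\eta,k}$, i.e. to control the regular part. For this I would use Lemma~\ref{lem:strg_res_cont_2D}: $\cD_{\eta,\alpha}\to\cD_{0,e(N\alpha)}^{(E(N\alpha))}$ in the strong-resolvent sense, and (from its proof) $g_{\eta,k}\to g_{0,k}:=\chi(z)\overline{z}^k|z|^{-N\alpha}(0,1)^T$ in $L^2_{\loc}(\C\setminus\{0\})$, the convergence being in $L^2(\C)$ precisely when $k\ge E(N\alpha)$, and $\cD_{\eta,\alpha}g_{\eta,k}\to\cD_{0,e(N\alpha)}^{(E(N\alpha))}g_{0,k}$ in $L^2$. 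A weak-compactness argument then identifies the limit of the regular parts: $P_{\min}g_{\eta,k}$ is bounded in graph norm, a weak limit lies in the graph of $\cD_{0,e(N\alpha)}^{(E(N\alpha))}$ by Lemma~\ref{lem:char_sres_conv}, it realises the minimal graph-distance to the limiting minimal domain (whose elements are, sector by sector in the decomposition $L^2(\C)=\oplus_k L^2_{\overline{\zeta}^k}(\C)$, strong-resolvent limits of elements of the $\eta$-minimal domains), and the weak convergence is upgraded to graph-norm convergence by convergence of the norms. Hence $s_{\eta,k}\to s_{0,k}:=(1-P^{(0)}_{\min})g_{0,k}$ in graph norm, where $P^{(0)}_{\min}$ is the projection for the limit operator, and $s_{0,k}\neq 0$ if and only if $g_{0,k}\notin\dom(\cD_{0,e(N\alpha)}^{(E(N\alpha))}{}^{(\min)})$, which (inspecting the $r^{k-N\alpha}$ behaviour of $g_{0,k}$ against the index $|N\alpha-k|$ of the sector of angular momentum $-k$) happens exactly when $k=E(N\alpha)$ and $e(N\alpha)>0$.

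Finally I would settle the three cases. If $k=E(N\alpha)$ and $e(N\alpha)>0$: $f_{\sing,k}\to s_{0,k}/\norm{s_{0,k}}_{\cD}$, and $s_{0,k}$ spans the one-dimensional singular subspace of $\cD_{0,e(N\alpha)}^{(E(N\alpha))}$ in the angular-momentum $-E(N\alpha)$ sector, so it solves $(\cD^2+1)u=0$ away from $0$ with the admissible singularity $r^{-e(N\alpha)}$ and exponential decay; thus $s_{0,k}\propto e^{-iE(N\alpha)\theta}K_{e(N\alpha)}(r)(0,1)^T$, a Bessel recurrence gives $\cD_{0,e(N\alpha)}^{(E(N\alpha))}\big(e^{-iE(N\alpha)\theta}K_{e(N\alpha)}(r)(0,1)^T\big)\propto e^{-i(E(N\alpha)+1)\theta}K_{1-e(N\alpha)}(r)(1,0)^T$, the graph-norm square of this profile is $C_{e(N\alpha)}$, and the stated limit follows. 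If $k>E(N\alpha)$, or $k=E(N\alpha)=N\alpha$: then $s_{0,k}=0$, so $\norm{s_{\eta,k}}_{\cD_{\eta,\alpha}}\to 0$; the sharper claim $\norm{\cD_{\eta,\alpha}s_{\eta,k}}_{L^2}/\norm{s_{\eta,k}}_{\cD_{\eta,\alpha}}\to 1$, i.e. $\norm{s_{\eta,k}}_{L^2}=o(\norm{\cD_{\eta,\alpha}s_{\eta,k}}_{L^2})$, follows from the two-scale picture: the $L^2$-mass of $s_{\eta,k}$ is carried at the inner scale $\eta$ around the solenoids with weight $\eta^{2(k+1-N\alpha)}$ (and $k+1-N\alpha>0$), while its energy $\norm{\cD_{\eta,\alpha}s_{\eta,k}}_{L^2}=\norm{\cD_{\eta,\alpha}g_{\eta,k}-\cD_{\eta,\alpha}P_{\min}g_{\eta,k}}_{L^2}$ decays only as fast as the regular-part defect, which is slower, so the normalized $f_{\sing,k}$ has vanishing $L^2$-mass and $\norm{\cD_{\eta,\alpha}f_{\sing,k}}_{L^2}\to 1$. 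If $k<E(N\alpha)$: now $\norm{g_{\eta,k}}_{L^2}\to+\infty$ while $\norm{\cD_{\eta,\alpha}g_{\eta,k}}_{L^2}$ is bounded, and the same holds for $s_{\eta,k}$ once the (comparably large but graph-norm-controlled) regular part is subtracted, so normalizing forces $\norm{f_{\sing,k}}_{L^2}\to 1$ and $\norm{\cD_{\eta,\alpha}f_{\sing,k}}_{L^2}\to 0$; moreover $f_{\sing,k}\rightharpoonup 0$, since there is no admissible limiting singular element in this sector and local compactness gives $f_{\sing,k}\to 0$ in $L^2_{\loc}(\C\setminus\{0\})$, i.e. all the mass escapes onto the collapsing solenoids.

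The main obstacle is the regular-part control: the minimal domains $\dom(\cD_{\eta,\alpha}^{(\min)})$ move with $\eta$, so the convergence $P_{\min}g_{\eta,k}\to P^{(0)}_{\min}g_{0,k}$ is not formal from strong-resolvent convergence and must be argued directly (weak compactness, lower/upper semicontinuity of the graph norm, and the explicit sector-by-sector description of the minimal domains); and in the first and third cases one additionally needs the \emph{rates} of the two-scale estimates, not merely $s_{\eta,k}\to 0$ resp. $\norm{s_{\eta,k}}_{L^2}\to\infty$.
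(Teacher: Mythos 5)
Your overall skeleton is reasonable and you correctly identify the three regimes, the two-scale origin of the dichotomy in $k$, and the Bessel profile $K_{e(N\alpha)}$ as the only admissible spin-down singular limit. But there are two genuine gaps, and they sit exactly where the paper does its real work. First, your transfer from $g_{\eta,k}$ to $s_{\eta,k}=(1-P_{\min})g_{\eta,k}$ rests on the claim that $P_{\min}g_{\eta,k}\to P_{\min}^{(0)}g_{0,k}$ in graph norm, with the weak convergence ``upgraded \ldots by convergence of the norms.'' This is circular: the relevant norm is $\norm{P_{\min}g_{\eta,k}}_{\cD_{\eta,\alpha}}^2=\norm{g_{\eta,k}}_{\cD_{\eta,\alpha}}^2-\norm{s_{\eta,k}}_{\cD_{\eta,\alpha}}^2$, and the convergence of $\norm{s_{\eta,k}}_{\cD_{\eta,\alpha}}$ is precisely what you are trying to establish. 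The paper avoids this by two concrete devices you do not have: for case (2), an explicit comparison element $f_{\eta,E(N\alpha)}$ built from the power series of $K_{e(N\alpha)}$ with the leading $r^{-e(N\alpha)}$ term replaced by $\overline{z}^{E(N\alpha)}(\overline{z}^N-\eta^N)^{-\alpha}\chi$, which differs from $Cg_{\eta,E(N\alpha)}$ by an element of the minimal domain (hence has the \emph{same} singular part) and converges strongly; and for case (3), the Lichnerowicz identity \eqref{eq:Lichnero}, which gives an $H^1$ bound on the regular part of the normalized $G_{\eta,k}$, hence strong $L^2$ compactness, hence (since all the mass of $G_{\eta,k}$ concentrates at $0$) that the regular part vanishes in the limit. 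Your case (3) phrase ``comparably large but graph-norm-controlled regular part'' skips exactly this compactness step, without which nothing prevents the divergent $L^2$ mass from living in the regular part.

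Second, in case (1) the claim $\norm{s_{\eta,k}}_{L^2}=o(\norm{\cD_{\eta,\alpha}s_{\eta,k}}_{L^2})$ is not justified by your two-scale computation: that computation concerns $g_{\eta,k}$, whose graph norm stays bounded away from zero, whereas here \emph{both} $\norm{s_{\eta,k}}_{L^2}$ and $\norm{\cD_{\eta,\alpha}s_{\eta,k}}_{L^2}$ tend to $0$ and it is their ratio that must be controlled; the sentence ``its energy decays only as fast as the regular-part defect, which is slower'' is not a proof. The paper instead normalizes first and argues on $f_{\sing,k}$ directly: a weak limit of $(f_{\sing,k},\cD_{\eta,\alpha}f_{\sing,k})$ is graph-orthogonal to the limiting minimal domain, spin-down, and lies in the sector $L^2_{\overline{\zeta}^k}(\C)$, so for $k>E(N\alpha)$ (or $k=N\alpha$) it must vanish, which forces $\norm{f_{\sing,k}}_{L^2}\to 0$. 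If you want to keep your rate-based route, you would need to rule out concentration of the $L^2$ mass of the normalized singular element at the origin in these sectors (e.g.\ by a blow-up at scale $\eta$ and the absence of spin-down $L^2$ zero modes of angular momentum $-k$ for the $N$-solenoid operator when $k+1>N\alpha$); as written, the step is missing.
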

\begin{proof}
In all cases, we check the convergence along any decreasing sequence $\eta_n\to 0$.
Remember that Lichnerowicz formula applies for elements $\phi$ in $\dom(\cD_{\eta,\alpha}^{(\min)})$ \cite{dirac_s3_paper1}*{Proposition~5}:
\begin{equation}\label{eq:Lichnero}
\norm{\cD_{\eta,\alpha}e^{-i\alpha\Theta_{\eta}}(e^{i\alpha\Theta_{\eta}}\phi)}_{L^2}=\norm{\nabla^{\R_2}(e^{i\alpha\Theta_{\eta}}\phi)\restriction\{\Theta_{\eta}\neq 0\}}_{L^2}.
\end{equation}

We deal with the two first cases to begin with.
Up to the extraction of a subsequence, we assume that
$(f_{\sing,k}^{(n)},\cD_{\eta_n,\alpha}f_{\sing,k}^{(n)})\rightharpoonup (f_s,b_s)$ in $[L^2(\C)^2]^2$, and $f_{\sing,k}^{(n)}$ strongly in $L^2_{\loc}(\C\setminus \{0\})^2$. Let $\phi\in C^1_c(\C\setminus\{0\})^2$, for $n$ big enough, we have:
\[
\cip{\cD_{\eta_n,\alpha}f_{\sing,k}^{(n)}}{\phi}_{L^2}=\cip{f_{\sing,k}^{(n)}}{\cD_{\eta_n,\alpha}\phi}_{L^2}
		\underset{n\to+\infty}{\to} \cip{f_s}{\cD_{0,e(N\alpha)}^{(E(N\alpha))}\phi}_{L^2}.
\] 
Thus either $f_s=0$, or $f_s$ is in the maximal domain of $\sigma\cdot(-i\nabla^{\R^2}+N\alpha\nabla\theta)$. The latter domain is known explicitely (see \cite{Persson_dirac_2d} for instance). Since $f_{\sing,k}^{(n)}$ is known to be spin down, we obtain $f_s=0$ or 
$f_s=e^{-iE(N\alpha)\theta} K_{e(N\alpha)}(r)\begin{pmatrix}  0& 1\end{pmatrix}^T$, in which case:
\[
\sigma\cdot(-i\nabla^{\R^2}+N\alpha\nabla\theta)f_s=e^{-i[E(N\alpha)+1]\theta} K_{1-e(N\alpha)}(r)\begin{pmatrix} 1& 0\end{pmatrix}^T.
\]
It is obtained for $k=E(N\alpha)$, $e(N\alpha)>0$. By orthogonality if $k>E(N\alpha)$, then $f_s=0$ and $\lim_{n}\norm{\cD_{\eta_n,\alpha}f_{\sing,k}^{(n)}}_{L^2}=1$. Similarly, if $k=E(N\alpha)=N\alpha$, then the limit operator is a gauge transform of the free Dirac operator, and by orthogonality with the minimal domain, we get the same results as for $k>E(N\alpha)$.
Let us check norm convergence in case (2). 
Consider the smooth partition of unity $\chi+(1-\chi)$, and the power series representation of $K_{e(N\alpha)}(r)$.
We define $K_{e(N\alpha),\eta}$ as the function given by the same power series except that the term 
$\tfrac{\Gamma(e(N\alpha))}{2^{1-e(N\alpha)}}r^{-e(N\alpha)}$ is replaced by:
\[
\frac{\Gamma(e(N\alpha))}{2^{1-e(N\alpha)}}\Big[\frac{\overline{z}^{E(N\alpha)}}{(\overline{z}^N-\eta^N)^\alpha}\chi+(1-\chi)r^{-e(N\alpha)}\Big],
\]
and form $f_{\eta,E(N\alpha)}:=e^{-iE(N\alpha)\theta} K_{e(N\alpha),\eta}\begin{pmatrix}  0& 1\end{pmatrix}^T\in \dom(\cD_{\eta,\alpha})$. 

Setting $C=\tfrac{\Gamma(e(N\alpha))}{2^{1-e(N\alpha)}}$, we have $f_{\eta,E(N\alpha)}-Cg_{\eta,E(N\alpha)}\in\dom(\cD_{\eta,\alpha}^{(\min)})$, thus $f_{\eta,E(N\alpha)}$ and $g_{\eta,E(N\alpha)}$ have the same singular part.
Furthermore $f_{\eta,E(N\alpha)}$ converges strongly, and up to the extraction of a subsequence, the regular part of $f_{\eta_n,E(N\alpha)}$ also converges strongly. This proves the strong convergence of $f_{\sing,E(N\alpha)}$ in $L^2$.

\medskip

We now turn to the last case.
Observe that $\norm{g_{\eta,k}}_{L^2}$ tends to $+\infty$ and $\norm{\cD_{\eta,\alpha}g_{\eta,k}}_{L^2}$ remains finite
as $\eta\to 0$. Consider $G_{\eta,k}:=\tfrac{g_{\eta,k}}{\norm{g_{\eta,k}}_{\cD_{\eta,\alpha}}}$: it collapses to $0$, and its energy tends to $0$.
We write $h_{\eta_n,k}$ resp. $\wt{f}_{\eta_n,k}$ its regular resp. singular part.
Let us study the sequence $h_{\eta_n,k}:=P_{\min}G_{\eta_n,k}\in\dom(\cD_{\eta,\alpha}^{(\min)})$. 
It is graph-norm bounded (by Pythagorean theorem) and Lichnerowicz formula~\eqref{eq:Lichnero} applies.
So up to extracting a subsequence it converges strongly in $L^2(\C)^2$. But all the mass of $G_{\eta_n,k}$ concentrates on $0$:
\[
1=\lim_{\eps\to 0}\liminf_{n\to+\infty}\int_{D_\eps[0]}\big(|G_{\eta_n,k}|^2+|\cD_{\eta_n,\alpha}G_{\eta_n,k}|^2\big)=\lim_{\eps\to 0}\liminf_{n\to+\infty}\int_{D_\eps[0]}|G_{\eta_n,k}|^2.
\]
Once again, Pythagorean theorem gives
\begin{equation*}
1=\lim_n\norm{G_{\eta_n,k}}_{L^2}^2=\norm{h_{\eta_n,k}}_{L^2}^2+\norm{\cD_{\eta_n,\alpha}h_{\eta_n,k}}_{L^2}^2+\norm{\wt{f}_{\eta_n,k}}_{L^2}^2+\norm{\cD_{\eta_n,\alpha}\wt{f}_{\eta_n,k}}_{L^2}^2,
\end{equation*}
Since $G_{\eta_n,k}=h_{\eta_n,k}+\wt{f}_{\eta_n,k}$ collapses to $0$, and that $h_{\eta_n,k}$ converges in $L^2$, necessarily its limit is $0$.
Taking the limsup in the expression above, we obtain the limit $\lim_n(\norm{\cD_{\eta_n,\alpha}h_{\eta_n,k}}_{L^2}+\norm{\cD_{\eta_n,\alpha}\wt{f}_{\eta_n,k}}_{L^2})=0$.
\end{proof}

Let us anticipate and introduce the $N$-folding of functions in $L^2(\T_{N\ell})$, which will enable us to link the above $2$D-operators
with $\cD_{\bA_{\full}}$.

\subsubsection{The $N$-folding of $L^2(\T_{N\ell})$}\label{sec:N_fold}

Recall $\zeta:=e^{2i\pi\tfrac{M}{N}}$. We decompose $L^2(\T_{N\ell})$ according to the cable representation
$\gamma_{\eta}$ the following way. Using \eqref{eq:hey_root_guy}, one realizes that any $f\in L^2(\T_{N\ell})$ can be decomposed as follows:
\[
f(s)=\sum_{k=0}^{N-1}\ \frac{1}{N}\sum_{j=0}^{N-1}\zeta^k f(s+j\ell)=:\sum_{k=0}^{N-1} f_k(s),
\]
where the $f_k$'s are orthogonal to each other. In other words, writing
\[
L_{\zeta^k}^2(\T_{N\ell}):=\big\{f\in L^2(\T_{N\ell}),\ f(s+\ell)=\zeta^k f(s)\big\},
\]
we have the $L^2$-orthogonal decomposition:
\[
	L^2(\T_{N\ell})=\underset{0\le k\le N-1}{\overset{\perp}{\bigoplus}}L_{\zeta^k}^2(\T_{N\ell}),
\]
and $H^1_{\zeta^k}(\T_{N\ell})$ denotes $H^1(\T_{N\ell})\cap L_{\zeta^k}^2(\T_{N\ell})$. 
The orthogonality is obvious and follows from \eqref{eq:hey_root_guy}.

We call this procedure the $N$-folding of $L^2(\T_{N\ell})$ and $H^1(\T_{N\ell})$ respectively. 
The key idea is that the $N$-folding of $H^1(\gamma_{\eta}\simeq_{\mathrm{top}}\T_{N\ell})$ is naturally realized with the help of complex polynomial
in the parameter $\overline{z}=\rho e^{-i\theta}$ of the cross-section. As for $\cD_{\eta,\alpha}$, this enables us to describe
the domain of $\sigma(-i\nabla +2\pi\alpha \bA_{\gamma_{\eta}})$ in terms of functions in $H^1(\T_{\ell})$ and not in $H^1(\T_{N\ell})$,
and ultimately to study the continuity issue (compare \eqref{eq:ansatz_tubulaire_u}-\eqref{eq:ess_dom_tub_u} and \eqref{eq:descr_dom_D_full}-\eqref{eq:ansatz}).

\subsection{Description of the domain of $\cD_{\bA_{\full}}$}\label{sec:desc_dom}
We recall that we deal with Dirac operators in the $\S^3$-metric. 
The operators act on $L^2_\Omega:=L^2(\R^3,\Omega^3\d \bx)^2$ where $\Omega(\bx)=\tfrac{2}{1+|\bx|^2}$ is the conformal factor.
Also $H^1_{\Omega}$ denotes the Hilbert space $H^1(\R^3,\Omega^3\d\bx)^2\widehat{=} H^1(\S^3)^2$, 
defined w.r.t. the connection \cite{ErdSol01}*{Section~4}
\begin{equation}\label{eq:connexion_Omega}
\nabla_X^{(\Omega)}:=\nabla^{\R^3}_X+\frac{1}{4}\Omega^{-1}[\sigma\cdot X,\sigma\cdot \nabla^{\R^3}\Omega].
\end{equation}
For a given magnetic potential $\bA$, the Dirac operator $\cD_{\bA}=\cD_{\bA}^{(\Omega)}$ in the $\S^3$-metric is obtained from that in the flat metric
by \cite{ErdSol01}*{Section~4}
\[
	\cD_{\bA}^{\Omega}= \Omega^{-2}\cD_{\bA}^{\R^3}\Omega.
\]

We only deal with the case $MN$ even, the case $MN$ odd is similar to derive and left to the reader.
Here $\bA_{\full}=\bA_{\full}(\eta,t)$ is the sum of the Biot and Savart gauge $2\pi t\bA_{\gamma_{\eta}}$ and an auxiliary singular gauge 
$\bA_{\mathrm{aux}}$ equal to $2\pi\tfrac{\alpha_a}{K}\sum_{k=1}^K[D_-(s_k)]$. 
We write $\gamma_{\aux}$ the link $\cup_{1\le j\le K}\partial D_-(s_k)$, $S_{\aux}$ its gauge $\cup_{1\le k\le K}D_-(s_k)$. 
The link $\gamma_{\aux}\cup \gamma_{\eta}$ is denoted by $\gamma_{\full}$ and $\gamma_{\aux}\cup\gamma_0$ by $\gamma_{\lim}$.

In the next section we define the Dirac operator. Then in Sections \ref{sec:loc} and \ref{sec:remov_phase}, 
we recall two tools developped in \cite{dirac_s3_paper1} for its analysis, namely the localization functions and the phase jump functions.
We give in Section \ref{sec:ess_dom} an essential domain for $\cD_{\bA_{\full}}$ adapted to the cable representation.

\subsubsection{Definition of $\cD_{\bA_{\full}}$}\label{sec:def_dirac}

Let us recall the definition of the associated Dirac operator. In a tubular neighborhood $B_{\eps_1}[\gamma_{\full}]$ 
of $\gamma_{\eta}$ and $\gamma_{\aux}$, we can extend the tangent vector $\bT$ by parallel transportation along $\gamma_{\full}$-orthogonal geodesics. 
We can indifferently choose the flat metric of $\R^3$ or the metric of $\S^3$ as we will see below.
Of course we choose the former as we did so throughout the paper.

Let us consider a Seifert surface $S_{\eta}$ for $\gamma_{\eta}$. The tubular coordinates of $\gamma_{\full}$, 
defined on $B_{\eps_1}[\gamma_{\full}]$ are written $(u,r,\vartheta)$, while those of $\gamma_{\lim}$
are written $(s,\rho,\theta)$ and defined on $B_\eps[\gamma_{\lim}]$ with $\eps>0$ fixed.

Let $\Psi$ be the (trivial) spinor bundle of $\S^3$, with chart $\R^3\times\C^2$ (without the north pole). Locally around the link we can decompose $\Psi$ into two line bundles
corresponding to the eigen-spaces of $\sigma\cdot \bT$. We consider $\xi_+$ and $\xi_-$ two unit sections of these line bundles:
\[
	\sigma\cdot \xi_{\pm}=\pm\xi_{\pm},
\]
and call $P_+$ resp. $P_-$ the pointwise projections onto them. Thus $P_+$ and $P_-$ are local sections of endomorphisms on $\Psi$. We 
fix the relative phase of the $\xi_{\pm}$'s according to the Seifert surfaces $D_-(s_k)$ and $S$ (for $\gamma_0$). That is, writing $(\bT,\bS,\bN)$
the corresponding Seifert frame, they are chosen in such a way that for every $1$-form $\omega\in\Omega^1(B_{\eps}[\gamma_{\lim}])$ there holds pointwise
\begin{equation}\label{eq:relative_phase}
 \omega(\bS)+i\omega(\bN)=\cip{\xi_-}{\sigma(\omega)\xi_+}_{\C^2}.
\end{equation}

We are ready to define the domain. The singular part $2\pi\tfrac{\alpha_a}{K}\sum_{k=1}^K[D_-(s_k)]$ imposes a phase jump $\mathrm{exp}(-2i\pi\alpha_a/K)$ 
across each surface\footnote{due to the orientation of these disks, from $s=s_k^-$ to $s=s_k^+$, there is a phase jump of $\mathrm{exp}(2i\pi\alpha_a/K)$.} 
$D_-(s_k)$, and, writing $S_{\aux}:=\cup_{k=1}^K D_-(s_k)$, we consider
\begin{multline*}
	\cA_0:=\Big\{\psi\in H_{\Omega}^1,\ (\supp\,\psi)\cap \gamma_{\full}=\emptyset\\ 
			\&\quad \psi\restriction{(D_-(s_k))_+}=\mathrm{exp}(-2i\pi\alpha_a/K)\psi\restriction{(D_-(s_k))_-}\in L^2_{\loc}(S_{\aux}\setminus \gamma_{\aux})^2\Big\}.
\end{multline*}
The operator $\cD_{\bA_{\full}}$ acts on $\cA_0$ as follows:
\[
	\cD_{\bA_{\full}}\psi:= \big(\Omega^{-2}\sigma\cdot(-i\nabla)(\Omega\psi)\big)\restriction{\R^3\setminus S_{\aux}}\in L_\Omega^2.
\]
The minimal operator $\cD_{\bA_{\full}}^{(\min)}$ is the closure of $[\cD_{\bA_{\full}}]\restriction{\cA_0}$ in $L_\Omega^2$, and defines 
a symmetric operator on $\S^3$. (The north pole has co-dimension $3$ in $\S^3$).

Let $\wt{\chi}\in \sD(\R^3,[0,1])$ be a cut-off function whose support is included in $B_{\eps_1}[\gamma_{\full}]$, 
which only depends on the distance $r$ to the link $\gamma_{\full}$ and which equals $1$ in a small neighborhood around it. 
We define the self-adjoint extension $\cD_{\bA_{\full}}=\cD_{\bA_{\full}}^{(-)}$
of $\cD_{\bA_{\full}}^{(\min)}$ by (see \cite{dirac_s3_paper1}):
\[
	\dom(\cD_{\bA_{\full}})=\big\{\psi\in \dom\,(\cD_{\bA_{\full}}^{(\min)})^*\subset L_\Omega^2,\ P_+(\wt{\chi}\psi)\in \dom(\cD_{\bA_{\full}}^{(\min)})\big\}.
\]
Observe that for $\psi\in \dom\,(\cD_{\bA_{\full}}^{(\min)})^*$, we have $\dist(\cdot,\gamma_{\full})\wt{\chi} \psi\in \dom(\cD_{\bA_{\full}}^{(\min)})$
\cite{dirac_s3_paper1}*{Lemma~12}. That is why this definition does not depend
on $\wt{\chi}$ and that we could extend $\bT$ in the flat metric or in the metric of $\S^3$.

\subsubsection{Localization}\label{sec:loc}
Now we describe the useful partition of unity \eqref{def:part_unity}.
Let $\eps>0$ such that for all $0\le k \neq k'\le K$ the intersection $B_{\eps}[\gamma_k] \cap B_{\eps}[\gamma_{k'}]$ is empty.
This implies that the tubular coordinates $(s,\rho,\theta)$ (with respect to the introduced Seifert frames \eqref{eq:tub_coord}) are well-defined on $B_{\eps}[\gamma_k]$.
Note that by the continuity of the coordinates, the condition is still satisfied for the link $\gamma_{\full}$ for $\eta>0$ small enough.

Let $\chi: \R \to \R_+$ be a smooth function with $\supp \chi \in [-1,1]$ and $\chi(x) = 1$ for $x \in [-2^{-1},2^{-1}]$. 
For $0\le k\le K$, we define the localization function at level $0<\delta <\eps$ by
\begin{equation*}
	\chi_{\delta,\gamma}: 
	\begin{array}{ccl}
		B_{\delta}[\gamma_k]\subset\R^3 &\longrightarrow& \R_+\\
		\bx = \gamma_k(s)+\rho_{\gamma_k} \bv_0(s)
		&\mapsto& \chi\left(\rho_{\gamma}\delta^{-1}\right),
	\end{array}
\end{equation*}
where $\bv_0(s)\in\rT_{\gamma_k}\gamma_k(s)$. We then pick $\delta>0$ such 
that
$$
	0<\delta< \eps \cdot \min\left\{1, \left(\norm{\kappa}_{L^{\infty}} + \sqrt{\eps + \norm{\kappa}_{L^{\infty}}^2}\right)^{-1}\right\},
$$
Here, writing $(s,\rho,\theta)$ the tubular coordinates of the link $\gamma_{\lim}$, $\norm{\kappa}_{L^{\infty}}$ denotes:
\[
	\sup_{0\le k\le K}\sup_{\theta,s}\big| \cos(\theta)\kappa_{g,\gamma_k}(s)+\sin(\theta)\kappa_{n,\gamma_k}(s)\big|.
\]

Furthermore, we set
$$
	\chi_{\delta,S_{k}}(\bx) 
	:= \chi\left(4\tfrac{\dist_{\R^3}(\bx,S_{k})}{\delta}\right)\left(1-\chi_{\delta,\gamma_{k}}(\bx)\right),
$$
which has support close to the Seifert surface $S_{k}$, but away from the knot $\gamma_{k}$.
(Here, by convention $\chi_{\delta,S_0}=0$ as we have picked the B.S. gauge for the knot $\gamma_0$).
The remainder is then defined as
$$
	\chi_{\delta,R_k}(\bx) := 1-\chi_{\delta,S_k}(\bp) - \chi_{\delta,\gamma_k}(\bp).
$$
We get a partition of unity subordinate to $\gamma_k,S_k$.
The partition of unity for the entire link $\gamma_{\lim}$ is then given by
\begin{align}\label{def:part_unity}
	1 &= \prod_{k=0}^K\left(\chi_{\delta,\gamma_k}(\bx) + \chi_{\delta,S_k}(\bx)
	+\chi_{\delta,R_k}(\bp) \right)\nn\\
	&= \sum_{\underline{a} \in \{1,2,3\}^{K+1}}\chi_{\delta,\underline{a}}(\bx)
	=\sum_{k=0}^K\chi_{\delta,\gamma_k}(\bx)+\sum_{\underline{a} \in \{2,3\}^{K+1}}\chi_{\delta,\underline{a}}(\bx),
\end{align}
where $\chi_{\delta,\underline{a}}$ is the product $\prod_{k=0}^J\chi_{\delta,X_k}$, with 
$$
	X_k= \left\{\begin{array}{lr}
	\gamma_k, &a_k = 1,\\
	S_k, &a_k = 2,\\
	R_k, &a_k = 3.
	\end{array}\right.
$$

\subsubsection{Phase jump functions}\label{sec:remov_phase}
\paragraph{\emph{Around $\gamma_0$}}
Let $1\le k\le K$: $\gamma_0$ intersects $S_{k}$ non trivially.
Up to taking $\eps>0$ small enough we can assume that 
the two subsets $B_{\eps}[\gamma_0]\cap S_{k}$ and $\gamma_0\cap S_{k}$ have the same
number of connected components.

We define a  map $E_{0,k}$ describing the phase jumps on $B_{\eps}[\gamma_0]$ due to $S_{k}$.
\medskip

The curve $\gamma_0$ intersects $S_{k}$ at the point
$0\le s_1<\ell_{0}$. Call $C_1$ the corresponding connected
components of the intersection $B_{\eps}[\gamma_0]\cap S_{k}$.
The surface $S_{k}$ induces a phase jump $e^{\pm 2i\pi\alpha_k}$ across the cut $C_1$.

		The cut neighborhood
		$B_\eps[\gamma_0]\setminus S_{k}=:B(\eps,0,k)$ is contractible. 
		Thus we can lift the coordinate map $s_{\gamma_0}(\cdot)$ on this subset which gives a smooth function
		\[
		    s_{0,k}:B_\eps[\gamma_0]\setminus S_{k}\mapsto \mathbb{R},
		\]
		satisfying for all $\bx\in B_\eps[\gamma_0]\setminus S_{k}$:
		\[
		    \mathrm{exp}\Big(\frac{2i\pi}{\ell_0}s_{0,k}(\bx)\Big)=\mathrm{exp}\Big(\frac{2i\pi}{\ell_0}s_{\gamma_0}(\bx)\Big).
		\]
		We then define for all $\bx\in B_\eps[\gamma_0]\setminus S_{k}\supset B_\eps[\gamma_0]\cap\Omega_{\uS}$
		\begin{equation*}
			E_{0,k}(\bx):=\exp\Big(\mp 2i \pi\alpha_k\frac{s_{0,k}(\bx)}{\ell_0}\Big).
		\end{equation*}

Accompanying these decompositions, for $\bp \in B_{\eps}[\gamma_k] \cap \Omega_{\uS}$ we define
the function
\begin{equation*}
	E_{\delta,0}(\bp) :=
	\prod_{1\le k\le K}E_{0,k}(\bp),
\end{equation*}
and introduce the slope
\begin{equation*}
	c_a:=\sum_{1\le  k\le K}\frac{2\pi\alpha_{a}}{K}\link(\gamma_{0},\gamma_{k})=-2\pi\alpha_a.
\end{equation*}
Note that the function $E_{\delta,0}$ has a bounded derivative in $B_{\eps}[\gamma_0] \cap \Omega_{\uS}$, and that for any vector field $X$,
$\overline{E_{\delta,0}}X(E_{\delta,0})=ic_a X(s_{\gamma_0}(\cdot))$ can be extended to an element in $C^1(B_{\eps}[\gamma_0])$.

The $E_{0,k}$'s are $\S^1$-valued functions, locally depending only on $s_{\gamma_0}$,
with a fixed slope and the correct phase jump across the cut $C_1$: 
they are a unique up to a constant phase. Note also the important remark.
\begin{rem}
Due to our choice of the auxiliary potentials, the phase jump function $E_{\delta,0}$ is also adapted to the cable knot $\gamma_{\eta}$.
Henceforth the following proposition holds.
\end{rem}

\begin{proposition}\label{prop:sa_link}
	Let $(\,\cdot\,) = \{(\max),(-),(\min)\}$, then for $\eta$ small enough
	the map $\psi \to \overline{E_{\delta,0}}\psi$ maps
	the set $\{\psi \in \dom(\cD_{\bA_{\full}}^{(\,\cdot\,)}): \supp\psi \in B_{\eps}[\gamma_0] \cap \Omega_{\uS}\}$
	onto the set $\{\psi \in \dom(\cD_{\bA_{\gamma_{\eta}}}^{(\,\cdot\,)}): \supp\psi \in B_{\eps}[\gamma_0] \cap \Omega_{\uS}\}$.
	And for $\psi \in \dom(\cD_{\bA_{\full}}^{(\,\cdot\,)})$ localized around $\gamma_0$, we have:
	\[
	 \cD_{\bA_{\full}}^{(\,\cdot\,)}\psi=E_{\delta,0}\cD_{2\pi\alpha \bA_{\gamma_{\eta}}}^{(\,\cdot\,)}(\overline{E}_{\delta,0}\psi)+\frac{c_a}{h}\sigma(\bT^{\flat})\psi.
	\]
\end{proposition}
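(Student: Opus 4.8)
The plan is to read Proposition~\ref{prop:sa_link} as a \emph{local gauge transformation} statement: on the localization region $B_{\eps}[\gamma_0]\cap\Omega_{\uS}$ the auxiliary singular gauge $\bA_{\aux}$ differs from the trivial gauge only by the phase jumps it imposes across the disks $D_-(s_k)$, and $E_{\delta,0}$ is precisely the unimodular function that kills those jumps; the price to pay is a residual smooth potential coming from its (nonzero) logarithmic derivative, which is the zeroth-order term $\tfrac{c_a}{h}\sigma(\bT^{\flat})$ in the statement.

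\textbf{Step 1: the map $\psi\mapsto\overline{E_{\delta,0}}\psi$ is admissible.} Since $E_{\delta,0}$ is $\S^1$-valued, multiplication by $\overline{E_{\delta,0}}$ is an $L^2_\Omega$-isometry on spinors supported in $B_{\eps}[\gamma_0]\cap\Omega_{\uS}$, and, because $E_{\delta,0}$ and $\overline{E_{\delta,0}}$ have bounded derivative there (the displayed identity $\overline{E_{\delta,0}}X(E_{\delta,0})=ic_aX(s_{\gamma_0})$ even yields a $C^1$ extension of $\overline{E_{\delta,0}}\nabla E_{\delta,0}$ across $\gamma_0$), it preserves $H^1_\Omega$ away from the singular sets. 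The key point is that the jump of $E_{\delta,0}$ across each $D_-(s_k)$ is precisely the inverse of the jump $\mathrm{exp}(-2i\pi\alpha_a/K)$ built into the definition of $\cA_0$, so $\overline{E_{\delta,0}}\psi$ has \emph{trivial} jump across $S_{\aux}$, i.e.\ it is genuinely $H^1_\Omega$ there. Hence for $\psi$ in the core $\{\psi\in\cA_0:\supp\psi\subset B_{\eps}[\gamma_0]\cap\Omega_{\uS}\}$ of $\cD_{\bA_{\full}}^{(\min)}$, the spinor $\overline{E_{\delta,0}}\psi$ is an $H^1_\Omega$ spinor supported off $\gamma_\eta$, hence lies in $\dom(\cD_{2\pi\alpha\bA_{\gamma_\eta}}^{(\min)})$, and conversely; near $\gamma_\eta$ (where $\gamma_\eta$ pierces the $D_-(s_k)$'s) removing the jump turns the singular class of $\cD_{\bA_{\full}}$ into that of $\cD_{2\pi\alpha\bA_{\gamma_\eta}}$, while $E_{\delta,0}$ remains a smooth unimodular factor elsewhere on $\gamma_\eta$. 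This gives the domain correspondence for $(\min)$, and — after Step~3 — for the closures $(\max)$ and $(-)$.

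\textbf{Step 2: the intertwining formula.} Writing $\psi=E_{\delta,0}\phi$, I would expand the action of $\cD_{\bA_{\full}}$ on $\psi$ — which away from $S_{\aux}$ coincides with the action of $\cD_{2\pi\alpha\bA_{\gamma_\eta}}$, in the form $\Omega^{-2}\sigma\cdot(-i\nabla+2\pi\alpha\bA_{\gamma_\eta})(\Omega\,\cdot\,)$ — by the Leibniz rule. Because $E_{\delta,0}$ is scalar it commutes with $\sigma$ and with multiplication by $\bA_{\gamma_\eta}$ and by $\Omega$; the terms in which $-i\nabla$ falls on $\Omega\phi$ assemble into $E_{\delta,0}\,\Omega^{-2}\sigma\cdot(-i\nabla+2\pi\alpha\bA_{\gamma_\eta})(\Omega\phi)=E_{\delta,0}\cD_{2\pi\alpha\bA_{\gamma_\eta}}\phi$, the jump conditions needed for $\phi\in\dom(\cD_{2\pi\alpha\bA_{\gamma_\eta}})$ having been arranged in Step~1. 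The remaining term is $\Omega^{-1}\sigma\cdot(-i\nabla E_{\delta,0})\phi=\Omega^{-1}\sigma\cdot\bigl(E_{\delta,0}\,c_a\,\nabla s_{\gamma_0}\bigr)\phi=c_a\,\Omega^{-1}\sigma\cdot(\nabla s_{\gamma_0})\,\psi$, and substituting $\nabla s_{\gamma_0}=h^{-1}\bT$ from \eqref{eq:nabla_local_coord} (recall $h=1-\rho v_2$ on $B_\eps[\gamma_0]$) produces the announced correction $\tfrac{c_a}{h}\sigma(\bT^{\flat})\psi$. Since $E_{\delta,0}$ and $\overline{E_{\delta,0}}$ are bounded and the multiplier $\tfrac{c_a}{h}\sigma(\bT^{\flat})$ is a bounded $C^1$ section, the identity, proved on the core, extends by graph-norm closure to $\dom(\cD_{\bA_{\full}}^{(\min)})$; this settles the case $(\min)$.

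\textbf{Step 3 and main obstacle.} For $(\max)$ one takes adjoints in the minimal-operator identity of Step~2: the correction $\tfrac{c_a}{h}\sigma(\bT^{\flat})$ is symmetric ($c_a/h$ real, $\sigma(\bT^{\flat})$ Hermitian), so the same formula holds with $\cD^{(\max)}$ and $\psi\mapsto\overline{E_{\delta,0}}\psi$ identifies the localized maximal domains. For the self-adjoint extension $(-)$ one uses the definition $P_+(\wt{\chi}\psi)\in\dom(\cD^{(\min)})$: near $\gamma_0$ the projector $P_+$ is built from $\sigma\cdot\bT$ with $\bT$ the common parallel transport of $\dot\gamma_\eta$ (the adapted representation gives $\dot\gamma_\eta\parallel\dot\gamma_0$), hence is the same operator for $\bA_{\full}$ and $2\pi\alpha\bA_{\gamma_\eta}$; as $E_{\delta,0}$ is scalar it commutes with $P_+$ and with $\wt{\chi}$, so $P_+(\wt{\chi}\psi)\in\dom(\cD_{\bA_{\full}}^{(\min)})$ iff $P_+(\wt{\chi}\,\overline{E_{\delta,0}}\psi)\in\dom(\cD_{2\pi\alpha\bA_{\gamma_\eta}}^{(\min)})$, which is exactly the $(-)$-condition on the transformed spinor. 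The delicate part of the whole argument is the ``matching of singular classes'' in Step~1: one must check that conjugation by the merely $C^1$, discontinuous-across-$S_{\aux}$ function $E_{\delta,0}$ neither creates nor destroys the precise singular/jump behaviour defining the domains, in particular at the finitely many points where $\gamma_\eta$ crosses the auxiliary disks $D_-(s_k)$, and that $\overline{E_{\delta,0}}\nabla E_{\delta,0}$ genuinely extends $C^1$ across $\gamma_0$ — this is what the remark preceding the proposition and the displayed identity for $\overline{E_{\delta,0}}X(E_{\delta,0})$ are designed to provide. Once this is granted, the rest is routine Leibniz-rule and adjoint manipulation.
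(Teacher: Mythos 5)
Your argument is correct and is exactly the gauge-transformation reasoning the paper leaves implicit: Proposition~\ref{prop:sa_link} is stated without a written proof, being justified only by the construction of $E_{\delta,0}$ (its jumps across the $D_-(s_k)$ cancel those required of elements of $\dom(\cD_{\bA_{\full}})$, and $\overline{E_{\delta,0}}X(E_{\delta,0})=ic_aX(s_{\gamma_0})$ supplies the zeroth-order term) together with the remark that $E_{\delta,0}$ is adapted to $\gamma_\eta$, so your Steps 1--3 simply write out what the paper defers to \cite{dirac_s3_paper1}. One cosmetic point: your own intermediate line correctly produces $\Omega^{-1}\tfrac{c_a}{h}\sigma(\bT^{\flat})\psi$ from conjugating $\Omega^{-2}\sigma\cdot(-i\nabla)\Omega$, and the factor $\Omega^{-1}$ then silently disappears when you declare the "announced correction"; since the paper's displayed formula also omits it (a matter of which metric $h$ and the Clifford map are taken in), this is not a gap in your argument, but you should reconcile the two lines explicitly.
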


Observe that $E_{\delta,0}$ is a function of $s$ only (the arclength parameter of $\gamma_0$),
and $\omega_0:=-i\overline{E}_{\delta,0}\d E_{\delta,0}$ can be continuously extended to the $1$-form $-2\pi\alpha_a\frac{\d s}{\ell}$ on $B_\eps[\gamma_0]$.
In particular we have
\begin{equation*}\label{eq:integ_phase_jump}
	\int_{\gamma_0}\omega_0=-2\pi\alpha_a.
\end{equation*}

\paragraph{\emph{Around the auxiliary surface $S_{\aux}$}}
For $1\le k\le K$, consider the localization function $\chi_{\delta,S_k}$ of the previous section, whose support
is in $B_{4\delta}[S_k]\setminus B_{\delta}[\gamma_k]$. So, the surface $S_k$ cuts this support into two regions
$O_{k,\pm}$ above and below $S_k$ respectively. Remember that the elements in $\dom(\D_{\bA_{\full}})$ exhibits
a phase jump $e^{-2i\pi\tfrac{\alpha_a}{K}}$ across the surface $S_k$.
Consider now the localization function $\chi_{\delta,\underline{a}}$ with $\underline{a}\in\{2,3\}^{K+1}$.
The following phase jump function contains all the phase jumps due to the surfaces $S_k$'s in its support
\begin{equation}\label{eq:phase_jump_fun_surface}
E_{\delta,\underline{a}}:=\prod_{k:a_k=2}e^{2i\pi\tfrac{\alpha_a}{K}\mathds{1}_{O_k,-}}.
\end{equation}

\subsubsection{Essential domain for $\cD_{\bA_{\full}}$}\label{sec:ess_dom}
We now give an essential domain for $\cD_{\bA_{\full}}$. Consider a cut-off function $\wt{\chi}=\wt{\chi}_{\eta,0}+\sum_{k=1}^K\wt{\chi}_k$ which localizes around the link
$\gamma_{\full}$ and which is a function of $r$, the distance to the link only. Using Lemma~\ref{lem:refinement_BS_behav}, we introduce the sets
\begin{equation}\label{eq:ansatz_tubulaire_u}
	\begin{array}{l}
	\cA_-(\gamma_{\eta}):=\spann \bigg\{ \Omega^{-1}e^{-i\alpha(\Phi_{\eta}+g_{\eta})}\wt{\chi}_{\eta,0}(r)E_{\delta,0}(s) \dfrac{f(u)}{r^{\alpha}}\xi_-(u),f\in H^1(\T_{N\ell})\bigg\},\\
	\wt{\cA}_-(\gamma_k):=\spann \bigg\{ \Omega^{-1}\wt{\chi}_k(r)\dfrac{h(u)}{r^{\alpha_a/K}e^{i\alpha_a/K\vartheta}}\xi_-(u),\ h\in H^1(\T_{|\gamma_k|})\bigg\},
	\end{array}
\end{equation}
where $e^{i\alpha_a\vartheta}$ is cut along $S_{\aux}$. Thanks to \cite{dirac_s3_paper1}*{Lem.~10}
and the proof of \cite{dirac_s3_paper1}*{Theorem~7}, we know that the following set is an essential domain for $\cD_{\bA_{\full}}$: 
\begin{equation}\label{eq:ess_dom_tub_u}
\cA_-(\gamma_{\eta})+\dom(\cD_{\bA_{\full}}^{(\min)})+\sum_{k=1}^K\wt{\cA}_-(\gamma_k).
\end{equation}

Using Sections~\ref{sec:N_fold} and \ref{sec:complex_poly}, we can define another essential domain. We proceed as follows. 
Consider the local sections $\xi_{\pm}$ satisfying \eqref{eq:relative_phase}.
Consider the localization function $\sum_{k=0}^K\chi_{\delta,\gamma_k}\in \sD(\R^3,[0,1])$ localizing around $\gamma_{\lim}$. 
Let
\begin{equation}\label{eq:descr_dom_D_full}
	\begin{array}{rcl}
	\cA_-(\gamma_0)&:=&\spann \bigg\{ \Omega^{-1}e^{-i\alpha(\Phi_{\eta}+g_{\eta})}\chi_{\delta,\gamma_0}(\rho)E_{\delta,0}(s) \dfrac{f_{\gamma_0}(s)(s)\overline{z}^k}{q_\eta(\bx)^{\alpha}}\xi_-(s),\\
		&&\hfill 0\le k\le N-1,\ f_{\gamma_0}\in H^1(\T_{\ell})\bigg\},\\
	\cA_-(\gamma_k)&:=&\spann \bigg\{ \Omega^{-1}\chi_{\delta,\gamma_k}(\rho)\dfrac{f_{\gamma_k}(s)}{\overline{z}^{\alpha_a/K}}\xi_-(s),\ f_{\gamma_k}\in H^1(\T_{|\gamma_k|})\bigg\},
	\end{array}
\end{equation}
We recall that $z=\rho e^{i\theta}$ and that $q_\eta(\bx)=|(ze^{iI_0(s)})^N-\eta^N|$ ($I_0(s)$ denotes $\int_0^s\tau(s')\d s'$). 
Consider the span $\cA_-:=\cA_-(\gamma_0)+\sum_{j=1}^J\cA_-(\gamma_k)$.
Then the following set is an essential domain of $\cD_{\bA_{\full}}$
 \begin{equation}\label{eq:ansatz}
  \cA:=\cA_0+\cA_-:=\cA_0+\cA_-(\gamma_0)+\sum_{k=1}^K\cA_-(\gamma_k).
 \end{equation}

\subsection{Model operator}\label{sec:model_op}
As in \cite{dirac_s3_paper1}, we infer a model operator for the behavior of the operator $\cD_{\bA_{\full}}$ 
on functions $\Omega^{-1}\psi$ supported in the vicinity of $\gamma_0$. We aim to derive the very useful Lemma~\ref{lem:graph_norm_estimate} 
which will enable us to split such a localized element into a regular part (in the minimal domain) and a singular part.
We recall that $\tau$ denotes the relative torsion associated with the Seifert surface $S$ of $\gamma_0$ and that $I_0(s)=\int_0^s \tau(s')\d s'$.

\begin{rem}\label{rem:wlog}
Up to making the local gauge transformation $e^{i\alpha(\Phi_{\eta}+g_{\eta})}\overline{E}_{\delta,0}(s)$, we can assume w.l.o.g. that 
$\bA_{\full}=\alpha \d \theta_{\eta}-2\pi\alpha_a \tfrac{\d s}{\ell}$ when we deal with an element $\psi$ which is localized in the vicinity of $\gamma_0$.
\end{rem}

As before we consider the local sections $\xi_+(s),\xi_-(s)$ corresponding to the Seifert frame of $\gamma_0\subset S$
(satisfying \eqref{eq:relative_phase}).
We will use the notations:
\[
 \zeta:=e^{2i\pi\tfrac{M}{N}},\ \zeta^{1/2}:=e^{i\pi\tfrac{M}{N}}\ \&\ \zeta^{-1/2}=\overline{\zeta}^{1/2}:=e^{-i\pi\tfrac{M}{N}}.
\]

\medskip

\paragraph{\textit{Cable coordinates}}
We now use the \emph{cable coordinates}, that is, the tubular coordinates $(s,\rho,\widetilde{\theta})$ relative to 
the frame $(\bT,U(s),V(s))$ (defined in \eqref{eq:def_U_V})
\begin{equation}\label{eq:new_coord}
  \begin{array}{rcl}
  [0,\ell)\times [0,\eps)\times \R/(2\pi\Z)&\longrightarrow & B_{\eps}[\gamma_0],\\
 (s,\rho,\wt{\theta})&\mapsto &\gamma_0(s)+\rho(\cos(\wt{\theta})U(s)+\sin(\wt{\theta})V(s)),
 \end{array}
\end{equation}
with $\theta=\wt{\theta}-I_0(s)$. These coordinates are orthogonal, but they are not defined up to $s\mapsto \ell^-$ 
since $U(\ell)$ has rotated by $2\pi\Wr(\gamma_0)=2\pi\frac{M}{N}$ after one turn.
We can enforce continuity in the chart by setting (with $w=\rho e^{i\wt{\theta}}$)
\begin{equation}\label{eq:def_base_space}
 \widetilde{B}_{\model}:= [0,\ell]\times \C/\Big((\ell,w)\sim (0, \zeta w)\Big).
\end{equation}
Similarly we change the basis of sections $(\xi_+,\xi_-)$ and pick the one relative to $(\bT,U(s),V(s))$:
\[
 \xi_u(s):= e^{iI_0(s)/2}\xi_+(s)\quad\&\quad \xi_d(s):= e^{-iI_0(s)/2}\xi_-(s),\ 0\le s<\ell. 
\]
This basis is not continuous across $s=0$ for the same reason as for $U,V$. For a section $\psi$ of
$B_\eps[\gamma_0]\times \C^2$, we write $f_e=\cip{\xi_e}{\psi}$, $e\in\{u,d\}$ its components in the new basis.
The element $f:=\begin{pmatrix}f_u& f_d \end{pmatrix}^T$ satisfies:
\[
 f_u(\ell,w)=\zeta^{1/2}\cip{\xi_+(0,w)}{\psi(0,\zeta w)}_{\C^2},\   f_d(\ell,w)=\zeta^{-1/2}\cip{\xi_-(0,w)}{\psi(0,\zeta w)}_{\C^2},
\]
and can be seen as a section of
\begin{equation}\label{eq:def_bundle_twist}
 \widetilde{\Psi}_{\model}:= [0,\ell]\times \C\times \C^2/\Big[
 \big(\ell,w,\begin{pmatrix}f_u& f_d \end{pmatrix}^T\big)\sim \big(0,\zeta w,\begin{pmatrix} \zeta^{1/2}f_u& \zeta^{-1/2}f_d \end{pmatrix}^T\big)\Big].
\end{equation}

This is a complicated bundle. The description is more clear when we untwist the base space and work with the $2N$-folded covering space of $B_\eps[\gamma_0]\times \R^2$,
where the cable coordinates \emph{and} the sections $(\xi_u,\xi_d)$ are globally defined (the factor $2$ in $2N$ is due to the spinorial nature of the problem). 

\medskip

\paragraph{\textit{Description of the model case}}

Consider the auxiliary base space $B_{\aux}:=\T_{2N\ell}\times \R^2$ and spin$^c$ bundle $\Psi_{\aux}:=B_{\aux}\times\C^2$.
The base space is endowed with the flat metric and coordinates map $(s,r e^{i\phi})=(s,w_1+iw_2)=(s,w)$, and the spin$^c$ bundle 
is endowed with the Clifford map:
\[
 \sigma_{\aux}(\d s)=\sigma_3,\ \sigma_{\aux}(\d w_1)=\sigma_1,\ \sigma_{\aux}(\d w_2)=\sigma_2.
\]
We endow the base space with its flat connection and the bundle with the corresponding connection (see \cite{ErdSol01}).
We consider the magnetic field $2\pi\alpha\sum_{k=0}^{N-1}[(s,\eta \zeta^k)_{s\in\T_{N\ell}}]$, whose Coulomb gauge is
$
\alpha\sum_{k=0}^{N-1}\mathrm{d} \phi(w-\eta \zeta^k).
$
For simplicity we consider its singular gauge $\boldsymbol{\alpha}_{\aux}:=\alpha[\mathrm{arg}(P_\eta (w))=0]$ where \
$P_\eta(w)=w^N-\eta^N$. 

We consider the corresponding Dirac operator $\cD_{\aux}:=\cD_{\boldsymbol{\alpha}_{\aux}}^{(-)}$ on $\Psi_{\aux}$.
Having \eqref{eq:def_bundle_twist} in mind, we define the Hilbert subspace of $L^2(B_{\aux})^2$:
\begin{multline}\label{eq:mod_hilb}
 \mathcal{H}_{\zeta}:=\Big\{f=\begin{pmatrix} f_u&f_d\end{pmatrix}^T\in L^2(\T_{2N\ell}\times\C)^2,\ \forall(s,w)\in\T_{N\ell}\times\C,\\
 f(s+\ell,w)= \begin{pmatrix}\zeta^{1/2}& 0\\ 0& \overline{\zeta}^{1/2} \end{pmatrix}f(s,\zeta w)\Big\},
\end{multline}
and realize that ``the" model operator $\cD_{\model}$ is the restriction of $\cD_{\aux}$ to $\dom(\cD_{\model}):= \dom(\cD_{\aux})\cap \mathcal{H}_{\zeta}$.
As $\cD_{\model}$ acts like the free Dirac operator away from the phase jump surface $\{\mathrm{arg}(P_\eta (w))=0\}$, 
the operator satisfies $\cD_{\model}:\dom(\cD_{\model})\to \mathcal{H}_{\zeta}$
since there holds:
\[
\begin{array}{rcl}
 \partial_w \big(\overline{\zeta}^{1/2}f_d(s,\zeta w)\big)&=&\zeta^{1/2}(\partial_w f_d)(s,\zeta w),\\
 \partial_{\overline{w}} \big(\zeta^{1/2}f_u(s,\zeta w)\big)&=&\overline{\zeta}^{1/2}(\partial_{\overline{w}} f_u)(s,\zeta w).
 \end{array}
\]
Thus $\cD_{\model}$ is a self-adjoint operator on $\mathcal{H}_{\zeta}$. We can define $\cD_{\model}^{(\max)},\cD_{\model}^{(\min)}$
in a similar way.
By using Stokes' formula on $\{(s,w),\ q_\eta(w)\ge r,\ \mathrm{arg}(w)\neq 0\}$ and taking the limit $r\to 0^+$, we obtain the following.
\begin{equation}\label{eq:model_case_energy_equalities}
 \begin{array}{lcrl}
  \mathrm{If} \ f\in \dom(\cD_{\model}) & \mathrm{then} & \int|\cD_{\model}f|^2=&\int_{\Theta_\eta(w)\neq 0} \big(|\partial_s f|^2+|\sigma_\perp\cdot\nabla_wf|^2\big),\\
  \mathrm{If} \ f\in \dom(\cD_{\model}^{(\min)}) & \mathrm{then} & \int|\cD_{\model}f|^2=&\int_{\Theta_\eta(w)\neq 0} |\nabla f|^2.
 \end{array}
\end{equation}

\paragraph{\textit{Decomposition of $\dom(\cD_{\model})$}}

As in \cite{dirac_s3_paper3}*{Lemma~32}, we see $\cD_{\aux}$ on $\Psi_{\aux}$ as
$-i\tfrac{\d}{\d s}\otimes \sigma_3+\mathrm{id}\otimes \cD_{\eta,\alpha}$ according to 
the decomposition $L^2(\T_{2N\ell}\otimes\C)^2=L^2(\T_{2N\ell})\otimes L^2(\C)^2$. We recall that 
the operator $\cD_{\eta,\alpha}$ is defined in Section \ref{sub:2dim_op}.
This decomposition is valid thanks to formulas~\eqref{eq:model_case_energy_equalities}, which also enable us to 
decompose $\psi\in \dom(\cD_{\model})$
into Fourier modes with values in $\dom(\cD_{\eta,\alpha})$.

We consider the graph-norm of $\cT_{\eta,\alpha}:=\mathrm{id}\otimes \cD_{\eta,\alpha}$ on $\dom(\cD_{\aux})$, which is intermediate between
the $L^2$-norm and the graph norm $\norm{\cdot}_{\cD_{\aux}}$. We obtain a $\norm{\cdot}_{\cT_{\eta,\alpha}}$-orthogonal
decomposition of $\dom(\cD_{\aux})$ from orthogonal decompositions of $L^2(\T_{2N\ell})$ and of $\dom(\cD_{\eta,\alpha})$ by tensoring.

By an appropriate restriction to $\mathcal{H}_{\zeta}$ we get an adapted decomposition of $\dom(\cD_{\model})$.
Consider the $\norm{\cdot}_{\cT_{\eta,\alpha}}$-projection $P_{\cT}^{(\min)}:=\mathrm{id}\otimes P_{\min}$: we have
$P_{\cT}^{(\min)}\,\dom(\cD_{\aux})=H^1(\T_{2N\ell})\otimes \dom(\cD_{\eta,\alpha}^{(\min)})=\dom(\cD_{\aux}^{(\min)})$.
Its intersection with $\mathcal{H}_{\zeta}$ is the \emph{regular} subspace of $\dom(\cD_{\model})$. 

The \emph{singular} subspace is its $\norm{\cdot}_{\cT_{\eta,\alpha}}$-orthogonal complement.
Let $E_{\sing,k}$ be:
\begin{equation*}
	E_{\sing,k}:=\{h(s)\mathrm{exp}(i(k+\tfrac{1}{2})I_0(s))f_{\sing,k}(w),\ h\in H^1(\T_{\ell})\}
\end{equation*}
We recall that for $0\le s\le 2N\ell$, $I_0(s)$ denotes the integral $\int_0^s\tau(s')\d s'$ where $\tau$
is the relative torsion of $\gamma_0$ w.r.t. the Seifert surface $S\supset \gamma_0$ 
(and up to the congruent map, $\tau$ and $h$ are seen as functions of $\T_{2N\ell}$).

Observe that for any function $h\in L^2(\T_\ell)$ and $k\neq k'$, we have:
\[
\int_{\T_{2N\ell}}e^{i(k'-k)I_0(s)}h(s)\d s=2\sum_{j=0}^{N-1}\zeta^{j(k'-k)}\int_{s=0}^\ell e^{i(k'-k)I_0(s)}h(s)\d s=0.
\]
This implies the following lemma (whose full proof is left to the reader). 
\begin{lemma}\label{lem:decomp_sing_subspace}
The $E_{\sing,k}$'s define an $\norm{\cdot}_{\cT_{\eta,\alpha}}$-orthogonal decomposition of 
the singular subspace. 
For $0\le k\neq k'\le N-1$,  $E_{\sing,k}$ and $E_{\sing,k'}$
are also orthogonal with respect to the graph norm of $\cD_{\model}$, and given $e_k(s)f_{\sing,k}(w)\in E_{\sing,k}$, 
$e_{k'}(s)f_{\sing,k'}(w)\in E_{\sing,k'}$ the functions $e_k(s)$ and $e_{k'}(s)$ are orthogonal with respect to the norms of $L^2(\T_{2N\ell})$
and $H^1(\T_{2N\ell})$.
\end{lemma}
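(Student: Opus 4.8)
The plan is to establish the statement in three steps: (i) each $E_{\sing,k}$ lies in $\dom(\cD_{\model})$; (ii) the sum $\sum_{k=0}^{N-1}E_{\sing,k}$ is exactly the singular subspace; (iii) the three orthogonality relations. Everything reduces to computations on product elements $e(s)\otimes f_{\sing,k}(w)$ in $L^2(\T_{2N\ell})\otimes L^2(\C)^2$, together with two already-available inputs: the displayed identity preceding the lemma (a geometric sum in $\zeta$, which vanishes precisely because $0\le k\neq k'\le N-1$ forces $k'-k\not\equiv 0 \bmod N$), and the properties recorded just after \eqref{eq:def_f_sing_2D}, namely that $f_{\sing,0},\dots,f_{\sing,N-1}$ is a $\norm{\cdot}_{\cD_{\eta,\alpha}}$-orthonormal basis of $D_{\sing}$, each $f_{\sing,k}$ having only a spin-down component and satisfying $f_{\sing,k}(\zeta w)=\overline{\zeta}^k f_{\sing,k}(w)$.

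For (i) I would set $e_k(s):=h(s)\exp\bigl(i(k+\tfrac12)I_0(s)\bigr)$ with $h\in H^1(\T_\ell)$. The relation $I_0(s+\ell)=I_0(s)-2\pi\tfrac{M}{N}$ coming from \eqref{eq:total_rel_torsion} gives $e_k(s+\ell)=\zeta^{-(k+1/2)}e_k(s)$; iterating $2N$ times and using $\zeta^N=1$ shows $e_k$ is $2N\ell$-periodic, so $e_k\in H^1(\T_{2N\ell})$ and $e_k\otimes f_{\sing,k}\in\dom(\cD_{\aux})$ since $f_{\sing,k}\in D_{\sing}\subset\dom(\cD_{\eta,\alpha})$. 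The remaining point is the $\mathcal{H}_\zeta$-equivariance \eqref{eq:mod_hilb}: as $f_{\sing,k}$ is spin-down, only the $\overline{\zeta}^{1/2}$-row is relevant, and $e_k(s+\ell)f_{\sing,k}(w)=\zeta^{-(k+1/2)}e_k(s)f_{\sing,k}(w)=\overline{\zeta}^{1/2}e_k(s)\bigl(\zeta^{-k}f_{\sing,k}(w)\bigr)=\overline{\zeta}^{1/2}e_k(s)f_{\sing,k}(\zeta w)$, which is exactly the constraint. For (ii) I would use the tensor description $\cD_{\aux}=-i\tfrac{\d}{\d s}\otimes\sigma_3+\mathrm{id}\otimes\cD_{\eta,\alpha}$ and $P_{\cT}^{(\min)}=\mathrm{id}\otimes P_{\min}$: the $\norm{\cdot}_{\cT_{\eta,\alpha}}$-orthogonal complement of $\dom(\cD_{\aux}^{(\min)})$ in $\dom(\cD_{\aux})$ is $\bigoplus_{k=0}^{N-1} H^1(\T_{2N\ell})\otimes\C f_{\sing,k}$ (the $H^1$-regularity in $s$ is forced, $\cT_{\eta,\alpha}$ being bounded on the finite-dimensional $D_{\sing}$). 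Intersecting with $\mathcal{H}_\zeta$ is compatible with this splitting, and decomposing $D_{\sing}=\bigoplus_k\C f_{\sing,k}$ along the sectors $L^2_{\overline{\zeta}^k}(\C)$, the equivariance of $f_{\sing,k}$ pins the $s$-coefficient of $f_{\sing,k}$ to $\exp(i(k+\tfrac12)I_0(s))H^1(\T_\ell)$, i.e.\ to the $s$-part of $E_{\sing,k}$; conversely step (i) shows every such element belongs to $\mathcal{H}_\zeta$. Hence the singular subspace is $\sum_k E_{\sing,k}$.

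For (iii) I would work on $g=e_k\otimes f_{\sing,k}$, $g'=e_{k'}\otimes f_{\sing,k'}$ with $k\neq k'$. Since $\cT_{\eta,\alpha}g=e_k\otimes\cD_{\eta,\alpha}f_{\sing,k}$, one has $\cip{g}{g'}_{\cT_{\eta,\alpha}}=\cip{e_k}{e_{k'}}_{L^2(\T_{2N\ell})}\,\cip{f_{\sing,k}}{f_{\sing,k'}}_{\cD_{\eta,\alpha}}$, and both factors vanish. For the graph norm of $\cD_{\model}=\cD_{\aux}|_{\dom(\cD_{\model})}$, the spin-down property gives $(-i\tfrac{\d}{\d s}\otimes\sigma_3)(e_k\otimes f_{\sing,k})=ie_k'\otimes f_{\sing,k}$, so $\cD_{\model}g=ie_k'\otimes f_{\sing,k}+e_k\otimes\cD_{\eta,\alpha}f_{\sing,k}$; expanding $\cip{\cD_{\model}g}{\cD_{\model}g'}$, the two cross-terms carry $\cip{f_{\sing,k}}{\cD_{\eta,\alpha}f_{\sing,k'}}_{L^2(\C)^2}=0$ (a spin-down vector against a spin-up one, as $\cD_{\eta,\alpha}$ is off-diagonal), while the surviving terms carry $\cip{f_{\sing,k}}{f_{\sing,k'}}_{L^2(\C)^2}=0$ and $\cip{e_k}{e_{k'}}_{L^2(\T_{2N\ell})}=0$ respectively; with $\cip{g}{g'}_{L^2}=0$ this yields $\cip{g}{g'}_{\cD_{\model}}=0$. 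The $L^2(\T_{2N\ell})$-orthogonality of $e_k$ and $e_{k'}$ is the displayed identity itself, and since $e_k'=\bigl(h'+i(k+\tfrac12)\tau h\bigr)\exp(i(k+\tfrac12)I_0(s))$ is again ($\ell$-periodic)$\times\exp(i(k+\tfrac12)I_0(s))$, the same identity gives $\cip{e_k'}{e_{k'}'}_{L^2(\T_{2N\ell})}=0$, hence $\cip{e_k}{e_{k'}}_{H^1(\T_{2N\ell})}=0$.

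The only genuinely delicate part is the bookkeeping inside (i)–(ii): one must see that $\exp(i(k+\tfrac12)I_0(s))$ is precisely the transition function making $e_k\otimes f_{\sing,k}$ descend to a section of the twisted bundle $\widetilde{\Psi}_{\model}$ of \eqref{eq:def_bundle_twist}, the half-integer shift $k+\tfrac12$ being dictated by the spinorial $\zeta^{\pm1/2}$-twist. Once this is in place and the product structure $L^2(\T_{2N\ell})\otimes L^2(\C)^2$ (together with \eqref{eq:model_case_energy_equalities}) is used, the rest is routine.
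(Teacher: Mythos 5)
Your proof is correct and follows exactly the route the paper intends: the tensor decomposition $\cD_{\aux}=-i\tfrac{\d}{\d s}\otimes\sigma_3+\mathrm{id}\otimes\cD_{\eta,\alpha}$, the spin-down and $\zeta$-equivariance properties of the $f_{\sing,k}$, and the displayed geometric-sum identity in $\zeta$ preceding the lemma (the paper explicitly leaves this elaboration to the reader). Your verification that $\exp(i(k+\tfrac12)I_0(s))$ is precisely the transition function enforcing the $\mathcal{H}_\zeta$-constraint \eqref{eq:mod_hilb}, via $I_0(s+\ell)=I_0(s)-2\pi M/N$, is the key bookkeeping step and is carried out correctly.
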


\subsection{Graph norm relation with the model operator}\label{sec:graph_norm_rel}
The purpose of this section is to establish \eqref{eq:graph_norm_formula} and Lemma~\ref{lem:graph_norm_estimate}.
We work in the gauge of Remark~\ref{rem:wlog}, with the cable coordinates $(s,w)=(s,\rho e^{i\widetilde{\theta}})$ \eqref{eq:new_coord} and in the section basis $(\xi_u,\xi_d)$.
We fist write the free Dirac operator in these coordinates.

\medskip

\paragraph{\textit{The free Dirac operator in cable coordinates}}
Let $M_\xi$ be the connexion form of $\sigma\cdot(-i\nabla)$ (in the flat metric of $B_\eps[\gamma_0]\times\C^2$) corresponding to the basis $(\xi_+(s),\xi_-(s))$:
\begin{align*}
 M_\xi&:=
  \begin{pmatrix}
    \cip{\xi_+}{\sigma\cdot(-i\nabla)\xi_+}&\cip{\xi_+}{\sigma\cdot(-i\nabla)\xi_-}\\
    \cip{\xi_-}{\sigma\cdot(-i\nabla)\xi_+}&\cip{\xi_-}{\sigma\cdot(-i\nabla)\xi_-}
  \end{pmatrix},\\
  &=-\frac{i}{h(s,\rho,\theta)}
  \begin{pmatrix}
    \cip{\xi_+}{\partial_s\xi_+}&\cip{\xi_+}{\partial_s\xi_-}\\
    -\cip{\xi_-}{\partial_s\xi_+}&-\cip{\xi_-}{\partial_s\xi_-}.
  \end{pmatrix}
\end{align*}
We have in cable coordinates $(s,w):=(s,\rho\exp(i\widetilde{\theta}))$:
\begin{equation}\label{eq.Dirac_cable_coord}
 \begin{pmatrix}\cip{\xi_u}{\sigma\cdot(-i\nabla)\psi}\\ \cip{\xi_d}{\sigma\cdot(-i\nabla)\psi} \end{pmatrix}=
		-i\begin{pmatrix}\frac{1}{h(s,\rho,\theta)}\partial_s & 2\partial_w \\ 2\partial_{\overline{w}} & -\frac{1}{h(s,\rho,\theta)}\partial_s\end{pmatrix}f
		      \quad +W_\xi(s) f, 
\end{equation}
where $h(s,\rho,\theta)$ denotes
\begin{equation*}
 h(s,\rho,\theta)=h(s,\rho,\wt{\theta}-I_0(s))=1-\rho(\kappa_g(s)\cos(\theta)+\kappa_n(s)\sin(\theta)),
\end{equation*}
and $W_\xi$ the connection form of $\sigma\cdot (-i\nabla)$ for $(\xi_u,\xi_d)$:
\begin{equation}\label{eq.conn_form}
	   W_\xi(s):= (\cos(\tfrac{iI_0(s)}{2})-\sin(\tfrac{iI_0(s)}{2})\sigma_3)M_\xi(s)(\cos(\tfrac{iI_0(s)}{2})+\sin(\tfrac{iI_0(s)}{2})\sigma_3)
	  +\dfrac{\tau(s)}{2}.
\end{equation}

\paragraph{\textit{The operator $\cD_{\bA_{\full}}^{(\max)}$ in cable coordinates}}
Using Lemma~\ref{lem:refinement_BS_behav} and the identity $e^{i\theta_\eta}=e^{-iNI_0(s)+\Theta_\eta(w)}$, 
for $\Omega^{-1}\psi\in\dom(\cD_{\bA_{\full}}^{(\max)})$ there holds:
\begin{multline*}
 \Omega^2\begin{pmatrix}\cip{\xi_u}{\cD_{\bA_{\full}}^{(\max)}\Omega^{-1}\psi}\\ \cip{\xi_d}{\cD_{\bA_{\full}}^{(\max)}\Omega^{-1}\psi} \end{pmatrix}=
		-i\begin{pmatrix}\frac{1}{h(s,\rho,\theta)}\partial_s & 2\partial_w \\ 2\partial_{\overline{w}} & -\frac{1}{h(s,\rho,\theta)}\partial_s\end{pmatrix}f
		\\+\alpha\sum_{k=0}^{N-1}\frac{\sigma_{\perp}\cdot(w-\eta \zeta^k)}{|w-\eta \zeta^k|^2}f
		      -\big(N\alpha\tau+\tfrac{2\pi}{\ell}\alpha_a\big)\frac{\sigma_3}{h(s,\rho,\theta)}f 
		       +W_\xi f,
\end{multline*}
where $f:=\begin{pmatrix}\cip{\xi_u}{\psi}_{\C^2} & \cip{\xi_d}{\psi}_{\C^2}  \end{pmatrix}^T$ and $\sigma_{\perp}\cdot(w-\eta \zeta^k)$ denotes:
\[
 \sigma_{\perp}\cdot(w-\eta \zeta^k):= \mathrm{Re}(w-\eta \zeta^k)\sigma_1+\mathrm{Im}(w-\eta \zeta^k)\sigma_2.
\]
We link to \cite{dirac_s3_paper1} by choosing the singular gauge $2\pi \alpha[\{\theta_\eta=0\}]$ for $2\pi \alpha[\gamma_{\eta}]$.
Writing $\wt{f}:= e^{i\alpha \theta_\eta}f$ and $\cD_{\wt{\bA}_{\full}}^{(\max)}:= e^{i\alpha \theta_\eta}\cD_{\bA_{\full}}^{(\max)}e^{-i\alpha \theta_\eta}$,
we obtain:
\begin{multline}\label{eq:form_Dirac_adapted_basis}
 \Omega^2\begin{pmatrix}\cip{\xi_u}{\cD_{\wt{\bA}_{\full}}^{(\max)}\Omega^{-1}e^{i\alpha\theta_\eta}\psi}\\ \cip{\xi_d}{\cD_{\bA_{\full}}^{(\max)}\Omega^{-1}e^{i\alpha\theta_\eta}\psi} \end{pmatrix}=
		-i\begin{pmatrix}\frac{1}{h(s,\rho,\theta)}\partial_s & 2\partial_w \\ 2\partial_{\overline{w}} & -\frac{1}{h(s,\rho,\theta)}\partial_s\end{pmatrix}\wt{f}\\
		      -\tfrac{2\pi}{\ell}\alpha_a\frac{\sigma_3}{h(s,\rho,\theta)}\wt{f}
		       +W_\xi \wt{f},
\end{multline}

The projection onto $\gamma_{\eta}$ is defined on $B_\eps[\gamma_0]$ minus a rectifiable surface of finite area,
that is almost everywhere. 
The projection is written $\gamma_{\eta}(s_\eta(\bx))$  with $0\le s_\eta(\bx)\le N\ell$ and the distance $\rho_\eta(\bx)$. We have the expansion:
\[
 h(s,\rho,\theta)^{-1}=h(s_\eta,\eta,-I_0(s_\eta))^{-1}+\mathcal{O}(\rho_\eta),
\]
where we recall that $h(s_\eta(\bx),\eta,-I_0(s_\eta(\bx)))=|\gamma_{\eta}(s_\eta(\bx))|$ corresponds to the value of $h$ at $\gamma_{\eta}(s_\eta(\bx))$.

\medskip

\paragraph{\textit{Definition of $B_\eps^{(\eta)}[\gamma_0]$}}
The subset of $B_\eps[\gamma_0]$ for which the projection exists is denoted by $B_\eps^{(\eta)}[\gamma_0]$: it is the open set on which
the tubular coordinates $(u,\rho_\eta)$ w.r.t. $\gamma_{\eta}$ are well-defined.
Observe that we have:  
\begin{equation}\label{eq:partial_u_and_s}
 Y_u(\bx)=\frac{1}{h(s_\eta(\bx),\eta,-I_0(s_\eta(\bx)))}\partial_s,
\end{equation}
where $Y_u$ denotes the (pushforward of) the vector field $\partial_u$.

\medskip

\paragraph{\textit{From $\psi$ to its coordinates in $(\xi_u,\xi_d)$}}
There also hold the two following facts.
\begin{enumerate}
 \item On $B_\eps[\gamma_0]$, we have $C^{-1}\le \Omega \le C$, hence for $\Omega^{-1}\psi\in\dom(\cD_{\bA_{\full}}^{(\max)})$
 with support in $B_\eps[\gamma_0]$, we have:
 \begin{equation*}
  \int \big|(\cD_{\bA_{\full}}\Omega^{-1}\psi)(\bx)\big|^2\Omega(\bx)^3\d\bx<\infty\ \iff\ \int \big|\sigma\cdot(-i\nabla+\bA_{\full})\psi\big|^2<+\infty.
 \end{equation*}
  \item If $\Omega^{-1}\psi\in\dom\,(\cD_{\bA_{\full}}^{(\max)})$, $\supp\psi\subset B_\eps[\gamma_0]$, then there holds
  \[
    \rho_\eta\Omega^{-1}\psi\in\dom(\cD_{\bA_{\full}}^{(\min)}).
  \]
  Indeed, take a smooth partition of unity $1=X(\bx)+(1-X(\bx))$ where $\supp\,X\subset B_{\eta/(4N)}[\gamma_{\eta}]$ and $X=1$ on $B_{\eta/(8N)}[\gamma_{\eta}]$.
  The distance function $\rho_\eta$ is continuous and Lipschitz.
  Hence $(1-X)\psi$ is $H^1(\R^3)^2$ with support in $B_\eps[\gamma_0]\setminus B_{\eta/(8N)}[\gamma_{\eta}]$. We refer to \cite{dirac_s3_paper1}*{Lemma~12} 
  for the fact that $\rho_\eta X\psi$ is in $\dom(\cD_{\bA_{\full}}^{(\min)})$ (or $\rho_\eta Xe^{i\alpha\theta_\eta}\psi\in \dom(\cD_{\wt{\bA}_{\full}}^{(\min)})$). 
\end{enumerate}
Using these two facts together with \eqref{eq:form_Dirac_adapted_basis}, we get that a spinor $\psi$ with $\supp\psi\subset B_\eps[\gamma_0]$
satisfies $\Omega^{-1}\psi\in\dom\,(\cD_{\bA_{\full}}^{(\max)})$ if and only if its corresponding function 
$\wt{f}=e^{i\alpha\theta_\eta}\begin{pmatrix}\cip{\xi_u}{\Omega^{-1}\psi} & \cip{\xi_d}{\Omega^{-1}\psi}\end{pmatrix}^T$ satisfies
\begin{equation}\label{eq:reformulate}
 \Big(\underbrace{-i\begin{pmatrix}\partial_u & 2\partial_w \\ 2\partial_{\overline{w}} & -\partial_u\end{pmatrix}}_{\cT_{\model}}\wt{f}\Big)\restriction\{\Theta_\eta(w)\neq 0\} 
 \in L^2(B_\eps^{(\eta)}[\gamma_0],\d u\d w_1\d w_2)^2.
\end{equation}

\medskip

\paragraph{\textit{Reinterpretation of \eqref{eq:reformulate}}}

For $\Omega^{-1}\psi\in\dom(\cD_{\bA_{\full}}^{(-)})$, we use Stokes' formula to decouple $\partial_u \wt{f}$ from the rest in \eqref{eq:reformulate}.
In that case $\partial_u \wt{f}$ is indeed square integrable. We prove the following more precise equality and then Lemma~\ref{lem:graph_norm_estimate}:
\begin{equation}\label{eq:graph_norm_formula}
 \int_{\Theta_\eta(w)\neq 0} |\cT_{\model}\wt{f}|^2\d u\d w_1\d w_2=\int_{\Theta_\eta(w)\neq 0} \big[|\partial_u \wt{f}|^2+|\sigma_\perp\cdot\nabla_w \wt{f}|^2\big]\d u\d w_1\d w_2.
\end{equation}

\begin{rem}
Consider $B_{\mathrm{cov}}$ and $C_{\mathrm{cov}}$, respectively the $2N$-folded and $N$-folded covering space of $B_\eps[\gamma_0]$.
We can lift $\gamma_{\eta}$ on $C_{\mathrm{cov}}$ in $N$ different ways, among which we consider the one $\widehat{\gamma}_\eta$ corresponding to $\mathrm{arg}(w)=0$.
The lift of $B_\eps^{(\eta)}[\gamma_0]$ along $\widehat{\gamma}_\eta$ defines an angular sector
 \[
    \{(u,w)\in \T_{N\ell}\times D_{\C}(0,\eps),\ -\pi\tfrac{M}{N}<\mathrm{arg}(w)< \pi\tfrac{M}{N}\},
 \]
whose lift on $B_{\mathrm{cov}}$ is written $\mathcal{C}_0$.
The integrals \eqref{eq:graph_norm_formula} can be understood as \emph{half} the integrals over this angular sector.
\end{rem}

Let us pick such a $\psi$ with support in $B_\eps[\gamma_0]$. By a density argument, it suffices to check \eqref{eq:graph_norm_formula}
on elements $\Omega^{-1}\psi$ satisfying $\partial_u^2f,\overline{\partial}_w\partial_wf\in L^2(\d u\d w\d \overline{w})^2$
(in \eqref{eq:descr_dom_D_full}-\eqref{eq:ansatz}, it suffices to take $f_{\gamma_0}(s)(s)\in H^2(\T_\ell)$ and a smooth regular part with support away from $\gamma_{\eta}$).

For $0<r<(\eta/2)^N$ let us consider the regions $R(\eps,r)\subset B_\eps^{(\eta)}[\gamma_0]$ defined by:
\[
 R(\eps,r)=\big\{\bx\in B_\eps^{(\eta)}[\gamma_0],\,q_\eta(\bx)\ge r\ \&\ \theta_\eta(\bx)=\Theta_\eta(w)\neq 0\mod(2\pi)\big\}.
\]
The boundary $\partial R(\eps,r)$ can be split into four parts: first we have the part $D_a:=\{\dist(\cdot,\gamma_0)=\eps\}$, and then 
\begin{multline*}
D_b:=\{|z(\bx)|=r\},\ D_c:=\Big\{\mathrm{arg}(w(\bx))=\Big((2k+1)\tfrac{\pi}{N}\Big)^{\pm}\Big\}\cap R(\eps,r),\\
\&\quad D_d:=\{\Theta_\eta(w(\bx))=0^{\pm}\}\cap R(\eps,r).
\end{multline*}
We use Stokes' formula on $R(\eps,r)$ for the difference of the L.H.S. and R.H.S. of \eqref{eq:graph_norm_formula}
and take the limit $r\to 0$. Due to the conditions on $\psi$, there is no boundary term on $D_a$. The boundary terms
on $D_d$ cancel each other due to the phase jump condition. The normals $\mathbf{n}_b,\mathbf{n}_c$ to $D_b$ and $D_c$ are orthogonal to $\bT\parallel \partial_u$, 
hence for $r$ small enough the boundary terms vanish.

Using \eqref{eq:partial_u_and_s} in \eqref{eq:graph_norm_formula}, we can replace $\partial_u$ by $\partial_s$ up to making a small error.
Seeing $f$ as a function on the angular sector on $B_{\mathrm{cov}}$, it defines a unique element $f_{\model}$ of $\mathcal{H}_{\zeta}$ \eqref{eq:mod_hilb},
and the obtained equality gives us the equivalence: $\Omega^{-1}\psi\in\dom(\cD_{\bA_{\full}}^{(-)})$ if and only if $f_{\model}\in\dom(\cD_{\model}^{(-)})$.

Controlling all the error terms, we obtain a similar result to \cite{dirac_s3_paper1}*{Lemma~27}.
\begin{lemma}\label{lem:graph_norm_estimate}
There exists $C_1>1$ and $C_2>0$ depending continuously on $\gamma_0,\alpha_a,\eps$ such that,
for $0<\eta\le \eta_0(\eps)$ and $\Omega^{-1}\psi\in\dom(\cD_{\bA_{\full}}^{(-)})$ with support in $B_\eps[\gamma_0]$, there holds:
\begin{equation*}
 -C_2\norm{\psi}_{L^2}^2+\frac{1}{C_1}\norm{\cD_{\bA_{\full}}^{(-)}\psi}_{L^2}^2\le \tfrac{1}{2N}\norm{f_{\model}}_{\cD_{\model}^{(-)}}^2\le C_1 \norm{\psi}_{\cD_{\bA_{\full}}^{(-)}}^2.
\end{equation*}
\end{lemma}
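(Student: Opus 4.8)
The plan is to establish the equality \eqref{eq:graph_norm_formula} rigorously via the Stokes' argument already sketched, and then to convert it, together with the lower-order-term estimates coming from \eqref{eq:form_Dirac_adapted_basis} and Lemma~\ref{lem:refinement_BS_behav}, into the stated two-sided graph-norm comparison. First I would fix $\Omega^{-1}\psi\in\dom(\cD_{\bA_{\full}}^{(-)})$ with $\supp\psi\subset B_\eps[\gamma_0]$ and reduce, by the density statement already invoked (take $f_{\gamma_0}\in H^2(\T_\ell)$ in \eqref{eq:descr_dom_D_full} and smooth regular part supported away from $\gamma_\eta$), to $\wt f$ with $\partial_u^2\wt f,\ \partial_{\overline w}\partial_w\wt f\in L^2$. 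On this dense class I would run Stokes' formula on the regions $R(\eps,r)$, splitting $\partial R(\eps,r)=D_a\cup D_b\cup D_c\cup D_d$ exactly as described: the $D_a$-term vanishes because $\psi$ has support strictly inside $B_\eps[\gamma_0]$; the $D_d$-terms cancel pairwise because of the phase-jump condition across $\{\Theta_\eta(w)=0\}$ built into $\dom(\cD_{\bA_{\full}}^{(-)})$ (equivalently $\mathcal H_\zeta$); and on $D_b,D_c$ the normals are orthogonal to $\bT\parallel\partial_u$, so for $r$ small the cross-terms between $\partial_u$ and $\sigma_\perp\cdot\nabla_w$ give no contribution and the remaining boundary integrals over $D_b$ tend to $0$ as $r\to 0^+$ by the $L^2$-integrability of $\cT_{\model}\wt f$ together with the $H^2$-regularity. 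This yields \eqref{eq:graph_norm_formula}, and hence the stated equivalence $\Omega^{-1}\psi\in\dom(\cD_{\bA_{\full}}^{(-)})\iff f_{\model}\in\dom(\cD_{\model}^{(-)})$ after identifying $\wt f$ on the angular sector $\mathcal C_0$ with an element $f_{\model}\in\mathcal H_\zeta$; note that the unfolding to $2N$ sheets contributes the overall factor $\tfrac{1}{2N}$, since $\int_{\mathcal C_0}=\tfrac{1}{2N}\int_{\T_{2N\ell}\times\C}$ on $\mathcal H_\zeta$.

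Next I would pass from \eqref{eq:graph_norm_formula} to the inequality. Starting from \eqref{eq:form_Dirac_adapted_basis}, the operator $\cD_{\wt{\bA}_{\full}}^{(\max)}$ differs from $\Omega^{-2}$ times $-i\big(\begin{smallmatrix}h^{-1}\partial_s & 2\partial_w\\ 2\partial_{\overline w}& -h^{-1}\partial_s\end{smallmatrix}\big)$ only by the bounded zero-order terms $-\tfrac{2\pi}{\ell}\alpha_a\tfrac{\sigma_3}{h}$ and $W_\xi$, both with $L^\infty$-norms controlled by $\gamma_0$, $\alpha_a$, $\eps$; moreover $C^{-1}\le\Omega\le C$ on $B_\eps[\gamma_0]$ (fact (1) in Section~\ref{sec:graph_norm_rel}) converts between the $\Omega^3\d\bx$ and $\d\bx$ norms, and the expansion $h(s,\rho,\theta)^{-1}=h(s_\eta,\eta,-I_0(s_\eta))^{-1}+\mathcal O(\rho_\eta)$ together with \eqref{eq:partial_u_and_s} lets me replace $h^{-1}\partial_s$ by $\partial_u=Y_u$ up to an error that is $\mathcal O(\rho_\eta)$ times a first derivative — and $\rho_\eta\Omega^{-1}\psi\in\dom(\cD_{\bA_{\full}}^{(\min)})$ (fact (2)) makes that error term $L^2$-controllable by $\|\cD_{\bA_{\full}}^{(-)}\psi\|_{L^2}$ and $\|\psi\|_{L^2}$. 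Combining: $\|\cT_{\model}\wt f\|_{L^2(\mathcal C_0)}^2$ is comparable, up to constants $C_1$ and an additive $C_2\|\psi\|_{L^2}^2$, to $\int_{B_\eps^{(\eta)}[\gamma_0]}|\sigma\cdot(-i\nabla+\bA_{\full})\psi|^2$, which in turn is comparable to $\|\cD_{\bA_{\full}}^{(-)}\psi\|_{L^2}^2$. Feeding this into \eqref{eq:graph_norm_formula} and recalling $\|f_{\model}\|_{\cD_{\model}^{(-)}}^2=\|f_{\model}\|_{L^2}^2+\int|\cD_{\model}f_{\model}|^2=\|f_{\model}\|_{L^2}^2+\int_{\Theta_\eta\neq0}(|\partial_uf|^2+|\sigma_\perp\cdot\nabla_wf|^2)$ via \eqref{eq:model_case_energy_equalities}, and that $\|f_{\model}\|_{L^2(\T_{2N\ell}\times\C)}^2=2N\,\|\wt f\|_{L^2(\mathcal C_0)}^2$ is comparable to $2N\|\psi\|_{L^2}^2$, gives the two inequalities with the $\tfrac{1}{2N}$ normalization.

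The key point to be careful about — and what I expect to be the main obstacle — is making the constants $C_1,C_2$ genuinely \emph{uniform in $\eta\in(0,\eta_0(\eps)]$}, not merely finite for each $\eta$. This is exactly the place where the adapted cable representation pays off: Lemma~\ref{lem:refinement_BS_behav} gives that $\omega_\eta=2\pi\bA_{\gamma_\eta}^\flat-\d(\theta_\eta+\Phi_\eta)$ extends continuously to an exact form $\d g_\eta$ with $g_\eta\to Ng_0$ uniformly on $\{\rho\ge\eps_1\}$ and $g_\eta$ continuous in $\eta$, and the geometric quantities $h(s_\eta,\eta,-I_0(s_\eta))$, $\Phi_\eta$, $W_\xi$, $\kappa_{g},\kappa_n,\tau$ all have $\eta$-uniform bounds on $B_\eps[\gamma_0]$ by the convergence of tubular coordinates in Section~\ref{sec:prop_adap_seif_fib}. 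Thus the bounded perturbation terms in \eqref{eq:form_Dirac_adapted_basis} and the error $\mathcal O(\rho_\eta)$ are bounded uniformly in $\eta$, and the Stokes' boundary terms vanish for $r<(\eta/2)^N$ which is harmless since we only take $r\to0$. One must also check that the density/approximation step is $\eta$-uniform, i.e.\ that the $H^2$ regular elements are $\|\cdot\|_{\cD_{\bA_{\full}}^{(-)}}$-dense with no $\eta$-dependent loss; this follows from the explicit essential domain \eqref{eq:ansatz} since the ansatz functions $\overline z^k q_\eta^{-\alpha}$ and their regular counterparts can be approximated in graph norm with bounds depending only on $\gamma_0,\alpha_a,\eps$. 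Assembling these uniform bounds delivers the constants $C_1>1$, $C_2>0$ depending continuously on $\gamma_0,\alpha_a,\eps$ but not on $\eta$, as claimed.
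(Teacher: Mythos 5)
Your plan follows the paper's own route essentially step for step: the Stokes argument on $R(\eps,r)$ with the four boundary pieces $D_a,\dots,D_d$ to get \eqref{eq:graph_norm_formula}, the identification of $\wt f$ with $f_{\model}\in\mathcal H_\zeta$ on the $2N$-fold cover (whence the $1/(2N)$ normalization), and the absorption of the bounded zero-order terms $W_\xi$, $\tfrac{2\pi}{\ell}\alpha_a\sigma_3/h$ and the $\mathcal O(\rho_\eta)$ error from replacing $h^{-1}\partial_s$ by $\partial_u$ into the constants $C_1,C_2$. The paper itself only sketches this (deferring to the analogous Lemma~27 of \cite{dirac_s3_paper1}), and your explicit attention to the $\eta$-uniformity of the constants via the adapted cable representation is a correct and worthwhile addition rather than a deviation.
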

The factor $1/(2N)$ is due to the fact that $f_{\model}(x)$ is the element $L^2(B_{\mathrm{cov}})^2$ which covers $2N$ times $\wt{f}$.

\section{Proof of the auxiliary lemmas}\label{sec:proof_aux_lem}

We first prove the strong resolvent continuity 
of the homotopy \eqref{eq:loop_even} and then use it to prove Lemma~\ref{lem:compac}. 
We finish the section with the proof of Lemma~\ref{lem:continuity_along_collapse}.
Observe that the parameter space is metric,
henceforth it suffices to check the sequential continuity.
We write $\cD_\Omega$ for $\Omega^{-2}\sigma\cdot(-i\nabla^{\R^3})\Omega$, the free Dirac operator of $\S^3$ written in the stereographic chart $\R^3$.
We recall that $\Omega(\bx)=\tfrac{2}{1+|\bx|^2}$ is the conformal factor, and that the Banach spaces $L^2_{\Omega}$ and $H^1_{\Omega}$ are introduced in Section~\ref{sec:desc_dom}.

We use the localization functions at a small level $\delta>0$ (defining a partition of unity), 
Section~\ref{sec:loc}, and the phase jump functions $E_{\delta,0}$ and $E_{\delta,\underline{a}}$, see Section~\ref{sec:remov_phase}
and \eqref{eq:phase_jump_fun_surface}. 
Up to taking $n$ large enough, we can assume $\eta_n<\delta/2$ and the partition of unity
is still adapted to the link $\gamma_{\full}(\eta_n)$. 

We denote by $E_{\delta,0}^{(n)},E_{\delta,\underline{a}}^{(n)}$ the phase jump functions for
$\cD_n$ and by $E_{\delta,0}$ and $E_{\delta,\underline{a}}$ these of the limit operator $\cD$. As the auxiliary surface $S_{\aux}$
does not change they only differ through the auxiliary fluxes. In particular the functions $E_{\delta,0}^{(n)},E_{\delta,\underline{a}}^{(n)}$
converge to $E_{\delta,0},E_{\delta,\underline{a}}$ in the $L^\infty$-norm of their corresponding domains. We start with the following 
observation.

\begin{rem}[Convergence of the B.S. vector field]\label{rem:useful_and_trivial}
The B.S. gauge $\bA_{\gamma_{\eta_n}}(\bx)$ converges to 
$N\bA_{\gamma_0}(\bx)$ in $C_{\loc}^\infty(\R^3\setminus \gamma_0)^3$. In particular, for all $\eps>0$, we have:
\[
C_{\eps,\bA}:=\limsup_n\sup_{\bx\in \complement B_\eps[\gamma_0]}|\bA_{\gamma_{\eta_n}}(\bx)|<+\infty.
\]
\end{rem}

To shorten notation, we also write $\bA_{B.S.}^{(n)}$ for $2\pi t_n\bA_{\gamma_{\eta_n}}$ and $\bA_{\aux}^{(n)}$ for the auxiliary gauge
$\bA_{\full}(\eta_n,t_n)-\bA_{B.S.}^{(n)}$, with corresponding limits $\bA_{B.S.}=2Nt\pi \bA_{\gamma_0}$ and $\bA_{\aux}$.
At last, we write $Nt=\alpha+\lfloor Nt\rfloor$.

\subsection{Strong-resolvent Continuity}

Thanks to \cite{dirac_s3_paper1}, we know that the homotopy $(\cD_{\bA_{\full}})$ is continuous in the strong-resolvent sense in the range 
$(t,\eta)\in[0,\tfrac{M}{M+1}]\times (0,\eta_0)$, and $(t,\eta)\in [1,\tfrac{M}{M+1}]\times \{0\}$ (with $0\le \alpha_a<K$). Let us check the continuity for $t\in [0,1]$ and $\eta=0$,
and consider a sequence $(t_n,\alpha_a^{(n)},\eta_n)$ converging to $(t,\alpha_a,0^+)$. We denote by $\cD_n$ the corresponding Dirac operator, and $\cD$ the limit Dirac operator
$\cD_{\bA_{\full}(t,0,\alpha_a)}$.

We apply the same method as in \cite{dirac_s3_paper1} and check the second characterisation of Lemma~\ref{lem:char_sres_conv}.
Let $(\Omega^{-1} \psi,\cD(\Omega^{-1} \psi))$ be an element in the graph of $\cD$. We split $\psi$ with respect to the partition of unity \ref{def:part_unity}. 

\subsubsection{Away from the knots}
Consider $\chi_{\delta,\underline{a}}\psi$
with $\underline{a}\in\{2,3\}^{K+1}$. Because of the boundedness of $C_{\delta,\bA}$, we have
\begin{multline*}
\Omega^{-1}\psi_{\underline{a}}^{(n)}:=E_{\delta,\underline{a}}^{(n)}\overline{E}_{\delta,\underline{a}}\chi_{\delta,\underline{a}}\psi\in \dom(\cD_n),\\
\cD_n(\Omega \psi_{\underline{a}}^{(n)})=E_{\delta,\underline{a}}^{(n)}\big(\cD_{\Omega}+\Omega^{-1}\sigma\cdot\bA_{B.S.}^{(n)}\big)\overline{E}_{\delta,\underline{a}}\chi_{\delta,\underline{a}}\Omega^{-1} \psi.
\end{multline*}
By dominated convergence, $\Omega^{-1}\psi_{\underline{a}}^{(n)}$ and $\cD_n\Omega^{-1} \psi_{\underline{a}}^{(n)}$ 
converges to $\chi_{\delta,\underline{a}}\Omega^{-1}\psi$ and 
$\cD(\chi_{\delta,\underline{a}}\Omega^{-1} \psi)$ in $L^2_{\Omega}$.

\subsubsection{In the vicinity of an auxiliary knot $\gamma_k$, $1\le k\le K$}
Observe that for any function $\phi\in L^2_{\Omega}$, the sequence $(\Omega^{-1}\sigma\cdot\bA_{B.S.}^{(n)}\phi)$ converges to
$Nt\Omega^{-1}\sigma\cdot\bA_{\gamma_0}\phi$ in $L^2_{\Omega}$. 

In the region $\supp\,\chi_{\delta,k}$, the operator $\cD_n-\Omega^{-1}\sigma\cdot\bA_{B.S.}^{(n)}$
 coincides with the Dirac operator with the (singular) magnetic gauge $\bA_{\aux}(t_n)$ only. By strong-resolvent continuity
 of such operators \cite{dirac_s3_paper1}*{Theorem~23}, there exists a sequence $(\psi_{\delta,k}^{(n)})$ subordinated to
 $\cD_{\bA_{\aux}^{(n)}}$ such that
 \[
 \Big(\psi_{\delta,k}^{(n)},\cD_{\bA_{\aux}^{(n)}}\psi_{\delta,k}^{(n)}\Big)\underset{n\to+\infty}{\longrightarrow}
 	\Big(\chi_{\delta,k}\Omega^{-1}\psi,\cD_{\bA_{\aux}}[\chi_{\delta,k}\Omega^{-1}\psi]\Big)\ \mathrm{in}\ \big[L^2_{\Omega}\big]^2.
 \]
 Up to localizing, we can assume that $B_{\delta'}[\gamma_0]\cap \supp\,\psi_{\delta,k}^{(n)}=\emptyset$ for all $n$.
Thus $\psi_{\delta,k}^{(n)}\in\dom(\cD_n)$ and we have:
\[
(\psi_{\delta,k}^{(n)},\cD_n(\psi_{\delta,k}^{(n)}))\underset{n\to+\infty}{\longrightarrow}
 	\Big(\chi_{\delta,k}\Omega^{-1}\psi,\cD[\chi_{\delta,k}\Omega^{-1}\psi]\Big)\ \mathrm{in}\ \big[L^2_{\Omega}\big]^2.
\]

\subsubsection{In the vicinity of $\gamma_0$}
At last, let us deal with $\chi_{\delta,\gamma_0}\psi$. 
We use \cite{dirac_s3_paper1}*{Proof~of~Theorem~7} and Theorem~\ref{thm:behavior_BS}. The spinor $f=\begin{pmatrix}f_+& f_- \end{pmatrix}^T$ 
in the space $L^2(\T_{\ell}\times\C)^2$ defined by the formula: 
\[
e^{-i Nt(\theta+\Phi_{\gamma_0}+g_{\gamma_0})}\overline{E}_{\delta,0}\chi_{\delta,\gamma_0}\psi=f_+\xi_++f_-\xi_-,
\] 
and seen as a function of the tubular coordinates $(s,\rho e^{i\theta})$ is in the domain of the model operator $\cD_{\T,\alpha}^{(-)}$ 
(cut along $\theta=0$). This operator (in its Coulomb gauge) is described in
\cite{dirac_s3_paper1}*{Section~3.2.2}, where $s,\rho,\theta$ have to be considered as the tubular coordinates in the flat metric.
As in Section~\ref{sec:model_op} (see \cite{dirac_s3_paper3}*{Lemma~32}), an essential domain is given
by the subspace spanned by the minimal domain $\dom(\cD_{\T,\alpha}^{(\min)})$ and, for $\alpha>0$, the functions $h_{\sing,j}$ defined by
\begin{equation*}
	h_{\sing,j}(s,\rho,\theta)=\frac{1}{\sqrt{2\pi\ell}}
		e^{ijs}\begin{pmatrix} 0\\ e^{i\alpha\theta}\end{pmatrix},\ j\in T_{\ell}^*=\frac{2\pi}{\ell}\Z.
\end{equation*}
Outside the cut $\theta=0$, the operator $\cD_{\T_\ell,\alpha}^{(-)}$ acts like $-i\sigma_3\partial_s-i\sigma_\perp\cdot(-i\nabla_\perp)$,
the latter term denoting the free Dirac operator in $\C$. The minimal domain $\dom(\cD_{\T,\alpha}^{(\min)})$ is the graph norm closure of 
$e^{i\alpha\theta}C^1_0(\{\rho=0\}^c)^2$, where $\{\rho=0\}^c$ stands for $\complement_{\T_\ell\times\C}\{(s,\rho e^{i\theta}),\ \rho=0\}$.

Up to taking $\delta$ small enough, $\supp\,\chi_{2\delta,0}$ does not intersect the auxiliary knots, and we have
\[
\chi_{2\delta,0}(\rho)[f_0+f_{\sing}]=f_0+f_{\sing}.
\]
Substituting $K_\alpha(\rho)$ by $\chi_{2\delta,0}(\rho)\rho^{-\alpha}$ defines the spinor $\wt{f}_j$.
The difference $\tfrac{\Gamma(\alpha)}{2^{1-\alpha}}\wt{f}_j-h_{\sing,j}$ is in $\dom(\cD_{\T,\alpha}^{(\min)})$.
We now provide an approximation of the spinor $\Omega^{-1}\wt{\psi}_j$ defined by the relation
\[
e^{-i Nt(\theta+\Phi_{\gamma_0}+g_{\gamma_0})}\overline{E}_{\delta,0}\wt{\psi}_j=\sum_{\circ\in\{\pm\}}\wt{f}_{j,\circ}(s,\rho,\theta)\xi_{\circ}(s).
\]
We use the convergence of the tubular coordinates of $\gamma_{\eta}$ as $\eta\to 0^+$, section~\ref{sec:prop_adap_seif_fib}, 
and the form of the free Dirac operator in tubular coordinates 
(same form as \eqref{eq:form_Dirac_adapted_basis}, but with $w$ and $W_\xi$ replaced by $z:=\rho e^{i\theta}$ and $M_\xi$ respectively).
From the equality $\overline{z}^{-\alpha}=\overline{z}^{\lfloor Nt \rfloor} /\overline{z}^{Nt}$ and Lemma~\ref{lem:refinement_BS_behav}, we infer 
\[
\wt{\psi}_j^{(n)}(\bx)=\mathrm{exp}\big[it[\theta_{\eta_n}(\bx)+\Phi_{\eta_n}(\bx)+g_{\eta_n}(\bx)]+ijs(\bx)\big]\frac{\chi_{2\delta,0}(\rho(\bx))}{q_{\eta_n}(\bx)^t}\overline{z(\bx)}^{\lfloor Nt \rfloor}.
\]
Remember the convergence of the  tubular coordinates of $\gamma_{\eta}$ as $\eta\to 0^+$, section~\ref{sec:prop_adap_seif_fib},
and that of $\Phi_{\eta_n},g_{\eta_n}$ (Lemma~\ref{lem:refinement_BS_behav}): they imply that of $\Omega^{-1}\wt{\psi}_j^{(n)}$
to $\Omega^{-1}\wt{\psi}_j$ in $L^2_{\Omega}$. 
A computation shows that the free Dirac operator in tubular coordinates 
has the same form as \eqref{eq:form_Dirac_adapted_basis}, but with $w$ and $W_\xi$ replaced by $z:=\rho e^{i\theta}$ and $M_\xi$ respectively. 
Hence $\cD_n (\Omega^{-1}\wt{\psi}_j^{(n)})$ converges to $\cD(\Omega \wt{\psi}_j)$ in $L^2_{\Omega}$.

\medskip
There remains to deal with elements in the minimal domain $\dom(\cD_{\T,\alpha}^{(\min)})$: they correspond to elements
in $\dom(\cD^{(\min)})$. See the graph norm relation \cite{dirac_s3_paper3}*{Proposition~36} (and also \cite{dirac_s3_paper1}*{Lemma~27}): 
a density argument for $\cD_{\T_\ell,\alpha}^{(-)}$ also works for $\cD_n$ for localized element around $\gamma_0$ 
(even though the second paper deals with other coordinates, the proof is the same \emph{mutatis mutandis}, and the same kind of estimates hold here). 

Observe that for all $\phi\in \dom(\cD^{(\min)})$ and $\delta_1>0$,
we have $(1-\chi_{\delta_1,0})\phi\in \dom(\cD^{(\min)})\cap \dom(\cD_n^{(\min)})$ as long as $\eta_n<\delta_1$. Then the convergence of the B.S.
gauge (Remark~\ref{rem:useful_and_trivial}) ensures the convergence $\cD_n((1-\chi_{\delta_1,0})\phi)\to \cD((1-\chi_{\delta_1,0})\phi)$
in $L^2_{\Omega}$. The rest follows by density.

\subsection{Convergence of energy bounded sequence (Proof of Lemma~\ref{lem:compac})}

By the Banach-Alaoglu theorem, and up to the extraction of a subsequence, we can assume that
we have the weak convergence:
\[
(\psi_n,\cD_n\psi_n)\rightharpoonup (\psi,\psi')\ \mathrm{in}\ \big[L^2_{\Omega}\big]^2.
\]
It suffices to show that the limit $(\psi,\psi')$ is in the graph of the maximal operator $\cD^{(\max)}$, that is 
$\psi$ is in $H_{\loc}^1(\R^3\setminus(S_{\aux}\cup\gamma_0))^2$ with 
\[
[\Omega^{-2}\sigma\cdot(-i\nabla+\bA_{B.S.})\Omega\psi]\restriction{\R^3\setminus(S_{\aux}\cup\gamma_0)}\in L_{\Omega}^2.
\]
This follows from the strong-resolvent convergence of $\cD_n$. Indeed, for any $(\phi,\cD\phi)$ in the graph of $\cD$, there exists a sequence
$(\phi_n,\cD_n\phi_n)$ of elements in the graphs of the $\cD_n$'s converging to $(\phi,\cD\phi)$. We obtain:
\begin{equation}\label{eq:smart_argu}
	\cip{\cD_n\psi_n}{\phi_n}_{L_{\Omega}^2}=
		\cip{\psi_n}{\cD_n\phi_n}_{L_{\Omega}^2}\underset{n\to+\infty}{\longrightarrow}\cip{\psi}{\cD\phi}_{L_{\Omega}^2}.
\end{equation}
That is $\psi\in\dom(\cD^*)=\dom(\cD)$. The convergence for $\phi$ in $\dom(\cD^{(\min)})$ 
gives $\psi'=\cD\psi$. Let us turn to the proof.

\medskip

We localize the sequence $(\psi_n)$ with respect to a small level $\delta>0$. 

Consider $\chi_{\delta,\underline{a}}\psi_n$ with $\underline{a}\in\{2,3\}^{K+1}$.
The function $\overline{E}_{\delta,\underline{a}}^{(n)}\chi_{\delta,\underline{a}}\psi_n$
does not exhibit any phase jump. Hence it is $H^1_{\Omega}\simeq H^1(\S^3)^2$,
and the corresponding sequence is $H^1$-bounded. Up to the extraction of a subsequence, it converges in $L^2_{\Omega}$, and the $L^\infty$-convergence of $E_{\delta,\underline{a}}^{(n)}$ gives that of $\chi_{\delta,\underline{a}}\psi_n$, with the correct phase jump across the surfaces $S_k$'s. 

Consider $\cD_{\sing,n}:=\cD_n-\Omega^{-1}\sigma\cdot\bA_{B.S.}^{(n)}$,
which acts like $\cD_\Omega$ away from the link $\gamma_{\full}(\eta_n)$ and the surfaces $S_k$.
Since we also have convergence of the Biot and Savart gauge (Remark~\ref{rem:useful_and_trivial}),
the sequence $(\cD_n\chi_{\delta,\underline{a}}\psi_n)$ also converges to $\cD \chi_{\delta,\underline{a}}\psi$ weakly in $L^2_{\loc}(\R^3\setminus(S_{\aux}\cup\gamma_0))^2$. 
Then, repeating the argument \eqref{eq:smart_argu} for $\chi_{\delta,\underline{a}}\psi_n$
and $\phi\in\dom(\cD^{(\min)})$ gives $\chi_{\delta,\underline{a}}\psi\in\dom(\cD^{(\max)})$.

Recall that $\dom(\cD^{(\min)})$ is the graph norm closure of elements in $H^1(\R^3\setminus(S_{\aux}\cup\gamma_0),\Omega^3\d\bx)^2$ (defined w.r.t. the connection $\nabla^{(\Omega)}$, \eqref{eq:connexion_Omega}) 
with the correct phase jump across the $S_k$'s and whose supports do not intersect the link $\gamma_{\lim}$ (see \cite{dirac_s3_paper1}). 

Since $\delta$ was arbitrary, a diagonal extraction subordinated to 
$\delta_n=2^{-n}\delta$ followed by the argument \eqref{eq:smart_argu} applied
to $\psi_n$ and $\phi$ in the described above set gives $\psi\in\dom(\cD^{(\max)})$.

\subsection{Study of the bump continuity (Proof of Lemma~\ref{lem:continuity_along_collapse})}

Let us now investigate the bump-continuity of the family of Dirac operators. We emphasize that we can indifferently pick the B.S. gauge or a singular gauge for $2\pi t[\gamma_{\eta}]$ (the spectrum is gauge invariant).
The cases which are not covered by the bump-continuity results \cite{dirac_s3_paper3}*{Theorems~14~$\&$~15} are when $\eta\to 0^+$
for any value of the auxiliary flux $\alpha_a$.
We study the bump continuity of $\cD_{\bA_{\full}}$ at such a point $(t,0,\alpha_a)=:\bp$.

\subsubsection{Reduction to the study of a vanishing quasimode}
As in \cite{dirac_s3_paper2}*{Theorem~15}, we show that the failure of bump continuity at a level $\lambda$ is equivalent to the existence of
a vanishing quasimode at this level along a sequence $(t_n,\eta_n,\alpha_a^{(n)})\to \bp$. We refer to the proof of this theorem for full details of the argument.

Let $\Lambda$ be a bump function centered at $\lambda$, and assume that $\Lambda(\cD_{\bA_{\full}(\cdot)})$ is not continuous at $\bp$. Then there exists a sequence $(\bp_n)$ converging to $\bp$ with 
\[
\limsup_n \norm{\Lambda(\cD_{\bA_{\full}(\bp_n)})-\Lambda(\cD_{\bA_{\full}(\bp)})}_{\mathcal{B}}>0.
\]
Since the Dirac operators in the range of $\cD_{\bA_{\full}}$ all have discrete spectrum, then the strong resolvent continuity together and 
Lemma~\ref{lem:compac} imply the existence of a sequence of \emph{normalized} eigenfunctions 
$(\psi_n)$, satisfying
\[
\psi_n'\in\ker(\cD_{\bA_{\full}(\bp_n)}-\lambda_n),\  \lambda_n\in (\supp\,\Lambda)^{\circ},\ \&\ \psi_n'\rightharpoonup_n 0\ \mathrm{in}\ L^2_{\Omega}.
\]
Furthermore, the sequence $(\psi_n)$ collapses onto $\gamma_0$:
\[
\lim_{\eps\to 0}\liminf_n \int_{B_\eps[\gamma_0]} |\psi_n'|^2\Omega^3=1.
\]
Now if continuity fails for all bump functions centered at $\lambda$, then a diagonal argument along a sequence $(\Lambda_n)$ of bump functions with support decreasing to $\{\lambda\}$ provides us with a sequence $(\bp_n)$ converging to $\bp$ together with a sequence of normalized eigenfunctions $(\psi_n')$, $\psi_n'=:\Omega^{-1}\psi_n\in \dom(\cD_{\bA_{\full}(\bp_n)})$ satisfying:
	\[
	\left\{
		\begin{array}{ccc}
			\Omega^{-2}\cD_n\psi_n&=&\lambda_n\Omega^{-1}\psi_n,\\
			\lambda_n&\to_n&\lambda,\\
			\Omega^{-1}\psi_n&\underset{L^2_{\Omega}}{\rightharpoonup}&0, 
		\end{array}
	\right.
	\]
and
$
1=\int \Omega|\psi_n|^2=\lim_{\eps\to 0}\liminf_n \int_{B_\eps[\gamma_0]} |\psi_n'|^2\Omega.
$
Reciprocally, if such a sequence exists, the map $\cD_{\bA_{\full}(\cdot)}$ is not $\lambda$-bump continuous at $\bp$.

\subsubsection{Study of a collapsing sequence}
Let us study in detail a collapsing sequence $(\Omega^{-1}\psi_n)$ (described in the previous section). 
For short we write $\cD_n$ instead of $\cD_{\bA_{\full}(\bp_n)}$. We will denote $T_n$ the corresponding operator
in the flat metric of $\R^3$:
\[
T_n:= \Omega^2 \cD_n\Omega^{-1}:\clos_{\cG}\, \Omega \dom(\cD_n)\to L^2(\R^3)^2.
\]

Our aim is to show that $\lambda$ must be in the set described in the theorem, and reciprocally
that this fact implies the non bump continuity.

\medskip

The eigen-equation can be rewritten: $(T_n-\lambda_n\Omega)\psi_n=0$.
Since the function $\Omega^{-1}\psi_n$ collapses onto $\gamma_0$, 
for any level of localization $\delta>0$, see Section~\ref{sec:loc}, $\Omega^{-1}\chi_{\delta,\gamma_0}\psi_n$ defines a sequence
of $\lambda$-quasimode:
\[
\lim_n \norm{(\cD_n-\lambda)\Omega^{-1}\chi_{\delta,\gamma_0}\psi_n}_{L^2_{\Omega}}=0.
\]
Since $\Omega$ and $\Omega^{-1}$ are bounded around $\gamma_0$, 
we can work with the flat metric metric. In particular we also have
\[
\lim_n \int_{\R^3}|(T_n-\lambda_n\Omega) \chi_{\delta,\gamma_0}\psi_n|^2=0.
\]

Thanks to the localization, we can use Remark~\ref{rem:wlog}: up to the gauge transformation 
$e^{it_n(\Phi_{\eta_n}+g_{\eta_n})}\overline{E}_{\delta,0}$, we can assume that $\bA_{\full}$ takes the form:
$t_n\d\theta_{\eta_n}-2\pi\alpha_a^{(n)}\tfrac{\d s}{\ell}$. Up to another gauge transformation $e^{it\theta_{\eta_n}}$ 
(with say a cut along $\theta_{\eta_n}=\theta_0$, and $\theta_0\in[\R\setminus \pi\mathbb{Q}]/(2\pi\Z)$ fixed), 
it takes the form $-2\pi\alpha_a^{(n)}\tfrac{\d s}{\ell}$, and the function
exhibits the phase jump $e^{-2i\pi t_n}$ across $\{\theta_{\eta_n}=\theta_0\}$.

\medskip
Now we analyze $\chi_{\delta,\gamma_0}\psi_n=:\psi_{\delta}^{(n)}$ following the method of Section~\ref{sec:model_op} 
(and we refer the reader to this section for the definition of all the geometrical objects).

We form the associated spinor $f^{(n)}=\begin{pmatrix}f_+^{(n)} & f_-^{(n)} \end{pmatrix}^T$ made of its coordinates
in the basis $(\xi_u,\xi_d)$. We write $f^{(n)}$ in cable coordinates $(s,w)$ \eqref{eq:new_coord}, and its lift
on $L^2(B_{\mathrm{cov}}\times \C)^2$, written $F^{(n)}$, is seen as an element in the domain of the model case $\cD_{\model}^{(n)}$
associated with $\eta_n$ and $\alpha=t_n$.
Equipped with Lemma~\ref{lem:graph_norm_estimate}, we decompose $f^{(n)}$ into Fourier modes with values in $\dom(\cD_{\eta_n,t_n})$.
Recall also Formula~\eqref{eq:form_Dirac_adapted_basis}, which gives the form of the Dirac operator in cable coordinates. 

In particular, we can decompose $F^{(n)}$ into Fourier modes of $L^2(\T_{2N\ell})$ with values in $\dom(\cD_{\eta_n,t_n}))$:
\begin{equation}\label{eq:fourier_decomp}
F^{(n)}(s,w)=\frac{1}{\sqrt{2N\ell}}\sum_{j\in\T_{2N\ell}^*}e^{ijs}h_j^{(n)}(w),\ h_j^{(n)}\in \dom(\cD_{\eta_n,t_n}).
\end{equation}
We make several remarks.

\medskip
\noindent 1. Since we deal with localized elements around $\gamma_0$, convergence in $L^2(B_\eps[\gamma_0])^2$ 
implies convergence for the corresponding spinor in $L^2(B_\eps[\T_{2N\ell}\times\{0\}],\d s\d w)^2$ and vice versa.
We also recall that on this tubular neighborhood, we have $\partial_u=\tfrac{1}{h}\partial_s$, where $0<c_b\le h\le c_h<+\infty$.

\medskip
\noindent 2. Due to the collapse of $\psi_n$, we have $\lim_n \norm{\rho F^{(n)}}_{L^2(B_{\mathrm{cov}})^2}=0$.
Thanks to \eqref{eq:graph_norm_formula}-\eqref{eq:model_case_energy_equalities} 
and Lemma~\ref{lem:graph_norm_estimate}, we also have
\[
\lim_n \norm{\partial_s(\rho F^{(n)})}_{L^2(B_{\mathrm{cov}})^2}=0.
\]
\medskip
\noindent 3. Using Formula~\eqref{eq:form_Dirac_adapted_basis}, for $(u,w)\in B_\delta^{(\eta_n)}[\gamma_0]$, see Section \ref{sec:graph_norm_rel}, we obtain:
\begin{multline*}
\begin{pmatrix}
	\cip{\xi_u}{T_n \psi_n}\\
	\cip{\xi_d}{T_n \psi_n}
\end{pmatrix}(s,w)= \Big[\cD_{\model}^{(n)} F^{(n)}-\frac{2\pi}{\ell}\alpha_a^{(n)}\sigma_3F^{(n)}\Big](s_l,w_l)
\\
+W_\xi(s)F^{(n)}(s_l,w_l)+o_{L^2}(1),\ n\to+\infty.
\end{multline*}
Above, $(s_l,w_l)$ denotes the point on the lift 
of $B_\delta^{(\eta_n)}[\gamma_0]$ in $B_{\mathrm{cov}}$ corresponding to $(s,w)$ like in Section \ref{sec:graph_norm_rel}),
(after having prescribed the lift
of $\gamma_{\eta_n}(s=0)$ to $(0,\eta_n/(8N))\in \T_{2N\ell}\times \C$, see Section \ref{sec:graph_norm_rel}).
\begin{rem}
As a word of caution: here the points are given in \emph{cable coordinates}, not in tubular coordinates of $\gamma_{\eta_n}$.
Running along $\gamma_{\eta_n}$ \emph{twice} from $s=0$ to $s=2N\ell$, $s_l$ corresponds to the \emph{continuous} lift to $B_{\eps}(\T_{2N\ell}\times\{0\})$ 
of the projection onto $\gamma_0$. In particular $s_l(\bx)$ is always equal to $s(\bx)$ modulo $\ell$.
\end{rem}

Similarly, the quasimode relation becomes:
\begin{multline}\label{eq:eig_eq}
\Big[\cD_{\model}^{(n)} -\frac{2\pi}{\ell}\alpha_a^{(n)}\sigma_3+W_\xi(s)-\lambda\Omega(s)\Big]F^{(n)}(s_l,w_l)
=o_{L^2}(1),\\ 
W_\xi(s):=W_\xi(\gamma_0([s])),\ \Omega(s):=\Omega(\gamma_0([s])),\ [s]=s+\R/\ell\Z.
\end{multline}

Consider $j\in\T_{2N\ell}^*$. We split $h_j^{(n)}$ into its regular part $h_{j,0}^{(n)}$ and its singular part $h_{j,\sing}^{(n)}$,
and we further decompose the latter with respect to the basis $(f_{\sing,k}^{(n)})$ \eqref{eq:def_f_sing_2D}: $h_{j,\sing}^{(n)}=\sum_k c_{j,k}^{(n)}f_{\sing,k}^{(n)}$.
We call $F_0^{(n)}$ the regular part of $F^{(n)}$, $F_{\sing,k}^{(n)}$ its $k$-singular part, $e_{k}^{(n)}(s):=\sum_{j\in\T_{2N\ell}^*}c_{j,k}^{(n)}e^{ijs}$ and $F_{\sing}^{(n)}=\sum_k F_{\sing,k}^{(n)}$:
\[
F_0^{(n)}(s,w):=\sum_{j\in\T_{2N\ell}^*}e^{ijs}h_{j,0}^{(n)}(w),\ F_{\sing,k}^{(n)}(s,w):=
e_{k}^{(n)}(s)f_{\sing,k}^{(n)}(w).
\]

The boundedness of the $L^2$-norm ensures $\sum_{j,k}|c_{j,k}^{(n)}|^2\le 1$ for all $n$.
By a diagonal extraction argument, we can assume that all the sequences $(c_{j,k}^{(n)})_{n\in\ge 0}$ converge.
From Lemma~\ref{lem:dichotomy_sing_basis} and the boundedness of $F^{(n)}$ in graph norm, 
we get $\lim_n c_{j,k}^{(n)}=0$ for any $0\le k\le N-1$, $Nt-k<1$ (this follows from collapse of $F^{(n)}$ for $k=Nt$, and from graph norm boundedness in the other cases).  
For $k<\lfloor Nt\rfloor$, $f_{\sing,k}^{(n)}$
collapses to $0$.  Hence $c_{j,k}^{(n)}f_{\sing,k}^{(n)}$ either converges to $0$ or collapses.

By Lichnerowicz' formula (second line of \eqref{eq:model_case_energy_equalities}), and up to the extraction of a subsequence, we can assume that the regular part $F_0^{(n)}$ converges in $L^2(\T_{2N\ell}\times\C)^2$: its limit is necessarily $0$.

Reconsider \eqref{eq:eig_eq} in light of these convergences, we have:
\begin{multline}\label{eq:light}
\Big[\cD_{\model}^{(n)}-\frac{2\pi}{\ell}\alpha_a^{(n)}\sigma_3+W_\xi(s)-\lambda\Omega(s)\Big]\sum_{k:Nt-k\ge 1}F_{\sing,k}^{(n)}\\
+\cD_{\model}^{(n)}\big(\sum_{k:Nt-k<1}F_{\sing,k}^{(n)}+F_{0}^{(n)}\big)=o_{L^2}(1).
\end{multline}
As in the proof of \cite{dirac_s3_paper2}*{Theorem~15}, we expect the following behavior. 
The lower terms cannot cancel the upper ones, hence converges also to $0$. This leads to an effective eigen-equation for the $F_{\sing,k}^{(n)}$
and forces $\lambda$ to be on the spectrum of an effective operator $\cT_{k}$ for some $k$, $Nt-k\ge 1$. 

From Lemma~\ref{lem:decomp_sing_subspace} and \eqref{eq:model_case_energy_equalities}, we know that $(e_{k}^{(n)}(s))_{0\le k\le N-1}$ is an orthogonal family in $H^1(\T_{2N\ell})$ and $L^2(\T_{2N\ell})$, with uniformly bounded energy. 
We decompose \eqref{eq:light} with respect to $L^2(\T_{2N\ell})\otimes L^2(\C)^2$.
The (bounded) multiplication operator $-\frac{2\pi}{\ell}\alpha_a^{(n)}\sigma_3+W_\xi(s)-\lambda\Omega(s)$ satisfies:
\[
-\frac{2\pi}{\ell}\alpha_a^{(n)}\sigma_3+W_\xi(s+\ell)-\lambda\Omega(s+\ell)=-\frac{2\pi}{\ell}\alpha_a^{(n)}\sigma_3+W_\xi(s)-\lambda\Omega(s),
\]
hence the term $\Big[\cD_{\model}^{(n)}-\frac{2\pi}{\ell}\alpha_a^{(n)}\sigma_3+W_\xi(s)-\lambda\Omega(s)\Big]F_{\sing,k}^{(n)}$, and especially its 
spin down can only be compensated by a term $\cD_{\model}^{(n)} F_{0}^{(n)}$ (since the other terms are orthogonal to it in $L^2(\T_{2N\ell}\times\C)^2$). If we show that this is impossible, then from 
the spin down equation we will obtain the quasi-mode relation:
\begin{equation}\label{eq:eigeq_spin_down}
	     \Big(i\partial_s +(N\alpha-k)\tau(s)+\tfrac{2\pi}{\ell}\alpha_a+i\cip{\xi_-}{\partial_s\xi_-}(\gamma_0(s))-\lambda \Omega(\gamma_0(s))\Big)e_{k}^{(n)}(s)=o_{L^2}(1).
\end{equation}
Since we have $\liminf_n\norm{e_{k}^{(n)}}_{L^2}>0$ for at least one $k$, this implies that $\lambda$ is in the spectrum of the effective operator
$\cT_{\gamma_0}^{(-)}+(N\alpha-k)\tau_{\bS,\S^3}$ on $\gamma_0$ (where $\tau_{\bS,\S^3}(\gamma_0(s))=\frac{\tau(\gamma_0(s))}{\Omega(\gamma_0(s))}$ is the relative torsion of $\bS$ in the metric of $\S^3$), see \cite{dirac_s3_paper2}*{Section~3.3}.
That is, $\lambda$ is in the set \eqref{eq:crit_eig}.

\paragraph{\emph{No possibility of compensation}}
Consider \eqref{eq:eigeq_spin_down}, rewritten $a_{k,-}^{(n)}(s)=o_{L^2}(1)$. 
We show by a contradiction argument that $\lim_n\norm{a_{k,-}^{(n)}(s)}_{L^2}=0$, or in other words that $a_{k,-}^{(n)}(s)f_{\sing,k}^{(n)}(w)$
cannot be compensated by the term $\cD_{\model}^{(n)}F_0^{(n)}$. 
Assume $\limsup_n\norm{a_{k,-}^{(n)}(s)}_{L^2}>0$, $\lim_n\norm{a_{k,-}^{(n)}(s)f_{\sing,k}^{(n)}(w)-\cD_{\model}^{(n)}F_0^{(n)}}_{L^2}=0$.

Recall that $F_0^{(n)}$ satisfies the second equality of \eqref{eq:model_case_energy_equalities}. By collapse of $f_{\sing,k}^{(n)}$, the spin-down component of $F_0^{(n)}$ cannot help compensating $a_{k,-}^{(n)}(s)f_{\sing,k}^{(n)}(w)$. Decomposing $\cD_{\model}^{(n)}F_0^{(n)}$
with respect to $L^2(\T_{2N\ell})\otimes L^2(\C)^2$, we see that the putative compensating term can only be of the form
$\wt{a}_{k}^{(n)}(s)h_{k,0}^{(n)}(w)$, with $h_{k,0}^{(n)}(w)\in\dom(\cD_{\eta_n,t_n}^{(\min)})$ and $\lim_n\norm{\wt{a}_{k}^{(n)}-a_{k,-}^{(n)}}_{L^2}=0$.
Since $\limsup_n\norm{a_{k,-}^{(n)}(s)}_{L^2}>0$,
we have $\lim_n\norm{h_{k,0}^{(n)}}_{L^2(\C)^2}=0$ (the regular part converges to $0$). 
But Lemma~\ref{lem:dichotomy_sing_basis} (case $k<E(N\alpha)$) gives:
\begin{multline*}
\cip{\cD_{\model}^{(n)}\wt{a}_{k}^{(n)}h_{k,0}^{(n)}}{a_{k,-}^{(n)}f_{\sing,k}^{(n)}}_{L^2}=-i\cip{\tfrac{\d}{\d s}\wt{a}_{k}^{(n)}}{a_{k,-}^{(n)}}_{L^2(\T_{2N\ell})}\cip{h_{k,0}^{(n)}}{f_{\sing,k}^{(n)}}_{L^2(\C)^2}\\
	+\cip{\wt{a}_{k}^{(n)}}{a_{k,-}^{(n)}}_{L^2(\T_{2N\ell})}\cip{h_{k,0}^{(n)}}{\cD_{\eta_n,t_n}f_{\sing,k}^{(n)}}_{L^2(\C)^2}\underset{n\to+\infty}{\longrightarrow} 0.
\end{multline*}

\subsubsection{Existence of collapsing sequences}
We turn to the reciprocal statement: given $\lambda$ in the set \eqref{eq:crit_eig}, we construct a collapsing sequence along $\eta_n\to 0$
and $t_n\to t$ (if $t=1$, $t_n\to 1^-$) and $\alpha_a^{(n)}\equiv \alpha_a$. We call $e_{k,\lambda}(s)$ the corresponding eigenfunction of the effective operator. We use \cite{ErdSol01}*{Section~1} and \cite{dirac_s3_paper2}*{Appendix}: we have $i\cip{\xi_-}{\partial_s\xi_-}(\gamma_0(s))=\omega(s)-\tfrac{\tau}{2}$ with $\int_0^\ell\omega(s)\d s\equiv \pi\mod 2\pi$. So $e_{k,\lambda}(s)$ takes the form
\[
e_{k,\lambda}(s)=\frac{1}{\sqrt{2N\pi \ell}}\mathrm{exp}\Big[iNt I_0(s)+\tfrac{2\pi}{\ell} s-i\lambda \int_0^s (\Omega-i\cip{\xi_-}{\partial_s\xi_-})(\gamma_0(s'))\d s'\Big].
\]
We emphasize that $e^{-i(k+\tfrac{1}{2})I_0(s)}e_{k,\lambda}(s)$ is $\ell$-periodic and defines an element in $C^1(\T_\ell)$.
The ansatz $F_{\ans,\lambda}^{(n)}$ of a $\lambda$-quasimode is given up to normalisation by
\[
e_{k,\lambda}(s)\frac{\chi_{\delta,\gamma_0}(w)}{(\overline{w}^N-\eta_n^N)^{t_n}}\Big[1-\tfrac{1}{2}we^{-iI_0(s)}\cip{\xi_+}{\partial_s \xi_-}(\gamma_0(s))\Big]\begin{pmatrix} 0 \\ 1\end{pmatrix}.
\]
\begin{rem}
We emphasize that the ansatz is given in the same local gauge as the one for $F^{(n)}$: with a cut along $\{\theta_{\eta_n}=\theta_0\}$
but without phase jumps due to the auxiliary knots. It is made of the components of the collapsing sequence in the $(\xi_u,\xi_d)$-basis.
The reader might argue that the second term also has a non-vanishing singular part in general. This is true, but its regular part is much bigger in graph norm of $\cD_{\model}^{(n)}$.
\end{rem}

\begin{appendix}

\section{Proof of the first part of Theorem~\ref{thm:behavior_BS}}\label{sec:first_part_proof_thm_behavior}

  \paragraph{\emph{Decomposition of the integral}}
    We fix $\eps\in (0,\tfrac{\ell}{2})$.
    
    For any $\bx$ in the vicinity of $\gamma$, 
    we decompose the domain of integration $\T_{\ell}=\R/(\ell\Z)$ into two: a window $J^{(\eps)}:=(s_0-\eps,s_0+\eps)$
    of fixed size around $s_0$, and the remainder. We are interested in the limit of $\bA(\bx)$ as $\rho\to 0$,
    in particular at fixed $s_0$. By dominated convergence, 
    it is clear that the integral over $\T_{\ell}\setminus J^{(\eps)}$ (at fixed $s_0$) is differentiable in $\rho$, giving:
    \begin{equation}\label{eq:part_remainder}
     \int_{|s-s_0|\ge \eps}\dot{\gamma}(s)\times\frac{\bx-\gamma(s)}{|\bx-\gamma(s)|^3}=\int_{|s-s_0|\ge \eps}\dot{\gamma}(s)\times\frac{\gamma(s_0)-\gamma(s)}{|\gamma(s_0)-\gamma(s)|^3}
      +\underset{\rho\to 0}{\mathcal{O}}(\rho).
    \end{equation}
    The difficulty is of course to evaluate the expansion of the integral over $J^{(\eps)}$.
    
    \medskip
    
    \paragraph{\emph{Notations for the proof}}
    By a change of basepoint we can assume $s_0=0$.
    To shorten notations, the arguments of the functions are written as subscripts, 
    for instance $\gamma(0)$ and $\gamma_{0}$ denotes the same point.
    We will also use the following notations:
    \begin{equation*}
     \delta=\sqrt{s^2+\rho^2},\ A=\frac{\eps}{\rho},\ \langle s\rangle=\sqrt{1+s^2}.
    \end{equation*}
    
    We use Taylor-Lagrange formula with integral remainder and write:
    \begin{equation*}
    \begin{array}{rcl}
      \gamma_s-\gamma_0&=& \sum_{n=1}^N\frac{s^n}{n!}\gamma_0^{(n)}+g_{n+1}(s),\\
      \dot{\gamma}_s-\dot{\gamma}_0&=&\sum_{n=1}^N\frac{s^n}{n!}\gamma_0^{(n+1)}+\ell_{n+1}(s),
    \end{array}
    \end{equation*}
    and later on we will drop the dependence in $s$ in the remainding terms $g_{n+1}$,$\ell_{n+1}$ and $h_6$.
    The big O notation is understood to refer to the limit $\rho\to 0^+$.
    We also write:
     \begin{equation*}
     |\gamma_s-\gamma_0|^2=s^2-\tfrac{s^4}{12}|\ddot{\gamma}_0|^2+h_5(s),\ h_5=\underset{s\to 0}{\mathcal{O}}(s^5).
    \end{equation*}
    We have used: $\cip{\dot{\gamma}_0}{\ddot{\gamma}_0}=0$ and $|\ddot{\gamma}_0|^2=-\cip{\dot{\gamma}_0}{\gamma_0^{(3)}}$.
    We will also have to consider the expansion of $(1-X)^{-3/2}$, and for $k\ge 0$, we write
    \begin{equation}\label{eq:exp_k}
    	(1-X)^{-3/2}=1+\frac{3}{2}X-\frac{5}{8}X^2+\cdots+(-1)^{k}\frac{(2k+1)!}{4^{k}\,k! }X^{k}+F_{k+1}(X).
    \end{equation}
    
    \medskip
    
    \paragraph{\emph{Elementary equalities}}
    For an integer $n$ and $s\neq 0$, we have:
    \begin{equation*}
     \dfrac{1}{\delta^{n}}-\dfrac{1}{s^{n}}=-\dfrac{\rho^2 \sum_{k=0}^{n-1} s^{2k}\delta^{2(n-k)}}{s^{n}\delta^{n}(s^{n}+\delta^{n})},
    \end{equation*}
    so for $m_1\ge n$ and $m_2\ge 0$ we have:
    \begin{equation*}
    \begin{array}{rcl}
     I_\eps(n,m_1,m_2)&:=&\displaystyle\int_0^{\eps}s^{m_1}\rho^{m_2}\Big[\dfrac{1}{\delta^{n}}-\dfrac{1}{s^{n}}\Big]\d s\\
     	&=& -\displaystyle\int_0^{\eps}\frac{s^{m_1}\rho^{m_2}}{\delta^n s^n}\frac{\rho^2 \sum_{k=0}^{n-1} s^{2k}\delta^{2(n-k)}}{\delta^n+s^n}\d s
	\end{array}
    \end{equation*}
    Introducing $L_\eps(n,m_1,m_2):=\int_0^{\eps}\frac{s^{m_1}\rho^{m_2}}{\delta^n}\d s$, we obtain:
    \begin{equation}\label{eq:estim}
	\begin{array}{lcr}
	 L_\eps(n,m_1,m_2)=\mathcal{O}(\rho)&\mathrm{if}& m_2\ge 1\ \&\ m_1+m_2\ge n+1,\\
    	 I_\eps(n,m_1,m_2)=\mathcal{O}(\rho)&\mathrm{if}& m_1\ge n\ \&\ m_1+m_2\ge n+1.
    	 \end{array}
    \end{equation}

    We have other relevant integrals to compute. For $A>0$ (which will be $\eps/\rho$ in the proof) we have:
  \begin{equation*}
	\left\{
		\begin{array}{rcl}
			\dint_0^A\frac{\d s}{\langle s\rangle}&=&\mathrm{arcsh}(A),\\
								&=&\log(A)+\log(2)+\underset{A\to+\infty}{\mathcal{O}}(A^{-2}),\\
  			\dint_0^{A}\frac{\d s}{\langle s\rangle^3}&=&1+\underset{A\to+\infty}{\mathcal{O}}(A^{-2}).
		\end{array}
	\right.
  \end{equation*}

  In particular the following holds. For $n\ge 1$, we define 
  \[
  J_n(A):=\dint_0^A\frac{s^{2n}}{\langle s\rangle^{2n+1}}\d s,
  \] 
  an integration by parts yields
    $J_n(A)=-\frac{A^{2n-1}}{2(n-1)\langle A\rangle^{2(n-1)}}+J_{n-1}(A)$. Hence at fixed $n\ge 1$, we obtain:
  \begin{equation}\label{eq:compute_integ_3}
	\dint_0^A\frac{s^{2n}}{\langle s\rangle^{2n+1}}\d s=\log(A)+\log(2)-\frac{1}{2}\sum_{j=1}^{n-1}\frac{1}{j} +\underset{A\to+\infty}{\mathcal{O}}(A^{-2}).
  \end{equation}

 \medskip
    
    \paragraph{\emph{Expansion of the numerator}}
      A computation gives:
      \begin{multline}\label{eq:numerator}
       \dot{\gamma}_s\times(\bx-\gamma_s)=
       \rho \dot{\gamma}_0\times V+\tfrac{s^2}{2}\dot{\gamma}_0\times\ddot{\gamma}_0+\rho s\ddot{\gamma}_0\times V
	    +\tfrac{s^3}{3}\dot{\gamma}_0\times\gamma_0^{(3)}+\tfrac{\rho s^2}{2}\gamma_0^{(3)}\times V\\
	   +\tfrac{s^4}{12}\ddot{\gamma}_0\times\gamma_0^{(3)}+\ell_3\times(\bx-\gamma_s)-\dot{\gamma}_s\times g_4.
      \end{multline}
      
	 \medskip
	 
    \paragraph{\emph{Expansion of $|\bx-\gamma_s|^2$}}
      As $\cip{v}{\dot{\gamma}_0}=0$, we also have:
      \[
	\tfrac{|\bx-\gamma_s|^2}{\delta^2}=1-\tfrac{s^2\rho}{\delta^2}\cip{v}{\ddot{\gamma}_0}
	-\tfrac{s^4}{12\delta^2}|\ddot{\gamma}_0|^2-\tfrac{s^3\rho}{3\delta^2}\cip{v}{\gamma_0^{(3)}}
	+\tfrac{h_5}{\delta^2}-\tfrac{2\rho}{\delta^2}\cip{v}{g_4}.
      \]
	We introduce the notations:
	\begin{equation*}
		1-X(s,\bx):=\tfrac{|\bx-\gamma_s|^2}{\delta^2}.
	\end{equation*}
       \medskip
      
      \paragraph{\emph{Expansion of the integral}}
      We compute the expansion of the integral over $J^{(\eps)}$. We deal with the elements of the numerator \eqref{eq:numerator} one after the other starting
      with the ones of highest degrees in $s$ and $\rho$. For each of them, we use \eqref{eq:estim} to determine at which order we need to stop 
      the expansion of $|\bx-\gamma_s|^{-3}$ in powers of $X(s,\bx)$.
            
	 \medskip
	
	\subparagraph{\emph{The terms of order $s^4+\rho s^3$}}
	Using \eqref{eq:estim} for $n=3$ and $|\bx-\gamma_s|^{-3}=\mathcal{O}(\delta^{-3})$, we consider $k=0$ in \eqref{eq:exp_k} and obtain:
	\begin{multline*}
		\int_{-\eps}^{\eps}\frac{\d s}{|\bx-\gamma_s|^3}
			\big(\tfrac{s^4}{12}\ddot{\gamma}_0\times\gamma_0^{(3)}+\ell_3(s)\times(\bx-\gamma_s)-\dot{\gamma}_s\times g_4(s)\big)\\
			=\int_{-\eps}^{\eps}\frac{\d s}{|\gamma_0-\gamma_s|^3}
				\big(\tfrac{s^4}{12}\ddot{\gamma}_0\times\gamma_0^{(3)}+\ell_3(s)\times(\gamma_0-\gamma_s)-\dot{\gamma}_s\times g_4(s)\big)+\mathcal{O}(\rho).
	\end{multline*}
	
	 \medskip
	 
	 \subparagraph{\emph{The term $\tfrac{\rho s^2}{2}\gamma_0^{(3)}\times V$}}
	 We use \eqref{eq:estim} for $(n,m_1,m_2)$ equal to $(5,6,1)$ and $(5,4,2)$, which corresponds to the lowest orders of the term $\tfrac{\rho s^2}{\delta^3}X(s,\bx)$: 
	 so we only have to expand $(1-X)^{-3/2}$ at order $k=0$.
	 At $k=0$, we compute the integral corresponding to $(n,m_1,m_2)=(3,2,1)$ and obtain \eqref{eq:compute_integ_3}:
	 \begin{equation*}
	    \int_{-\eps}^{\eps}\frac{\rho s^2}{2|\bx-\gamma_s|^3}\gamma_0^{(3)}\times V\d s=-\rho\log(\rho)\gamma_0^{(3)}\times V+\mathcal{O}(\rho).
	 \end{equation*}

	\subparagraph{\emph{The term $\tfrac{s^3}{3}\dot{\gamma}_0\times\gamma_0^{(3)}$}}
	Because of the odd exponent of $s$, we expect some cancellation. Recall \eqref{eq:exp_k}. 
	By \eqref{eq:estim}, $(n,m_1,m_2)=(5,7,0),(5,5,1)$, for $k\ge 1$ we have:
	\[
	 \tfrac{1}{3}\dot{\gamma}_0\times\gamma_0^{(3)}\int_{-\eps}^{\eps}\tfrac{s^3}{\delta^3}F_1(X(s,\bx))\d s
	  =\tfrac{1}{3}\dot{\gamma}_0\times\gamma_0^{(3)}\int_{-\eps}^{\eps}\mathrm{sign}(s)F_1(X(s,\gamma_0))\d s+\mathcal{O}(\rho).
	\]
	For $k=0$, we have cancellation as $\int_{-\eps}^{\eps} \tfrac{s^3}{\delta^3}\d s=0$.

	\subparagraph{\emph{The term $\rho s\ddot{\gamma}_0\times V$}}
	Once again, the odd degree in $s$ gives rise to cancellation. By \eqref{eq:estim} with
	$(n,m_1,m_2)$ equal to $(7,5,3)$, $(7,9,1)$ (for $k=2$) and $(5,5,1)$, $(5,4,2)$ (for $k=1$),
	we only need to compute the integrals corresponding to $(3,1,1)$ (case $k=0$) and $(5,3,2)$ (case $k=1$).
	For these cases there holds exact cancellation:
	\[
	 \ddot{\gamma}_0\times V\int_{-\eps}^{\eps}\Big[\frac{\rho s}{\delta^3}+\frac{3\cip{\ddot{\gamma}_0}{V}}{2}\frac{\rho^2 s^3}{\delta^5}\Big]\d s=0,
	\]
	and thus we have:
	\[
	 \ddot{\gamma}_0\times V\int_{-\eps}^{\eps}\frac{\rho s}{|\bx-\gamma_s|^3}\d s=\mathcal{O}(\rho).
	\]

	\subparagraph{\emph{The term $\tfrac{s^2}{2}\dot{\gamma}_0\times\ddot{\gamma}_0$}}
	For $k=2$, we consider \eqref{eq:estim} with $(n,m_1,m_2)$ equal to $(7,6,2)$ and $(7,10,0)$. This shows:
	\[
	 \dot{\gamma}_0\times\ddot{\gamma}_0\int_{-\eps}^{\eps}\frac{s^2}{2\delta^3}F_2(X(s,\bx))\d s
	  =\dot{\gamma}_0\times\ddot{\gamma}_0\int_{-\eps}^{\eps}\frac{\d s}{2|s|}F_2(X(s,\gamma_0))+\mathcal{O}(\rho).
	\]
	For $k=1$, we consider \eqref{eq:estim} with $(n,m_1,m_2)$ equal to $(5,6,0)$ and $(5,5,1)$. This shows that we only have 
	to compute the cases $(5,4,1)$ (for $k=1$) and $(3,2,0)$ (for $k=0$).
	So we get:
	\begin{multline*}
	 \dot{\gamma}_0\times\ddot{\gamma}_0\int_{-\eps}^{\eps}\frac{s^2}{2\delta^3}(1-X(s,\bx))^{-3/2}\d s= 
	 \dot{\gamma}_0\times\ddot{\gamma}_0\int_{-\eps}^{\eps}\big((1-X(s,\gamma_0))^{-3/2}-1\big)\tfrac{\d s}{2|s|}\\
	  +\dot{\gamma}_0\times\ddot{\gamma}_0\dint_0^{\eps}\Big[\frac{s^2}{\delta^3}+\frac{3}{2}\cip{v}{\ddot{\gamma}_0}\frac{\rho s^4}{\delta^5}\Big]\d s+\mathcal{O}(\rho).
	 \end{multline*}
	Using \eqref{eq:compute_integ_3}, we obtain:
	\begin{multline*}
	 \dot{\gamma}_0\times\ddot{\gamma}_0\int_{-\eps}^{\eps}\frac{s^2}{2\delta^3}(1-X(s,\bx))^{-3/2}\d s= 
	 \dot{\gamma}_0\times\ddot{\gamma}_0\int_{-\eps}^{\eps}\big((1-X(s,\gamma_0))^{-3/2}-1\big)\tfrac{\d s}{2|s|}\\
	  +\dot{\gamma}_0\times\ddot{\gamma}_0\big(-\log(\rho)+\log(2\eps)-\tfrac{1}{2}-\tfrac{3}{2}\cip{v}{\ddot{\gamma}_0}\rho\log(\rho)\big)+\mathcal{O}(\rho). 
	\end{multline*}

	\subparagraph{\emph{The term $\rho \dot{\gamma}_0\times V$}}
	For $k=2$, we use \eqref{eq:estim} with $(n,m_1,m_2)=(7,8,1)$ and for $(7,4,3)$ we observe that:
	\[
	 \int_{-\eps}^{\eps}\frac{s^4\rho^3}{\delta^7}=2\rho \int_0^A\frac{s^4}{\langle s\rangle^7}\d s=\mathcal{O}(\rho).
	\]
	This shows that we have:
	\[
	 \rho \dot{\gamma}_0\times V\int_{-\eps}^{\eps}\frac{\d s}{\delta^3}F_2(X(s,\bx))=\mathcal{O}(\rho).
	\]
	For $k=1$. We use \eqref{eq:estim} with $(n,m_1,m_2)$ equal to $(5,5,1)$ and $(5,4,2)$. There remain the terms $(5,3,2)$, $(5,4,1)$ and $(5,2,2)$.
	The term $(5,3,2)$ vanishes by symmetry (the degree in $s$ is odd). For the term $(5,4,1)$, we have:
	\begin{align*}
	 \frac{3}{2}\rho \dot{\gamma}_0\times V\int_{-\eps}^{\eps}\frac{s^4}{12\delta^5}|\ddot{\gamma}_0|^2
		  &=\rho\frac{|\ddot{\gamma}_0|^2}{4}\dot{\gamma}_0\times V\int_0^A\frac{s^4}{\langle s\rangle^5}\d s\\
		  &=\frac{|\ddot{\gamma}_0|^2}{4}\dot{\gamma}_0\times V(-\rho\log(\rho))+\mathcal{O}(\rho).
	\end{align*}
	For the term $(5,3,2)$, we have:
	\begin{align*}
	  \frac{3}{2}\rho \dot{\gamma}_0\times V\int_{-\eps}^{\eps}\frac{s^2\rho}{\delta^5}\cip{V}{\ddot{\gamma}_0}\d s
		&=3\cip{V}{\ddot{\gamma}_0}\dot{\gamma}_0\times V\int_0^A\frac{s^2}{\langle s\rangle^5}\d s\\
		&=\cip{V}{\ddot{\gamma}_0}\dot{\gamma}_0\times V+\mathcal{O}(\rho^2).
	\end{align*}
	At last we deal with $k=0$. We have:
	\begin{equation*}
	 \rho \dot{\gamma}_0\times V\int_{-\eps}^{\eps}\frac{\d s}{\delta^3}=\frac{2}{\rho}\dot{\gamma}_0\times V\int_0^A\frac{\d s}{\langle s\rangle^3}\d s
		  =\frac{2}{\rho}\dot{\gamma}_0\times V+\mathcal{O}(\rho).
	\end{equation*}
	Thus we have:
	\begin{equation*}
	 \rho \dot{\gamma}_0\times V\int_{-\eps}^{\eps}\frac{\d s}{|\bx-\gamma_s|^3}=\frac{2}{\rho}\dot{\gamma}_0\times V+\cip{V}{\ddot{\gamma}_0}\dot{\gamma}_0\times V
		  -\rho\log(\rho)\frac{|\ddot{\gamma}_0|^2}{4}\dot{\gamma}_0\times V+\mathcal{O}(\rho).
	\end{equation*}
	\subparagraph{\emph{Conclusion for the integral}}
	  Writing
	  \begin{multline*}
	   A_{\eps}(0)-(\log(2)-\tfrac{1}{2})\dot{\gamma}_0\times\ddot{\gamma}_0
	   :=\mathrm{Pf}\,\int_{-\eps}^{\eps}\dot{\gamma}_s\times \frac{\gamma_0-\gamma_s}{|\gamma_0-\gamma_s|^3}\d s
	   =\log(\eps)\dot{\gamma}_0\times\ddot{\gamma}_0\\
	   +\dot{\gamma}_0\times\ddot{\gamma}_0\int_{-\eps}^{\eps}\big(\tfrac{1}{|\gamma_0-\gamma_s|^3}-\tfrac{1}{|s|^3}\big)\tfrac{s^2\d s}{2}
	    +\tfrac{1}{3}\dot{\gamma}_0\times\gamma_0^{(3)}\int_{-\eps}^{\eps}\tfrac{s^3}{|s|^3}F_1(X(s,\bx))\d s\\
	    +\int_{-\eps}^{\eps}\frac{\d s}{|\gamma_0-\gamma_s|^3}
				\big(\tfrac{s^4}{12}\ddot{\gamma}_0\times\gamma_0^{(3)}+\ell_3(s)\times(\gamma_0-\gamma_s)-\dot{\gamma}_s\times g_4(s)\big),
	  \end{multline*}
	  we obtain the expansion:
	  \begin{multline}\label{eq:part_J_eps}
	    \int_{-\eps}^{\eps}\dot{\gamma}_s\times\frac{\bx-\gamma_s}{|\bx-\gamma_s|^3}=\frac{2}{\rho}\dot{\gamma}_0\times V-\log(\rho)\dot{\gamma}_0\times\ddot{\gamma}_0
			      +\cip{V}{\ddot{\gamma}_0}\dot{\gamma}_0\times V+A_{\eps}(0)\\
			      -\rho\log(\rho)\big[\gamma_0^{(3)}\times V+\tfrac{1}{4}|\ddot{\gamma}_0|^2\dot{\gamma}_0\times V
				-\tfrac{3}{2}\cip{V}{\ddot{\gamma}_0}\dot{\gamma}_0\times\ddot{\gamma}_0\big]+\mathcal{O}(\rho).
	  \end{multline}

	From \eqref{eq:part_remainder} and \eqref{eq:part_J_eps}, we obtain the second equality of \eqref{eq:formule_behavior}.

\end{appendix}

\begin{bibdiv}[Bibliography]{}
\begin{biblist}

	\bib{ASP1}{article}{
	author={Atiyah, M. F.},
	author={Patodi, V. K.},
	author={Singer, I. M.},
	title={Spectral asymmetry and Riemannian geometry. I},
	journal={Math. Proc. Cambridge Philos. Soc.},
	volume={77},
	date={1975},
	pages={43--69},
	}
	
	\bib{ASP2}{article}{
	author={Atiyah, M. F.},
	author={Patodi, V. K.},
	author={Singer, I. M.},
	title={Spectral asymmetry and Riemannian geometry. II},
	journal={Math. Proc. Cambridge Philos. Soc.},
	volume={78},
	date={1975},
	number={3},
	pages={405--432},
	}
	
	\bib{ASP3}{article}{
	author={Atiyah, M. F.},
	author={Patodi, V. K.},
	author={Singer, I. M.},
	title={Spectral asymmetry and Riemannian geometry. III},
	journal={Math. Proc. Cambridge Philos. Soc.},
	volume={79},
	date={1976},
	number={1},
	pages={71--99},
	}
	
	\bib{AS68}{article}{
	author={Atiyah, M. F.},
	author={Singer, I. M.},
	title={The index of elliptic operators. I},
	journal={Ann. of Math. (2)},
	volume={87},
	date={1968},
	pages={484--530},
	}

	\bib{BerryMondragon}{article}{
	author={Berry, M. V.},
	author={Mondragon, R. J. },
	title={Neutrino billiards: time-reversal symmetry-breaking without magnetic fields},
	journal={Proc. R. Soc. A},
	volume={412},
	date={1987},
	pages={53--74},
	}

  	\bib{BBLP05}{article}{
	author={Boo\ss-Bavnbek, Bernhelm},
	author={Lesch, Matthias},
	author={Phillips, John},
	title={Unbounded Fredholm operators and spectral flow},
	journal={Canad. J. Math.},
	volume={57},
	date={2005},
	number={2},
	pages={225--250},
	}

	\bib{Sf_integral}{article}{
	author={Carey, A.},
	author={Potapov, D.},
	author={Sukochev, F.},
	title={Spectral flow is the integral of one forms on the Banach manifold of self adjoint Fredholm operators},
	journal={Advances in Math.},
	volume={222},
	date={2008},
	pages={1809--1849},
	}

	\bib{Calu61}{article}{
	   author={C{\u{a}}lug{\u{a}}reanu, G.},
	   title={Sur les classes d'isotopie des n\oe uds tridimensionnels et leurs
	   invariants},
	   language={French, with Russian summary},
	   journal={Czechoslovak Math. J.},
	   volume={11 (86)},
	   date={1961},
	   pages={588--625},
	}

%

	\bib{DeTGlu}{article}{
	  author={DeTurck, Dennis},
	  author={Gluck, Herman},
	  title={Electrodynamics and the Gauss linking integral on the 3-sphere and
	  in hyperbolic 3-space},
	  journal={J. Math. Phys.},
	  volume={49},
	  date={2008},
	  number={2},
	  pages={023504, 35},
	}

	\bib{ErdSol01}{article}{
   	author={Erd{\H{o}}s, L{\'a}szl{\'o}},
   	author={Solovej, Jan Philip},
   	title={The kernel of Dirac operators on $\mathbb{S}^3$ and $\mathbb{R}^3$},
   	journal={Rev. Math. Phys.},
   	volume={13},
   	date={2001},
   	number={10},
   	pages={1247--1280},
	}

	\bib{FP30}{article}{
   	author={Frankl, F.},
   	author={Pontrjagin, L.},
   	title={Ein Knotensatz mit Anwendung auf die Dimensionstheorie},
   	language={German},
   	journal={Math. Ann.},
   	volume={102},
   	date={1930},
   	number={1},
   	pages={785--789},
	}

		\bib{Goette12}{article}{
	author={Goette, Sebastian},
	title={Computations and applications of $\eta$ invariants},
	conference={title={Global differential geometry},},
	book={
		series={Springer Proc. Math.},
		volume={17},
		publisher={Springer, Heidelberg},
	},
	date={2012},
	pages={401--433},
	}

	\bib{Grubb05}{article}{
	   author={Grubb, Gerd},
	   title={Analysis of invariants associated with spectral boundary problems
	   for elliptic operators},
	   conference={
	      title={Spectral geometry of manifolds with boundary and decomposition
	      of manifolds},
	   },
	   book={
	      series={Contemp. Math.},
	      volume={366},
	      publisher={Amer. Math. Soc., Providence, RI},
	   },
	   date={2005},
	   pages={43--64},
	}

	\bib{getzler}{article}{
	author={Getzler, Ezra},
	title={A short proof of the local Atiyah-Singer index theorem},
	journal={Topology},
	volume={25},
	date={1986},
	number={1},
	pages={111--117},
	}

	\bib{GetzlerOdd}{article}{
	author={Getzler, Ezra},
	title={The odd chern character in cyclic homology and spectral flow},
	journal={Topology},
	volume={32},
	date={1993},
	number={3},
	pages={489--507},
	}


	\bib{GorLes}{article}{
	author={Gorokhovsky, A.},
	author={Lesch, M.},
	title={On the Spectral Flow for Dirac Operators with Local Boundary Conditions},
	journal={International Mathematics Research Notices},
	volume={2015},
	date={2013},
	number={17},
	pages={8036--8051},
	}

	\bib{KatNaz}{article}{
	author={Katsnelson, M.},
	author={Nazaikinskii, V.},
	title={The Aharonov-Bohm effect for massless Dirac fermions and the spectral flow of Dirac type operators with classical boundary conditions},
	journal={Theor. and Math. Phys.},
	volume={172},
	date={2012},
	number={3},
	pages={1263--1277},
	}

%

	\bib{Lickorish}{book}{
	author={Lickorish, W. B. R.},
	title={An Introduction to Knot theory},
	publisher={Springer, New York},
	date={1997},
	}

	\bib{Melrose}{book}{
	author={Melrose, Richard B.},
	title={The Atiyah-Patodi-Singer index theorem},
	series={Research Notes in Mathematics},
	volume={4},
	publisher={A K Peters, Ltd., Wellesley, MA},
	date={1993},
	pages={xiv+377},
	}

	\bib{Nicolaescu07}{article}{
	author={Nicolaescu, Liviu I.},
	title={On the space of Fredholm operators},
	journal={An. \c Stiin\c t. Univ. Al. I. Cuza Ia\c si. Mat. (N.S.)},
	volume={53},
	date={2007},
	number={2},
	pages={209--227},
	}

	\bib{On_Gauss_formula}{article}{
	author={Nipoti, B.},
	author={Ricca, R. },
	title={Gauss' linking number revisited},
	journal={J. of Knot Th. and its Ram.},
	volume={10},
	date={2011},
	}

	\bib{Persson_dirac_2d}{article}{
  	author = {Persson, Mikael},
	title = {On the Dirac and Pauli Operators with Aharonov-Bohm Solenoids},
  	journal = {Lett. in Math. Phys.},
  	volume = {78},
	date={2006},
  	pages = {139--156},
	}

	\bib{Philips_spectral_flow}{article}{
  	author = {Phillips, John},
	title = {Self-adjoint Fredholm operators and spectral flow},
  	journal = {Canad. Math. Bull.},
  	volume = {39},
	date={1996},
	number={4},
  	pages = {460--467},
	}

	\bib{Prokh}{article}{
  	author = {Prokhorova, Marina},
	title = {The Spectral Flow for Dirac Operators on Compact Planar Domains with Local Boundary Conditions},
  	journal = {Comm. Math. Phys.},
  	volume = {322},
	date={2013},
	number={2},
  	pages = {385--414},
	}

	\bib{dirac_s3_paper1}{article}{
	author={Portmann, Fabian},
	author={Sok, J\'er\'emy},
	author={Solovej, Jan Philip},
	title={Self-adjointness \& spectral properties of Dirac operators with magnetic links}, 
	journal={J. Math. Pures Appl.},
	volume={119},
	date={2018},
	pages={114--157},
	}
	
	\bib{dirac_s3_paper2}{article}{
	author={Portmann, Fabian},
	author={Sok, J\'er\'emy},
	author={Solovej, Jan Philip},
	title={Spectral flow for Dirac operators with magnetic links},
	note={ArXiv:1701.05044, To appear in J. Geom. An.},
	date={2017}
	}

	\bib{dirac_s3_paper3}{article}{
	author={Portmann, Fabian},
	author={Sok, J\'er\'emy},
	author={Solovej, Jan Philip},
	title={Analysis of zero modes for Dirac operators with magnetic links},
	journal={J. Fun. An.},
	volume={275}, 
	number={3},
	date={2018},
	pages={604--659}
	}

	\bib{Rolfsen}{book}{
	author={Rolfsen, Dale},
	title={Knots and links},
	series={Mathematics Lecture Series},
	volume={7},
	note={Corrected reprint of the 1976 original},
	publisher={Publish or Perish, Inc., Houston, TX},
	date={1990},
	}

	\bib{MR916076}{article}{
	author={Scharlemann, Martin},
   	author={Thompson, Abigail},
   	title={Finding disjoint Seifert surfaces},
   	journal={Bull. London Math. Soc.},
   	volume={20},
   	date={1988},
   	number={1},
   	pages={61--64},
	}
	
	\bib{Seifert35}{article}{
   	author={Seifert, H.},
   	title={\"Uber das Geschlecht von Knoten},
   	language={German},
   	journal={Math. Ann.},
   	volume={110},
   	date={1935},
   	number={1},
   	pages={571--592},
	}

	\bib{Spivakvol4}{book}{
	author={Spivak, Michael},
	title={A comprehensive introduction to differential geometry. Vol. IV},
	edition={3},
	publisher={Publish or Perish, Inc., Huston, Texas},
	date={1999},
	}

	

	\bib{Wahl08}{incollection}{
   	author={Wahl, Charlotte},
	title={A New Topology on the Space of Unbounded Selfadjoint Operators, K-theory and Spectral Flow},
	booktitle={C\textsuperscript{*}-Algebras and Elliptic Theory II},
	editor={Burghelea, Melrose, Mishchenko, Troitsky},
   	year={2008},
   	pages={297--309},
	note={Birkh\"auser}
	}
	
	\bib{White69}{article}{
	   author={White, James H.},
	   title={Self-linking and the Gauss integral in higher dimensions},
	   journal={Amer. J. Math.},
	   volume={91},
	   date={1969},
	   pages={693--728},
	}
	
\end{biblist}
\end{bibdiv}

\end{document}